\documentclass{article}
\usepackage[latin9]{inputenc}

\usepackage{upgreek}
\usepackage{amsmath}
\usepackage{amssymb}

\makeatletter
\usepackage{amstext}
\usepackage{amsfonts}
\usepackage{amsthm}
\usepackage{bbold}
\usepackage{mathtools}
\usepackage{scrextend}
\usepackage{array}
\usepackage{multicol}
\usepackage{mathrsfs}
\usepackage{color}
\usepackage{soul}

\newcommand{\suppress}[1]{}

\setlength{\topmargin}{0in}
\setlength{\headheight}{0in}
\setlength{\headsep}{0in}
\setlength{\textheight}{8.9in}
\setlength{\oddsidemargin}{0in}
\setlength{\textwidth}{6.5in}

\setlength{\parindent}{0pt}
\setlength{\parskip}{1em plus 0.5 ex minus 0.2ex}

\newtheorem{theorem}{Theorem}[section]    
\newtheorem{definition}{Definition}[section] 
\newtheorem{corollary}[theorem]{Corollary}    
\newtheorem{proposition}[theorem]{Proposition}    
\newtheorem{lemma}[theorem]{Lemma}    
\renewcommand{\qed}{\hfill{$\rule{6pt}{6pt}$}} 
\renewenvironment{proof}{\noindent{\bf Proof:}}{\qed}

\newenvironment{proofof}[1]{\noindent{\bf Proof of #1:}}{\qed}
\newenvironment{remark}{\noindent{\bf Remark:}}

\numberwithin{equation}{section}

\newcommand{\complex}{{\mathbb C}}
\newcommand{\reals}{{\mathbb R}}

\newcommand{\ket}[1]{| #1 \rangle}
\newcommand{\bra}[1]{ \langle #1 |}
\newcommand{\ketbra}[2]{| #1 \rangle\!\langle #2 |}
\newcommand{\braket}[2]{\langle #1 | #2 \rangle }
\newcommand{\innerproduct}[2]{\langle #1 , #2 \rangle }

\newcommand{\size}[1]{\left| #1 \right|}
\newcommand{\set}[1]{\left\{ #1 \right\}}

\newcommand{\ceil}[1]{{\left\lceil #1 \right\rceil}}
\newcommand{\trace}{{\mathrm{Tr}}}
\newcommand{\support}{{\mathrm{supp}}}

\newcommand{\density}[1]{\ketbra{#1}{#1}}

\newcommand{\transpose}{{\mathsf T}}
\newcommand{\Span}{{\mathrm{span}}}

\newcommand{\abs}[1]{\left| #1 \right|}

\newcommand{\id}{{\mathbb 1}}

\newcommand{\tensor}{\otimes}

\newcommand{\union}{\cup}
\newcommand{\intersect}{\cap}

\newcommand{\comment}[1]{}

\newcommand{\etal}{\emph{et al.\/}}


\newcommand{\cT}{{\mathcal T}}

\newcommand{\rI}{{\mathrm I}}
\newcommand{\rS}{{\mathrm S}}
\newcommand{\rH}{{\mathrm H}}
\newcommand{\rL}{{\mathrm L}}
\newcommand{\rF}{{\mathrm F}}

\newcommand{\rD}{{\mathrm D}_{\mathrm{obs}}}
\newcommand{\Imax}{{\mathrm I}_{\max}}
\newcommand{\Dmax}{{\mathrm D}_{\max}}
\newcommand{\Dh}{{\mathrm D}_{\mathrm{h}}}
\newcommand{\rP}{{\mathrm P}}
\newcommand{\rd}{{\mathrm d}}

\newcommand{\sD}{{\mathsf D}}
\newcommand{\sPos}{{\mathsf{Pos}}}
\newcommand{\sQ}{{\mathsf Q}}
\newcommand{\sq}{{\mathsf q}}
\newcommand{\sT}{{\mathsf T}}
\newcommand{\sB}{{\mathsf B}}
\newcommand{\sL}{{\mathsf L}}
\newcommand{\sU}{{\mathsf U}}

\newcommand{\cK}{{\mathcal K}}
\newcommand{\cH}{{\mathcal H}}
\newcommand{\cM}{{\mathcal M}}

\newcommand{\scE}{{\mathscr E}}

\newcommand{\RSP}{{\mathrm{RSP}}}

\makeatother

\begin{document}

\title{ \textbf{ Communication Complexity of One-Shot Remote State
Preparation \footnote{Much of the work in this article was reported in
S.B.'s Master's thesis \cite{bab_hadiashar_communication_2014}. }} \\
}

\date{}

\author{
Shima Bab Hadiashar~\thanks{Department of Combinatorics and Optimization, and Institute for Quantum Computing, University
of Waterloo, 200 University Ave.\ W., Waterloo, ON,
N2L~3G1, Canada. Email: \texttt{sbabhadi@uwaterloo.ca}~.
Research supported in part by NSERC Canada.} \\
U.\ Waterloo
\and
Ashwin Nayak~\thanks{Department of Combinatorics and Optimization,
and Institute for Quantum Computing,
University of Waterloo, 200 University Ave.\ W., Waterloo, ON,
N2L~3G1, Canada. Email: \texttt{ashwin.nayak@uwaterloo.ca}~.
Research supported in part by NSERC Canada.} \\
U.\ Waterloo
\and
Renato Renner~\thanks{Institute for Theoretical Physics, ETH Zurich,
Wolfgang-Pauli-Str.\ 27, 8093 Zurich, Switzerland.
Email: \texttt{renner@itp.phys.ethz.ch}~. } \\
ETH Zurich
}

\maketitle

\begin{abstract}
Quantum teleportation uses prior shared entanglement and classical
communication to send an unknown quantum state from one party to
another. Remote state preparation (RSP) is a similar distributed task
in which the sender knows the entire classical description of the state
to be sent. (This may also be viewed as the task of
\emph{non-oblivious\/} compression
of a single sample from an ensemble of quantum states.) We study the communication complexity of
\emph{approximate\/} remote state preparation, in which the goal is to
prepare an approximation of the desired quantum state.

Jain [Quant. Inf. \& Comp., 2006] showed that
the worst-case communication complexity of approximate RSP can be
bounded from above in terms of the \emph{maximum possible information\/} in an
encoding.  He also showed that this
quantity is a lower bound for communication complexity of
(\emph{exact\/}) remote state preparation. 
In this work, we tightly characterize the worst-case and average-case
communication complexity of remote state preparation in terms of
non-asymptotic information-theoretic quantities.

We also  show that the
average-case communication complexity of RSP can be much smaller
than the worst-case one. In the process, we show that~$n$ bits 
cannot be communicated with less than~$n$ transmitted bits in
LOCC protocols. This strengthens a result due to Nayak and
Salzman [J. ACM, 2006] and may be of independent interest.
\end{abstract}

\section{Introduction}

Quantum teleportation~\cite{bennett_teleporting_1993} is an archetypical
protocol in information processing that is
impossible in the absence of quantum resources like shared entanglement.
Through quantum teleportation, one party is able to communicate an
\emph{arbitrary\/} qubit state to another party using only two classical bits of
communication and a previously shared maximally entangled pair of qubits.
The two classical bits of communication and
a maximally entangled pair of qubits are both necessary and sufficient
for the task. This is a remarkable phenomenon, as the entire classical
description of the state being communicated is potentially infinite in
length.

In Ref.~\cite{lo_classical-communication_2000}, Lo introduced a similar
distributed task in which the sender (called Alice in the literature)
knows a classical description of the quantum state. This task is called
\emph{remote state preparation} (RSP). In particular, remote state
preparation is a task involving two
parties, Alice and Bob, who share qubits in an entangled state. Alice is given
the description of a state,~$Q(x)$, chosen from a subset of quantum states~$\{Q(1),\ldots,Q(n)\}$, and
their goal is to prepare that quantum state on Bob's side using only
local operations and classical communication (LOCC).
This may also be viewed as the task of compression (which is non-oblivious 
at the sender's end),
of a single sample from an ensemble of quantum states with
entanglement-assisted classical communication.

We say an RSP protocol is \textit{oblivious to Bob} if he can get no more information
about the prepared state than what is contained in a single copy of the
state~\cite{leung_oblivious_2003}. A relaxed version of RSP is
\textit{approximate remote state preparation} (ARSP) in which we wish to
prepare an approximation~$\sigma_x$ of the specified quantum state~$Q(x)$. We define the error of a protocol for approximate remote state preparation in terms of the fidelity between~$Q(x)$ and~$\sigma_x$. We say a protocol has \textit{worst-case error} at most~$\epsilon$, if for every~$x\in\{1,\ldots,n\}$,~$\rF(Q(x),\sigma_x)\geq \sqrt{1-\epsilon^2}$. Similarly, a protocol has \textit{average-case error} at most~$\epsilon$ with respect to a probability distribution~$p$, if~$\sum_{x=1}^n p_x \rF(Q(x),\sigma_x)\geq \sqrt{1-\epsilon^2}$.

Lo~\cite{lo_classical-communication_2000} gave several examples
of ensembles which can be remotely prepared using a one-way communication protocol
with classical communication cost less than that in quantum teleportation.
However, he conjectured that to prepare arbitrary pure $n$-qubit states remotely,
Alice has to necessarily send the same number of classical bits as in
quantum teleportation i.e., $2n$ classical bits. The task has been
studied extensively since then, largely in the asymptotic setting.

Bennett~$\etal$~\cite{bennett_remote_2001} showed that in the presence
of a large amount of shared entanglement, Alice can prepare general
quantum states on Bob's side with the asymptotic classical communication
rate of one bit per qubit. This amount of classical communication from
Alice to Bob is also necessary by
causality~\cite{lo_classical-communication_2000}. They also showed that
unlike for quantum teleportation,
there is a trade-off between the communication cost and the amount
of entanglement in remote state preparation. In particular, they proved
that at the cost of using more entanglement, the communication cost of
preparing a one-qubit state  ranges from one bit in the high
entanglement limit to an infinite number of bits in the case of no
previously shared entanglement. In addition,
they suggested that the Lo conjecture  is true in a more restricted setting, such as when the protocol is \textit{faithful}
and oblivious to Bob~\cite{bennett_remote_2001}.
(A protocol is said to be \textit{faithful} if it is exact and
deterministic.)

Devetak and Berger~\cite{devetak_low-entanglement_2001} found an
analytic expression for the trade-off curve between the shared
entanglement and classical communication of \textit{teleportation based
RSP protocols} in the \textit{low-entanglement} region (less than~$1$
singlet state per qubit). They conjectured that teleportation based protocols are optimal among all low-entanglement protocols. 
Later, Leung and Shor~\cite{leung_oblivious_2003} proved the Lo
conjecture for a special case. They proved that if a one-way RSP
protocol for a $\emph{generic ensemble}$ of pure states is faithful and
oblivious to Bob, then it necessarily uses at least as much classical
communication as in teleportation. (A $\emph{generic ensemble}$ is an
ensemble of states whose density matrices span the operators in the
input Hilbert space.)
Hayashi, Hashimoto and Horibe~\cite{hayashi_remote_2003} showed that in
order to remotely prepare one qubit in an arbitrary state using a one-way
faithful, but not necessarily oblivious protocol, Alice requires 
two classical bits of communication as in teleportation. 
 
Berry and Sanders~\cite{berry_optimal_2003} studied ARSP, the
approximation variant of RSP, of an
ensemble~$\scE$ of mixed states (which might be entangled with some
other system on Alice's part) such that their entanglement with other
systems does not change significantly. They showed that approximate remote state preparation with arbitrary small average-case error~$\epsilon$ can be done asymptotically using communication per prepared state arbitrarily close to the Holevo information~$\chi(\scE)$ of the ensemble.  
(See Section~\ref{sec:Asymptotic Information Theory} for a definition of
Holevo information.)
Later Bennett, Hayden, Leung, Shor, and
Winter~\cite{bennett_remote_2005} proved that approximate remote state
preparation with small worst-case error~$\epsilon$ requires an asymptotic rate of one bit of classical communication per qubit from Alice to Bob. They also showed that this amount of classical communication is sufficient. Moreover, they derived the exact trade-off curve between shared entangled bits and classical communication bits for an arbitrary ensemble of candidate states.

Jain~\cite{jain_communication_2006} studied remote state preparation in
the one-shot scenario. He considered the total communication cost when
given access to an arbitrary amount of entanglement.
He showed that the communication cost required for exact remote state
preparation is at least~$\sT(Q)/2$ and  ARSP with worst-case error at
most~$\epsilon$ can be accomplished with communication at most~$\frac{8}{\left(1-\sqrt{1-\epsilon^{2}}\right)^2}(4\sT(Q)+7)$, where~$\sT(Q)$
denotes the \emph{maximum possible information\/} in an encoding~$Q$. 
(A  precise definition can be found in Section~\ref{sec:Asymptotic
Information Theory}.)

These abovementioned results on remote state preparation are summarized in Table~\ref{table1} .
\begin{table}

\begin{tabular}{|>{\centering}m{3.2cm}|>{\centering}m{3.8cm}|>{\centering}p{3cm}|>{\centering}p{4cm}|}
\hline 
Protocol Type & Conditions & Entanglement & Classical Communication\tabularnewline
\hline 
\hline 
Faithful RSP~\cite{bennett_remote_2001} & an arbitrary state,

one-way communication,

in asymptotics & high entanglement & = 1 classical bit per qubit\tabularnewline
\hline 
Faithful RSP~\cite{hayashi_remote_2003} & one pure qubit in a general state,

one-way communication & = 1 ebit(singlet) per qubit & = 2 classical bit \tabularnewline
\hline 
Faithful and oblivious RSP~\cite{leung_oblivious_2003} & a generic ensemble of pure states, 

one-way communication & = 1 ebit(singlet) per qubit & = 2 classical bit per qubit\tabularnewline
\hline 
ARSP with small average-case error~\cite{berry_optimal_2003} & an ensemble $\scE$ of mixed states preserving their entanglement,

one-way communication,

in asymptotics & no limit & $\approx\chi(\scE)$ classical bits per prepared state$ $\tabularnewline
\hline 
ARSP with small worst-case error~\cite{bennett_remote_2005} & an arbitrary pure state, two-way communication,

in asymptotics & = 1 ebit(singlet) per qubit & = 1 classical bit per qubit from Alice to Bob\tabularnewline
\hline 
Exact RSP~\cite{jain_communication_2006} & an arbitrary state, two-way communication,

in one-shot scenario & no limit & $\geq\sT(Q)/2$\tabularnewline
\hline 
ARSP with worst-case error $\epsilon$~\cite{jain_communication_2006} & an arbitrary state, one-way communication,

in one-shot scenario & no limit & $\leq\frac{8}{\left(1-\sqrt{1-\epsilon^{2}}\right)^{2}}(4\sT(Q)+7)$\tabularnewline
\hline 
\end{tabular}
\caption{A summary of previous works on communication cost of Remote State Preparation}\label{table1}
\end{table}

\subsection{Our results}

Intuitively, relaxing the remote state preparation problem so that Bob
produces some approximation to the ideal state should lower
the communication complexity of the task. This suggests that
the bounds provided by Jain~\cite{jain_communication_2006} are not
tight.

In this work, we characterize the communication complexity of remote
state preparation in two different cases. First, we consider ARSP with
average-case error at most~$\epsilon$, and bound its communication
complexity by the \emph{smooth max-information\/}
Bob has about Alice's input.
(See Section~\ref{sec:Asymptotic Information
Theory} for a precise definition of this quantity.)
Then we consider ARSP with worst-case error at most~$\epsilon$,
and give lower and upper bounds for its communication complexity in
terms of \emph{smooth max-relative entropy\/} and show that these bounds may be 
arbitrarily tighter than that in Ref.~\cite{jain_communication_2006}.

Our main results about the remote state preparation problem are summarized
below, using notions introduced in Section~\ref{sec-preliminaries}.
Recall that a protocol has worst-case error at most~$\epsilon$, if for every~$x\in\{1,\ldots,n\}$, $\rF(Q(x),\sigma_x)\geq \sqrt{1-\epsilon^2}$, and a protocol has average-case error at most~$\epsilon$ with respect to a probability distribution~$p$, if~$\sum_{x=1}^n p_x \rF(Q(x),\sigma_x)\geq \sqrt{1-\epsilon^2}$. We denote the average-case communication complexity of ARSP by~$\sQ^*_p(\RSP(S,Q),\epsilon)$, and the worst-case communication complexity of ARSP by~$\sQ^*(\RSP(S,Q),\epsilon)$.

\begin{theorem}\label{thm-main theorem} 
For any finite set~$S$, and
set of quantum states~$\{Q(x) : x \in S \}$, let~$p$
be a probability distribution over~$S$ and~$\rho_{AB}(p) \in \sD(\cH' \tensor \cH)$ be the bipartite quantum state~$\rho_{AB}(p)=\sum_{x\in S} p_{x} \density{x}_{A} \tensor Q(x)_{B}$.
Then
\begin{enumerate}
\item For any fixed~$\epsilon \in (0,1]$, we have

\[
 \Imax^{\epsilon}(A:B)_{\rho(p)}\quad\leq\quad\sQ_{p}^{*}(\RSP(S,Q),\epsilon)\quad\leq\quad \Imax^{\frac{\epsilon}{2\sqrt{2}}}(A:B)_{\rho(p)}+f(\epsilon)\enspace,
\]
where~$f(\epsilon)\in \Theta(\log \log \frac{1}{\epsilon})$ is a
function of~$\epsilon$, and~$\Imax^\epsilon(A:B)$ denotes the smooth max-information part B has about part A.

\item For any fixed~$\epsilon \in (0,1]$ and for any $0<\delta< 1 -
\epsilon^2$, we have
\begin{align*}
\min_{\sigma\in\sD(\cH)}
\max_{x\in S}\ 
\Dmax^{\sqrt{2(\epsilon^{2}+\delta)}}(Q(x)\|\sigma)+g_{1}(\epsilon,\delta)\quad\leq & \quad\sQ^{*}(\RSP(S,Q),\epsilon)\\
\quad\leq & \quad\min_{\sigma\in\sD(\cH)}\ 
\max_{x\in S}\ 
\Dmax^{\frac{\epsilon}{\sqrt{1+\epsilon^{2}}}}(Q(x)\|\sigma)+g_{2}(\epsilon)\enspace,
\end{align*}
where~$g_1, g_2$ are functions such that~$g_{1}(\epsilon,\delta)\in\Theta\left( \log\frac{\delta^3}{\epsilon^2+\delta} \right) $, $g_{2}(\epsilon)\in \Theta(\log \log \frac{1}{\epsilon})$, and~$\Dmax^\epsilon(Q(x)\|\sigma)$ denotes the smooth max-relative entropy of~$Q(x)$ with respect to~$\sigma$.
 
\end{enumerate}
\end{theorem}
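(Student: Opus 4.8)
The plan is to establish the four inequalities separately: a lower and an upper bound in each of the average-case and worst-case regimes. The starting observation is that for the classical--quantum state $\rho_{AB}(p)=\sum_{x\in S} p_x \density{x}_A\tensor Q(x)_B$ one has, \emph{without} smoothing, the identity $\Imax(A:B)_{\rho(p)}=\min_{\sigma\in\sD(\cH)}\max_{x\in\support(p)}\Dmax(Q(x)\|\sigma)$, since $\rho_{AB}(p)\le \lambda\,\rho_A\tensor\sigma$ holds exactly when $Q(x)\le\lambda\sigma$ for every $x$ with $p_x>0$. Thus the two parts of the theorem differ only in \emph{how} the error is absorbed by smoothing: joint smoothing of $\rho(p)$ (which lets the fidelity budget be spread across the $x$'s according to $p$) for the average case, versus per-state smoothing of each $Q(x)$ for the worst case. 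I would exploit this identity both to unify the protocols and to derive the worst-case lower bound from the average-case one via a minimax argument over $p$.

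For the upper bounds I would give a single entanglement-assisted protocol based on quantum rejection sampling and apply it after an appropriate smoothing. Fix a reference state $\sigma$ and let Alice and Bob pre-share enough copies of a purification of $\sigma$ that Bob holds $\sigma^{\tensor K}$. Knowing $x$, Alice measures her shares so as to ``steer'' one of Bob's registers towards the target; using that a state $\tilde Q(x)$ within purified distance $\eta$ of $Q(x)$ satisfies $\tilde Q(x)\le 2^{\Dmax^{\eta}(Q(x)\|\sigma)}\sigma$, each attempt succeeds with probability $\approx 2^{-\Dmax^{\eta}(Q(x)\|\sigma)}$, so $\log K\approx\max_x\Dmax^{\eta}(Q(x)\|\sigma)$ copies suffice for at least one success with high probability. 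Alice sends the index of an accepted register, costing $\log K$ classical bits, and Bob keeps it. For the \emph{worst-case} bound I run this with per-state smoothing and optimize over $\sigma$, obtaining $\min_\sigma\max_x\Dmax^{\epsilon/\sqrt{1+\epsilon^2}}(Q(x)\|\sigma)$; for the \emph{average-case} bound I first smooth the \emph{joint} state $\rho(p)$ and run the same protocol on the resulting targets with the optimal $\sigma_B$, so that by the unsmoothed identity above the cost is governed by $\Imax^{\epsilon/(2\sqrt{2})}(A:B)_{\rho(p)}$. In both cases the additive $\Theta(\log\log\frac{1}{\epsilon})$ term ($f$, resp.\ $g_2$) accounts for boosting the rejection-sampling success probability, and the specific smoothing radii come from converting between fidelity and purified distance and from the protocol's own contribution to the error.

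For the lower bounds the key monotonicity is that a classical message cannot create too much correlation. In a one-way protocol Alice sends $m$ with probability $P(m\mid x)$ and Bob applies a channel $\Lambda_m$ to his share $\tau_B$ of the (input-independent) entanglement, producing $\sigma_x=\sum_m P(m\mid x)\Lambda_m(\tau_B)$. Taking $\sigma=2^{-c}\sum_m\Lambda_m(\tau_B)$ gives $\sigma_x\le 2^{c}\sigma$, i.e.\ $\Dmax(\sigma_x\|\sigma)\le c$, where $c$ is the message length; since worst-case error forces $\rF(Q(x),\sigma_x)\ge\sqrt{1-\epsilon^2}$, i.e.\ $\sigma_x$ lies within purified distance $\epsilon$ of $Q(x)$, this yields $\min_\sigma\max_x\Dmax^{\epsilon}(Q(x)\|\sigma)\le c$. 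The same bookkeeping at the level of the joint state gives $\Imax^{\epsilon}(A:B)_{\rho(p)}\le c$, which is the average-case lower bound. To reach the stated worst-case bound I would use that any protocol with worst-case error $\le\epsilon$ also has average-case error $\le\epsilon$ for \emph{every} $p$, so $\sQ^{*}\ge\sup_p\sQ^{*}_p\ge\sup_p\Imax^{\epsilon}(A:B)_{\rho(p)}$, and then convert $\sup_p\Imax^{\epsilon}$ into $\min_\sigma\max_x\Dmax^{\sqrt{2(\epsilon^2+\delta)}}(Q(x)\|\sigma)$ via a minimax theorem; the free parameter $\delta$ and the additive $g_1\in\Theta\!\left(\log\frac{\delta^3}{\epsilon^2+\delta}\right)$ are precisely the smoothing/substate trade-off incurred in replacing joint smoothing by per-state smoothing.

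The main obstacle I anticipate is twofold. First, the lower bounds must hold for \emph{interactive} (two-way) protocols, whereas the clean $\sigma_x\le 2^{c}\sigma$ argument above is written for one-way communication; extending it requires tracking how $\Imax(A:B)$ (resp.\ $\Dmax$) grows across a transcript and showing that only the Alice-to-Bob communication can increase it, bounded by the total communication $\sQ^{*}$. Second, and most delicate, is the minimax/substate conversion from the average-case quantity $\sup_p\Imax^{\epsilon}(A:B)_{\rho(p)}$ to the worst-case quantity $\min_\sigma\max_x\Dmax^{\sqrt{2(\epsilon^2+\delta)}}(Q(x)\|\sigma)$: getting the smoothing radius and the additive term right, with a genuinely worst-case (rather than fixed-$p$) guarantee, is where the careful analysis and the precise forms of $g_1$ and $\delta$ will be needed.
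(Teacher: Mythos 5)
Your overall architecture coincides with the paper's: a rejection-sampling (JRS-style) protocol run on smoothed targets for both upper bounds, a round-by-round accounting showing that only Alice-to-Bob messages can increase $\Imax(A:B)$ for the average-case lower bound, and a passage from $\sup_p \Imax^{\epsilon}(A:B)_{\rho(p)}$ to the $\min_\sigma\max_x$ expression via a minimax argument for the worst-case lower bound. The upper-bound and average-case lower-bound portions are essentially the paper's proofs of Theorems~\ref{thm-Ub}, \ref{thm-worst-upperbound} and~\ref{thm-lb} (your per-round claim is exactly Lemma~\ref{thm-upper-dp} for odd rounds and Proposition~\ref{thm-monotonicity-mi} for even rounds), and your unsmoothed identity $\Imax(A:B)_{\rho(p)}=\min_\sigma\max_{x\in\support(p)}\Dmax(Q(x)\|\sigma)$ is correct. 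One small caveat: in your one-way sketch, $\sigma_x$ is not $\sum_m P(m\mid x)\Lambda_m(\tau_B)$, since Alice's measurement steers Bob's half of the entangled state; the operator inequality $\sigma_x\le 2^{c}\sigma$ still holds via $\sum_m P(m\mid x)\,\tau_B^{m,x}=\tau_B$, but the paper avoids this issue entirely by working with $\Imax$ throughout.

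The genuine gap is in the step you yourself flag as most delicate. You propose to ``convert $\sup_p\Imax^{\epsilon}$ into $\min_\sigma\max_x\Dmax^{\sqrt{2(\epsilon^2+\delta)}}(Q(x)\|\sigma)$ via a minimax theorem,'' but the minimax theorem cannot be applied to $\Dmax^{\epsilon}$ directly: smooth max-relative entropy is neither convex nor concave in the relevant arguments, so the hypotheses of Theorem~\ref{thm-minimax theorem} fail. The missing idea (Theorem~\ref{thm-worst-case lb}) is to pass through the hypothesis-testing quantity $\upbeta^{\lambda}(\rho_{AB}(p)\,\|\,\rho_A(p)\otimes\sigma)$, which \emph{is} convex in $p$ and concave in $\sigma$ (Propositions~\ref{thm-covexity of Be in p} and~\ref{thm-concavity of Be in sigma}), apply the minimax theorem there, and then convert back. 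Your diagnosis of where the loss comes from is also off: the degradation of the smoothing radius from $\epsilon$ to $\sqrt{2(\epsilon^2+\delta)}$, the free parameter $\delta$, and the additive term $g_1(\epsilon,\delta)=-\log\frac{(1-\epsilon^2)(\epsilon^2+\delta)}{\delta^3}-3\log 3$ all arise from the two-sided conversion between $\Dmax^{\epsilon}$ and $\Dh^{\lambda}$ (Theorem~\ref{thm-Dmax-Dh}), not from ``replacing joint smoothing by per-state smoothing'' --- the final restriction of $\max_p$ to point masses is essentially lossless. Without identifying the intermediate convex/concave quantity, the minimax step as written would not go through, and the specific forms of $\gamma$ and $g_1$ cannot be recovered.
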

It is relatively straightforward to show that the one-shot information
expressions appearing in the above theorem are continuous in~$\epsilon$.
This indicates the tightness of the characterization.
In fact, a bound on the difference between lower and upper bounds in the above theorem, in
terms of the ensemble, may be inferred from the continuity property.

We remark that the quantity appearing in the second part of the theorem
is similar to the notion of \emph{information radius\/}. It may be possible
to relate the quantity to smooth max-information with respect to 
a distribution over~$S$ 
using ideas from Ref.~\cite[Lemma 3]{DW17} (which extends
Ref.~\cite[Lemma~14]{WWY14}), and the connection between max-relative
entropy and the \emph{sandwiched R{\'e}nyi relative entropy\/}.
Finally, earlier works have considered remote state preparation 
of states drawn from infinite sets of states.
We discuss how the bounds in Theorem~\ref{thm-main theorem}
may be applied to that case in Appendix~\ref{sec-RSP-infinite-set}.

The communication cost of ARSP may decrease dramatically when more error is allowed,and if we consider average-case
error instead of worst-case error. In particular, we show that for
every~$\epsilon\in[0,\frac{1}{\sqrt{2}})$, there exists a set of~$n$
quantum states for which there is a~$\log n$ gap between the worst-case error and average-case error remote preparation of that set. In
addition, for a special set of quantum states, we derive a gap between
the worst-case error and average-case communication complexity in
terms of~$\epsilon$. This confirms our intuition that the more skewed the probability distribution is, the bigger the gap between worst-case 
and average-case error variants may be. 

In the process of establishing the first gap described above,
we strengthen a result due to Nayak and Salzman~\cite{nayak_limits_2006};
we prove a bound on the communication required by any LOCC protocol 
for transmitting a uniformly random~$n$ bit string with some probability~$p$. 
This bound is optimal, and may be of independent interest.
\begin{theorem}
\label{thm-convey info by LOCC}
Let~$Y$ be the output of Bob in any two-way LOCC protocol in which Alice
receives a uniformly distributed~$n$-bit input~$X$ (that is not known to
Bob, and is independent of their joint quantum state). Let~$m_A$ be the 
total number of bits Alice sends to Bob and~$p \coloneqq \Pr[Y=X]$ be 
the probability that Bob obtains the output~$X$. Then
\[
m_{A}\quad\geq\quad n+\log p \enspace.
\]
\end{theorem}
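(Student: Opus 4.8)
The plan is to reduce the theorem to a statement about Bob's optimal \emph{guessing probability} for $X$, and then to track how this quantity can grow as the protocol proceeds. For a classical--quantum state describing a random variable $X$ together with a system $E$ held by Bob, with $\rho_E^x$ the (normalised) state of $E$ conditioned on $X=x$, write
\[
p_{\mathrm{guess}}(X\mid E)\ \eqdef\ \max_{\{M_x\}}\ \sum_x \Pr[X=x]\,\trace\!\big(M_x\,\rho_E^x\big),
\]
the maximum over all POVMs $\{M_x\}$ on $E$. Bob's announced output $Y$ is produced by some fixed measurement of his final system $E_{\mathrm{fin}}$ (his quantum register together with all classical messages he holds), so $p=\Pr[Y=X]\le p_{\mathrm{guess}}(X\mid E_{\mathrm{fin}})$. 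At the start Bob holds only his half of the shared state, which is independent of $X$; since $X$ is uniform over $2^n$ values this gives $p_{\mathrm{guess}}(X\mid E_{\mathrm{init}})=2^{-n}$. It therefore suffices to show that, over the whole protocol, $p_{\mathrm{guess}}(X\mid E)$ grows by a factor of at most $2^{m_A}$; combined with the two facts above this yields $p\le 2^{m_A-n}$, i.e.\ $m_A\ge n+\log p$.

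To control the evolution of $p_{\mathrm{guess}}(X\mid E)$ I would classify the elementary steps of the protocol (assuming, after padding, a fixed communication pattern). There are three kinds. First, any local operation by Bob is a quantum channel applied to $E$ that does not depend on $X$; by data processing (every measurement after a channel is a measurement before it) such a step cannot increase $p_{\mathrm{guess}}(X\mid E)$, and the same holds when Bob sends a message to Alice, which merely discards part of $E$. Second, any local operation by Alice acts only on her registers; by no-signalling it leaves Bob's conditional states $\rho_E^x$, hence the whole cq-state $\rho_{XE}$, unchanged, so $p_{\mathrm{guess}}(X\mid E)$ is unaffected --- this is where the shared entanglement is dealt with, since Alice's measurements disturb Bob's marginal only once their outcomes are communicated. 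Third, when Alice sends Bob a classical message occupying a register $C$ with $\abs{C}=2^{\ell}$ values, Bob's system grows from $E$ to $EC$.

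The crux is the chain rule for this last step,
\[
p_{\mathrm{guess}}(X\mid EC)\ \le\ \abs{C}\cdot p_{\mathrm{guess}}(X\mid E),
\]
which I would prove directly. Writing the state as $\rho_{XEC}=\sum_{x,c}\Pr[X=x]\,\density{x}\otimes\tilde\rho_E^{\,x,c}\otimes\density{c}$ with $\tilde\rho_E^{\,x,c}$ subnormalised and $\sum_c\tilde\rho_E^{\,x,c}=\rho_E^x$, an optimal strategy first reads the classical register $C$ and then measures $E$ with a POVM $\{M_x^{(c)}\}_x$ depending on the outcome $c$. For each fixed $c$ one has $\tilde\rho_E^{\,x,c}\preceq\rho_E^x$ in the positive-semidefinite order, so $\sum_x\Pr[X=x]\,\trace(M_x^{(c)}\tilde\rho_E^{\,x,c})\le\sum_x\Pr[X=x]\,\trace(M_x^{(c)}\rho_E^x)\le p_{\mathrm{guess}}(X\mid E)$, since $\{M_x^{(c)}\}_x$ is itself a valid POVM on $E$; summing over the $\abs{C}$ values of $c$ gives the bound. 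Applying it to each of Alice's messages and multiplying the resulting factors $2^{\ell_i}$ (total exponent $\sum_i\ell_i=m_A$), while the other two kinds of step never increase the guessing probability, gives $p_{\mathrm{guess}}(X\mid E_{\mathrm{fin}})\le 2^{m_A}\,p_{\mathrm{guess}}(X\mid E_{\mathrm{init}})=2^{m_A-n}$.

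The main obstacle is conceptual rather than computational: one must ensure that the shared entanglement, which can make Bob's state conditioned on a fixed transcript genuinely depend on $x$, cannot help him beyond what Alice's classical bits convey. The accounting above handles this by never conditioning on the transcript and instead tracking $p_{\mathrm{guess}}(X\mid E)$ through the cq-state, using no-signalling to neutralise Alice's local measurements and the factor-$\abs{C}$ chain rule --- valid precisely because the messages are \emph{classical} --- to charge each bit of Alice's communication exactly once. The two-way nature is harmless because Bob-to-Alice messages only discard part of $E$.
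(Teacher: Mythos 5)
Your argument is correct, and it reaches the bound by a genuinely different route than the paper. The paper proves an inductive structural lemma (Lemma~\ref{thm-joint state in 2way}) showing that after any bounded-round LOCC protocol the joint state can be written as $\sum_{z,r,s}\ket{z,r}\!\bra{z,s}\tensor\Lambda\ket{\phi_{z,r}}\!\bra{\phi_{z,s}}\Lambda^{*}$ with $\Lambda$ independent of Alice's input and $\trace(\Lambda\Lambda^{*})=2^{m_A}$; the bound then follows from $\xi_x\le\Lambda\Lambda^{*}$ exactly as in Nayak--Salzman, at the cost of some careful bookkeeping with Schmidt decompositions and the transpose trick for maximally entangled states. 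You instead track the single scalar $p_{\mathrm{guess}}(X\mid E)$ round by round: data processing handles Bob's local steps, no-signalling handles Alice's local steps (this is precisely where the entanglement is neutralised), and your chain rule $p_{\mathrm{guess}}(X\mid EC)\le\abs{C}\,p_{\mathrm{guess}}(X\mid E)$ --- whose proof via $\tilde\rho_E^{\,x,c}\preceq\rho_E^x$ and the restriction of a POVM on $EC$ to each classical value of $C$ is sound, including the identity $\sum_c\tilde\rho_E^{\,x,c}=\rho_E^x$ which is just trace preservation of Alice's instrument --- charges each of Alice's classical bits exactly once. The accumulated factor $2^{m_A}$ plays the same role as $\trace(\Lambda\Lambda^{*})$ in the paper, but your version is more modular: it avoids maintaining a global normal form for the state, it is really the min-entropy chain rule $\rH_{\min}(X\mid EC)\ge\rH_{\min}(X\mid E)-\log\abs{C}$ in operational clothing, and it generalises immediately to non-uniform inputs (giving $\Pr[Y=X]\le 2^{m_A}\max_x\Pr[X=x]$). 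What the paper's approach buys in exchange is the explicit characterisation of the post-protocol state, which is reusable beyond this particular bound. One small point to make explicit when writing this up: after padding to a fixed communication pattern, Bob's generation of his own messages is a local channel on $E$ (he keeps the transcript), so it falls under your first case; this is implicit in your "discard" remark but worth stating since he does not actually discard anything.
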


Worst-case protocols for ARSP capture precisely the task of compression
in one-way communication complexity.
Average-case protocols for ARSP are relevant in the distributional
setting in communication complexity, and in asymptotic information
theory.  The results in this paper thus
supercede those due to Jain, Radhakrishnan, and Sen~\cite{jain_prior_2005}
(and due to Touchette~\cite{Touchette15} for the same setting). We also
show how a characterization
due to Berry and Sanders~\cite{berry_optimal_2003} may
be reproduced from ours, via a quantum asymptotic equipartition
property (cf.\ Theorem~\ref{thm-QAEP-Imax}). Thus, we believe the
results presented here have wider ramifications.

\subsection{Organization}

The organization of this paper is as follows. In
Section~\ref{sec-preliminaries}, we review some concepts, fix notation,
and the terminology used in the paper. Then we define remote state preparation, and explain an efficient protocol for this problem introduced in Ref.~\cite{jain_prior_2005}. In Section~\ref{sec:Average-case-error-communication}
and Section~\ref{sec:Worst-case-error-communication}, we give bounds on
average-case error and worst-case communication complexity of
ARSP, respectively. We make some observations, including a comparison
with previously known results in Section~\ref{sec:Some-Observations}.
We analyze LOCC protocols for communicating a uniformly random~$n$ bit
string in Section~\ref{sec-locc-bits}.
The paper ends with a summary of our results and an outlook in
Section~\ref{part:Conclusions-and-Outlook}. In the Appendix,
we present the proofs of some properties of information-theoretic quantities, and
discuss remote state preparation of states drawn from an infinite set.

\section*{Acknowledgments}

We are grateful to Matthias Christandl for discussions which led to the
research reported in this article.
S.B.\ thanks Marco Tomamichel for his help with the proof of 
Theorem~\ref{thm-worst-case lb} during her internship at CQT, Singapore.
We also thank the reviewers and the Associate Editor, Mark Wilde, for
their comments and suggestions.

\section{Preliminaries\label{sec-preliminaries}}

In this section, we review some notions in quantum computing and quantum information theory, such as  LOCC protocols, quantum communication complexity, asymptotic and non-asymptotic quantum information theory, as well as some mathematical tools like the minimax theorem. We also define remote state preparation formally and describe a non-trivial protocol for this problem. We refer the reader to the books by
Nielsen and Chuang~\cite{nielsen_quantum_2000} and Watrous~\cite{watrous-2015} for basic notions and results in quantum information, and largely only describe the potentially
non-standard notation and terminology we use.

\subsection{Some basic notions}

We denote Hilbert spaces either by capital script letters like~$\cH$ and $\cK$, or as~$\complex^m$ where~$m$ is the dimension of the Hilbert space. We
concern ourselves only with finite dimensional Hilbert spaces in this
article.
We denote the set of all linear operators from~$\cH$ to~$\cK$ by~$\sL(\cH,\cK)$.
We abbreviate~$\sL(\cH,\cH)$ as~$\sL(\cH)$. 
 We denote the set of all positive semidefinite operators in~$\cH$ by~$\sPos(\cH)$. An operator~$A$ is called \emph{sub-normal} if it is positive semidefinite and has trace at most 1. 
(The term ``subnormalized'' is also often used for such operators.)

We denote the identity operator on a Hilbert space by~$\id$ and the set of all unitary
operators on space~$\cH$ by~$\sU(\cH)$.

We call a physical quantum system with a finite number of degrees of
freedom a \emph{register}. Every register is associated with a Hilbert
space. We denote registers by capital letters, e.g.,~$X$,~$Y$ and~$Z$.
We use the notation~$\abs{X}$ to denote the dimension of the Hilbert
space associated with register~$X$. The state of a register~$X$ is modelled as a \emph{density operator}, i.e., a positive semidefinite operator with trace one, and is called a \emph{quantum state}. We denote  density operators by lower case Greek letters (e.g.,~$\rho$,
$\sigma$, \ldots), and the set of all density operators over a Hilbert space~$\cH$ by~$\sD(\cH)$. We may also denote a state by~$\rho_X$ to indicate its register~$X$. A bipartite register~$XY$ with Hilbert space~$\cH\tensor\cK$
is called a \textit{classical-quantum} register in the context of an
information processing task, if it only assumes states of the form~$\sum_{i}p_{i} \density{e_{i}} \tensor \rho_i$ where~$\{\ket{e_i}\}$ is the standard basis of~$\cH$ and~$p$ is a probability distribution over the basis.
In that case we say that the states are classical on~$X$. 
For any~$\omega \in \sPos(\cH)$ with spectral decomposition~$\sum_i
\lambda_i \density{\psi_i}$, we let~$\sqrt{\omega}=\sum_i \sqrt{\lambda_i}\density{\psi_i}$.

We denote the partial trace over Hilbert space~$\cK$ of a quantum state~$\rho_{AB}
\in \sD(\cH \tensor \cK)$ by either~$\trace_{\cK}(\rho_{AB})$  or~$\trace_B(\rho_{AB})$. We say that~$\rho_{AB}\in\sD(\cH\otimes\cK)$
is an \emph{extension }of~$\rho_{A}\in \sD(\cH)$ if~$\trace_{\cK}(\rho_{AB})=\rho_{A}$.

We call completely positive and trace preserving linear maps~$\sL(\cH) \rightarrow \sL(\cK)$ \emph{quantum channels\/}.
Quantum measurements are quantum channels with Kraus operators~$\{\sqrt{E_a}\otimes\ket{a}: a\in \Gamma\}$, where~$\Gamma$ is the set
of outcomes of the measurement and~$E_a$ is a positive semidefinite
operator associated with the outcome~$a\in\Gamma$ such
that~$\sum_{a \in \Gamma} E_a = \id$. We refer to the operators~$E_a$ as
\emph{measurement operators\/}.

The \emph{fidelity\/} $\rF(\rho,\sigma)$ between two quantum states~$\rho$ and~$\sigma$, is defined as
\[\rF(\rho,\sigma) \quad \coloneqq \quad \trace\sqrt{\sqrt{\rho}\ \sigma\sqrt{\rho}}\enspace.
\]
In the literature, fidelity is sometimes defined as the square of the
above quantity. Fidelity may be extended to sub-normal states~$\rho,\sigma$ as follows:
\[
\rF(\rho,\sigma)\quad\coloneqq\quad\trace\sqrt{\sqrt{\rho}\ \sigma\sqrt{\rho}}+\sqrt{\left(1-\trace(\rho)\right)\left(1-\trace(\sigma)\right)}\enspace.
\]
The fidelity function is monotone under the application of quantum channels, and is jointly concave over the set of quantum states.
Other useful properties of fidelity are stated in the following propositions.

\begin{proposition}\label{eq:fidelity-trace}
For any quantum state~$\rho$ and sub-normal state~$\sigma$, it holds that 
\[
\rF(\rho,\sigma)^2\quad \leq\quad \trace(\sigma)\enspace.
\]
\end{proposition}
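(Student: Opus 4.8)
The plan is to reduce the claim to the Cauchy--Schwarz (H\"older) inequality for Schatten norms, after first rewriting the fidelity as a trace norm. Since~$\rho$ is a genuine quantum state we have~$\trace(\rho)=1$, so the correction term~$\sqrt{(1-\trace(\rho))(1-\trace(\sigma))}$ in the definition of fidelity for sub-normal states vanishes. Hence, even though~$\sigma$ may be sub-normal, the fidelity collapses to the usual expression~$\rF(\rho,\sigma)=\trace\sqrt{\sqrt{\rho}\,\sigma\sqrt{\rho}}$. I would record this reduction explicitly as the first step, so that the remainder of the argument only involves positive semidefinite operators and the familiar formula.

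Next I would invoke the standard identity that expresses fidelity as a trace norm. Because~$\rho,\sigma\in\sPos(\cH)$, their square roots~$\sqrt{\rho}$ and~$\sqrt{\sigma}$ are self-adjoint, and therefore~$(\sqrt{\sigma}\sqrt{\rho})^{\adjoint}(\sqrt{\sigma}\sqrt{\rho})=\sqrt{\rho}\,\sigma\sqrt{\rho}$. Taking the positive square root and the trace gives~$\rF(\rho,\sigma)=\trace\sqrt{\sqrt{\rho}\,\sigma\sqrt{\rho}}=\trnorm{\sqrt{\sigma}\sqrt{\rho}}$. I would then apply the H\"older inequality for Schatten norms, namely~$\trnorm{AB}\leq\norm{A}_{2}\norm{B}_{2}$ (Cauchy--Schwarz for the Hilbert--Schmidt inner product), with~$A=\sqrt{\sigma}$ and~$B=\sqrt{\rho}$. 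The two Hilbert--Schmidt norms are immediate to evaluate:~$\norm{\sqrt{\rho}}_{2}=\sqrt{\trace(\rho)}=1$ and~$\norm{\sqrt{\sigma}}_{2}=\sqrt{\trace(\sigma)}$. Combining these yields~$\rF(\rho,\sigma)\leq\sqrt{\trace(\sigma)}$, and squaring both sides delivers the claimed bound~$\rF(\rho,\sigma)^{2}\leq\trace(\sigma)$.

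There is no substantial obstacle here; the work is in assembling two standard facts in the right order and checking that the sub-normalization of~$\sigma$ causes no trouble. The only care needed is the boundary case~$\trace(\sigma)=0$, in which~$\sigma=0$, both sides are zero, and the inequality holds trivially; otherwise all quantities are well defined. As an alternative to the trace-norm route, one could argue directly at the level of eigenvalues, writing~$\trace\sqrt{\sqrt{\rho}\,\sigma\sqrt{\rho}}=\sum_{i}\sqrt{\mu_{i}}$ for the eigenvalues~$\mu_{i}$ of~$\sqrt{\rho}\,\sigma\sqrt{\rho}$ and applying Cauchy--Schwarz together with~$\sum_{i}\mu_{i}=\trace(\sigma\rho)\leq\trace(\sigma)$; the H\"older formulation above is cleaner, so I would present that one.
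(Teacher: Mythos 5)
The paper states this proposition without proof (the citation to Ref.~\cite[Lemma~3.3]{nayak_bit-commitment_2003} that follows it refers to the \emph{next} proposition), so there is no in-paper argument to compare against; judged on its own merits, your proof is correct. The reduction to $\rF(\rho,\sigma)=\trace\sqrt{\sqrt{\rho}\,\sigma\sqrt{\rho}}$ using $\trace(\rho)=1$, the identity $\rF(\rho,\sigma)=\trnorm{\sqrt{\sigma}\sqrt{\rho}}$, and the Cauchy--Schwarz/H\"older bound $\trnorm{AB}\leq\norm{A}_2\norm{B}_2$ with $\norm{\sqrt{\rho}}_2=1$ and $\norm{\sqrt{\sigma}}_2=\sqrt{\trace(\sigma)}$ assemble into a complete and standard derivation. (An equally short alternative goes through Uhlmann: a purification of a sub-normal $\sigma$ has norm $\sqrt{\trace(\sigma)}$, so the overlap with any unit-norm purification of $\rho$ is at most $\sqrt{\trace(\sigma)}$.)

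One caveat about the aside you close with: the eigenvalue version does not work as you sketch it. Applying Cauchy--Schwarz directly to $\sum_i\sqrt{\mu_i}$ gives $\bigl(\sum_i\sqrt{\mu_i}\bigr)^2\leq n\sum_i\mu_i$ with a rank factor $n$, so combining it with $\sum_i\mu_i=\trace(\sigma\rho)\leq\trace(\sigma)$ does not yield the claim (take $\mu_1=\mu_2=\tfrac14$ to see the failure). The dimension-free statement you need is precisely the singular-value form of Cauchy--Schwarz, i.e.\ the H\"older step you already use; since you explicitly choose to present the H\"older route, the proof you would actually submit is sound, but the aside should be dropped or repaired.
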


\begin{proposition}
\label{eq:triangle-ineq-fidelity}
Let~$\rho, \sigma\in\sD(\cH)$ be two quantum states. Then
\[
1+\rF(\rho,\sigma)\quad=\quad\max\  \{\rF(\rho,\xi)^2+\rF(\sigma,\xi)^2 : \xi\in\sD(\cH)\}\enspace.
\]
\end{proposition}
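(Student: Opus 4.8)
The plan is to reduce the identity to the case of pure states by purifying~$\rho$ and~$\sigma$ and invoking Uhlmann's theorem, since for pure states the quantity~$\rF(\rho,\xi)^2$ becomes linear in~$\xi$ and the optimization over~$\xi$ becomes a largest-eigenvalue problem.

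First I would settle the pure-state case. For a unit vector~$\ket{\eta}$ and pure~$\rho=\density{\psi}$ one has~$\rF(\density{\psi},\density{\eta})^{2}=\size{\braket{\psi}{\eta}}^{2}=\bra{\eta}\density{\psi}\ket{\eta}$, so for any unit vectors~$\ket{\psi},\ket{\phi}$ and any~$\xi\in\sD(\cH)$,
\[
\rF(\density{\psi},\xi)^{2}+\rF(\density{\phi},\xi)^{2}=\trace\!\big[\xi(\density{\psi}+\density{\phi})\big]\;\leq\;\lambda_{\max}(\density{\psi}+\density{\phi})\enspace,
\]
with equality attained when~$\xi$ is the projector onto a top eigenvector. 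A routine computation on the (at most) two-dimensional span of~$\ket{\psi},\ket{\phi}$ shows that the largest eigenvalue of~$\density{\psi}+\density{\phi}$ equals~$1+\size{\braket{\psi}{\phi}}$, which proves the proposition for pure states and exhibits the top eigenvector as an explicit maximizer.

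For the upper bound in the general case I would fix~$\xi\in\sD(\cH)$, introduce a reference space~$\cH_{R}\cong\cH$, and fix a purification~$\ket{\eta}\in\cH\tensor\cH_{R}$ of~$\xi$. By Uhlmann's theorem there exist purifications~$\ket{\psi},\ket{\phi}\in\cH\tensor\cH_{R}$ of~$\rho,\sigma$ with~$\size{\braket{\psi}{\eta}}=\rF(\rho,\xi)$ and~$\size{\braket{\phi}{\eta}}=\rF(\sigma,\xi)$. Applying the pure-state inequality to~$\ket{\psi},\ket{\phi},\density{\eta}$ and then bounding~$\size{\braket{\psi}{\phi}}\leq\rF(\rho,\sigma)$ (again by Uhlmann, since any pair of purifications has overlap at most the fidelity) yields~$\rF(\rho,\xi)^{2}+\rF(\sigma,\xi)^{2}\leq 1+\rF(\rho,\sigma)$ for every~$\xi$. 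For the reverse inequality and the attainment of the maximum, I would instead fix purifications~$\ket{\psi},\ket{\phi}$ realizing~$\braket{\psi}{\phi}=\rF(\rho,\sigma)$, take~$\ket{\eta}$ to be a top eigenvector of~$\density{\psi}+\density{\phi}$, and set~$\xi=\trace_{\cH_{R}}\density{\eta}\in\sD(\cH)$. Since partial trace is a quantum channel, monotonicity of fidelity gives~$\rF(\rho,\xi)\geq\size{\braket{\psi}{\eta}}$ and~$\rF(\sigma,\xi)\geq\size{\braket{\phi}{\eta}}$, so this particular~$\xi$ achieves~$\rF(\rho,\xi)^{2}+\rF(\sigma,\xi)^{2}\geq 1+\rF(\rho,\sigma)$, matching the upper bound and justifying the~$\max$.

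The only delicate point will be the bookkeeping of which purification is held fixed and which is optimized in each direction of Uhlmann's theorem: in the upper bound the reference purification~$\ket{\eta}$ of~$\xi$ is fixed while those of~$\rho,\sigma$ are optimized, whereas in the lower bound these roles are reversed, and one must check that monotonicity under partial trace points in the fidelity-increasing direction. Everything else—the identity~$\rF(\text{pure},\xi)^{2}=\bra{\cdot}\xi\ket{\cdot}$ and the eigenvalue computation for~$\density{\psi}+\density{\phi}$—is elementary.
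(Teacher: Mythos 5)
Your proof is correct: the reduction to pure states via Uhlmann's theorem, the eigenvalue computation showing $\lambda_{\max}(\density{\psi}+\density{\phi}) = 1+\size{\braket{\psi}{\phi}}$, and the use of monotonicity of fidelity under partial trace to establish attainment are all sound. The paper itself does not prove this proposition but cites Lemma~3.3 of Nayak and Shor, and your argument is essentially a faithful reconstruction of that proof, so no further comparison is needed.
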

For a proof of the above property, see Ref.~\cite[Lemma~3.3]{nayak_bit-commitment_2003}.

We use the \textit{purified distance} (see
Ref.~\cite{tomamichel_duality_2010}) as a metric for sub-normal states. 
This is an extension of the metrics developed in Refs.~\cite{rastegin2002relative,rastegin2003lower, Gilchrist2005Distance, rastegin2006sine}.
Suppose that~$\rho$ and~$\sigma$ are two sub-normal states.
Then the purified distance of~$\rho$ and~$\sigma$ is defined as
\[
\rP(\rho,\sigma)\quad\coloneqq\quad\sqrt{1-\rF(\rho,\sigma)^{2}}\enspace.
\]
There are other metrics over sub-normal states, such as the trace distance.
However, we choose purified distance since it turns out to be more convenient to use in non-asymptotic quantum information theory. 

Let~$\rho\in\sD(\cH)$ be a quantum state and~$\epsilon\in[0,1)$.
Then, we define
\[
\sB^{\epsilon}(\rho) \quad \coloneqq \quad \{\tilde{\rho}\in\sPos(\cH):\rP(\rho,\tilde{\rho})\leq\epsilon,\trace{\:\tilde{\rho}}\leq1\}
\]
as the ball of sub-normal states that are within purified distance~$\epsilon$ of~$\rho$. We say that~$\sigma$ is \textit{$\epsilon$-close\/} to $\rho$, or equivalently,~$\sigma$ is an\textit{~$\epsilon$-approximation\/} of~$\rho$, if~$\sigma
\in \sB^{\epsilon}(\rho)$. The following property of purified distance
states that any state~$\rho'_A$ that is~$\epsilon$-close to~$\rho_A$ may
be extended to a state~$\rho'_{AB}$ that is~$\epsilon$-close to any
given extension~$\rho_{AB}$ of~$\rho_A$.
\begin{proposition} \label{cor-uhlmann} Let~$\rho_{A}\in\sD(\cH_A)$ be
a quantum state in the Hilbert space~$\cH_A$ and~$\rho_{AB}\in\sD(\cH_A\tensor\cH_B)$
be an extension of~$\rho_{A}$ over the Hilbert space~$\cH_A\tensor\cH_B$,
i.e.~$\rho_{A}=\trace_{B}(\rho_{AB})$. Let~$\rho'_{A}\in\sB^{\epsilon}(\rho_{A})$
be an~$\epsilon$-approximation of~$\rho_{A}$. Then there exists~$\rho'_{AB}\in\sB^{\epsilon}(\rho_{AB})$
such that~$\rho'_{A}=\trace_{B}(\rho'_{AB})$.
\end{proposition}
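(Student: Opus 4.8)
The statement is a form of Uhlmann's theorem, and the plan is to exhibit the desired extension as the reduced state of a suitably chosen (sub-normalized) purification. First I would restate the hypothesis in terms of fidelity: since~$\rho'_A\in\sB^{\epsilon}(\rho_A)$ we have~$\rP(\rho_A,\rho'_A)\leq\epsilon$, equivalently~$\rF(\rho_A,\rho'_A)\geq\sqrt{1-\epsilon^2}$, together with~$\trace(\rho'_A)\leq 1$. Because~$\rho_A$ is a genuine density operator ($\trace(\rho_A)=1$), the correction term~$\sqrt{(1-\trace\rho_A)(1-\trace\rho'_A)}$ in the generalized fidelity vanishes, so~$\rF(\rho_A,\rho'_A)=\trace\sqrt{\sqrt{\rho_A}\,\rho'_A\sqrt{\rho_A}}$ is the ordinary Uhlmann fidelity of a normalized state with a sub-normal one.

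Next, I would fix a purification~$\ket{\Psi}_{ABR}$ of~$\rho_{AB}$ on~$\cH_A\tensor\cH_B\tensor\cH_R$, where~$\cH_R$ is a reference space. Since~$\trace_B(\rho_{AB})=\rho_A$, the same vector~$\ket{\Psi}_{ABR}$ is simultaneously a purification of~$\rho_A$ with purifying system~$BR$. By Uhlmann's theorem---in the form that fixes one purification and optimizes the partner on the same purifying system, extended to sub-normalized purifications---there is a vector~$\ket{\Phi}_{ABR}$ with~$\trace_{BR}(\density{\Phi})=\rho'_A$ and~$\abs{\braket{\Psi}{\Phi}}=\rF(\rho_A,\rho'_A)$. (For the sub-normal state~$\rho'_A$ the purifying vector~$\ket{\Phi}$ has squared norm~$\trace(\rho'_A)\leq 1$; enlarging~$\cH_R$ if necessary guarantees its existence.)

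I would then set~$\rho'_{AB}\coloneqq\trace_R(\density{\Phi}_{ABR})$ and verify the three required properties. It is sub-normal, being positive semidefinite with~$\trace(\rho'_{AB})=\trace(\density{\Phi})=\trace(\rho'_A)\leq 1$. It extends~$\rho'_A$, since~$\trace_B(\rho'_{AB})=\trace_{BR}(\density{\Phi})=\rho'_A$. Finally,~$\ket{\Psi}_{ABR}$ and~$\ket{\Phi}_{ABR}$ are (sub-normalized) purifications of~$\rho_{AB}$ and~$\rho'_{AB}$ sharing the reference~$R$, so the lower-bound half of Uhlmann's theorem gives~$\rF(\rho_{AB},\rho'_{AB})\geq\abs{\braket{\Psi}{\Phi}}=\rF(\rho_A,\rho'_A)\geq\sqrt{1-\epsilon^2}$. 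Hence~$\rP(\rho_{AB},\rho'_{AB})=\sqrt{1-\rF(\rho_{AB},\rho'_{AB})^2}\leq\epsilon$, i.e.~$\rho'_{AB}\in\sB^{\epsilon}(\rho_{AB})$, which is what we want. (Monotonicity of fidelity under~$\trace_B$ forces the reverse inequality too, so equality in fact holds, but only the displayed direction is needed.)

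The bookkeeping with partial traces is routine; the one point demanding care---the main obstacle---is the sub-normalization of~$\rho'_A$. I would make sure to invoke the correct generalized-fidelity version of Uhlmann's theorem for sub-normal states, or equivalently to pass to normalized states on a one-dimension-larger space via a direct-sum ``failure flag'', apply the standard theorem there, and pull the result back. This is what simultaneously justifies the existence of the sub-normalized purification~$\ket{\Phi}$ and the fidelity lower bound in the sub-normalized setting.
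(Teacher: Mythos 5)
Your proposal is correct and follows essentially the same route as the paper's own proof: fix a purification of~$\rho_{AB}$, view it as a purification of~$\rho_A$, invoke Uhlmann's theorem to obtain a partner purification of~$\rho'_A$ achieving the fidelity, and trace out the reference to get~$\rho'_{AB}$. Your treatment is in fact a little more careful than the paper's, since you explicitly address the sub-normalization of~$\rho'_A$ and note that only the lower-bound direction of Uhlmann's theorem is needed (the paper asserts the equality of fidelities ``by definition'').
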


\begin{proof}
 Let~$\ket{v}\in\sD(\cH_{A'}\tensor\cH_{B'}\tensor\cH_{A}\tensor\cH_{B})$ be a purification
of~$\rho_{AB}$ and therefore also of~$\rho_{A}$, and~$\ket{v'}\in\sD(\cH_{A'}\tensor\cH_{B'}\tensor\cH_{A}\tensor\cH_{B})$ be a purification of~$\rho'_{A}$, such that~$\rF(\rho_{A},\rho'_{A})=\size{\braket{v}{v'}}$. Such~$\ket{v}$ and~$\ket{v'}$
exist by the Uhlmann theorem. Define~$\rho'_{AB}=\trace_{A'B'}(\density{v'})$.
By definition, we have~$\rF(\rho_{A},\rho'_{A})=\rF(\rho_{AB},\rho'_{AB})$. Therefore~$\rho'_{AB}\in\sB^{\epsilon}(\rho_{AB})$.
\end{proof}

The above property is in fact an extension of the Uhlmann theorem for purified distance.

\subsection{LOCC protocols}\label{sec-LOCC protocols}

The notion of \emph{LOCC}, short for \emph{local operations and
classical communication}, plays an important role in quantum
information, especially in the study of properties of entanglement (see, e.g., Ref.~\cite{bennett1996Mixed}). This
notion has been described formally in terms of \textit{quantum instruments} in Ref.~\cite{chitambar_everything_2012}.
In this article, we only study two-party LOCC protocols, in which one
party receives a classical input, and the other party produces a quantum
output. We describe these protocols informally below.

Suppose we have two parties, Alice
and Bob, who communicate with each other using only classical bits, share parts of a possibly entangled quantum state, and are allowed to perform any local quantum channels on their registers.
We call the registers (or qubits) accessible by only one of the parties
\emph{private\/} registers (or qubits). Alice is given a classical
input; Bob does not receive any input.
Let~$A$ be the register which holds Alice's
input, $Y_{0} \coloneqq P_{0}V_{0}$ and~$Z_{0} \coloneqq Q_{0}W_{0}$ be
Alice's and Bob's initial classical-quantum private registers,
respectively. Registers~$P_{i}$ and~$Q_{i}$ are classical registers with Alice and Bob,
respectively, after the~$i$th message. These registers hold the message transcript thus
far.  Initially,~$P_0,Q_0$ are both empty.
Registers~$V_{0}$ and~$W_{0}$ are initialized to a quantum state independent of
the inputs.  Note that the state in~$V_{0} W_{0}$ might be
entangled across the registers. If there are~$k$ messages, $P_{k+1}$
and~$V_{k+1}$ denote Alice's final classical and quantum registers,
respectively, and~$Q_{k+1} W_{k+1}$ denote Bob's, potentially after a local
operation.
Register~$A$ remains unchanged throughout the protocol.
Bob produces the output, which is a sub-register~$B$ of~$Q_{k+1} W_{k+1}$.

A \emph{one-way} LOCC protocol is an LOCC protocol in which the
communication consists of one message from Alice to Bob.
The three steps of the protocol are:
\begin{itemize}
\item[1)] Alice measures her register~$V_{0}$, obtains the outcome in register~$P_{1}$
(and a residual state in~$V_1$). The measurement is controlled by her input
in~$A$.

\item[2)] Alice sends a copy of her measurement outcome
to Bob, in classical register~$M$. Bob sets~$Q_{1} = M$.

\item[3)] Bob measures his register~$W_{1}$ (which is the same as~$W_0$), controlled by the register~$Q_{1}$.
The outcome and residual state are stored in classical-quantum
registers~$Q_2 W_2$, where~$Q_2$ includes~$Q_1$.  The output
of the protocol is a designated sub-register~$B$ of his registers~$Q_2 W_2$
\end{itemize}

A \emph{two-way\/} LOCC protocol is a protocol with communication in
both directions, from Alice to Bob and Bob to Alice. It has several rounds of communication in which the two parties alternately do a local measurement and send a message. Either party may start or end the protocol. Suppose in round~$i$, it is Alice's turn. Then
\begin{itemize}
\item First, Alice measures
her quantum register in that round,~$V_{i-1}$,
controlled by her input~$A$ and her classical register~$P_{i-1}$.
She copies the outcome~$M_i$ in a fresh register~$N_i$. The
register~$P_i \coloneqq  P_{i-1} N_i$. 

\item Alice then sends~$M_i$ to Bob using~$m_{i}$
classical bits, and Bob includes the received message~$M_{i}$ in his
transcript register: $Q_{i} \coloneqq  Q_{i-1} M_i$.
\end{itemize}

Bob's actions are similar in a round in which it is his turn (except
that he does not have any input), using
registers~$Q_i W_i$. At the end
of a protocol with~$k$ rounds of communication, Bob makes a measurement 
on the quantum register~$W_{k}$ controlled by~$Q_{k}$, and he includes the
outcome~$M_{k+1}$ of the measurement in the register~$Q_{k+1}$. A
pre-designated sub-register~$B$ of~$Q_{k+1} W_{k+1}$ is the output 
of the protocol. 

\subsection{Quantum communication complexity }

Quantum communication complexity was introduced by
Yao~\cite{yao_quantum_1993}, and has been studied extensively since.
Here we describe it in the context of LOCC protocols.

Let~$X,Y$ be two finite sets,~$Z$ be a set (not necessarily finite),
and~$f\subseteq X \times Y \times Z$ be a relation such that for
every~$(x,y)\in X\times Y$, there exists some~$z\in Z$ such
that~$(x,y,z)\in f$. The sets~$X,Y,Z$ might be sets of quantum states.
For example, in remote state preparation~$Z$ is the set of quantum
states over some space. 
In an LOCC protocol, Alice and Bob get as their inputs~$x\in X$ and~$y\in Y$, respectively, and their goal is to output an element~$z\in Z$ such that~$(x,y,z)\in f$. In the protocols we consider, one party may not get any input, e.g.,~$Y$ may be empty. Also, in general the output of the protocol is probabilistic. If~$W_{x,y}$ is the random output that the protocol produces on inputs~$(x,y)$, we define the error of
the protocol as
\[
\delta \quad \coloneqq  \quad \max_{x \in X, y \in Y} \Pr((x,y,W_{x,y}) \not\in f)
\enspace.
\]
We then say the protocol \textit{computes~$f$ with error~$\delta$\/}.
\begin{definition}
The \textit{entanglement-assisted communication complexity} of~$f$ with
error~$\delta$ is defined as the minimum number of bits exchanged in an LOCC protocol computing~$f$ with error~$\delta$.
\end{definition}

Now consider a relation~$f\subseteq X\times Y\times Z$, with~$Z =
\sD(\cH)$, the set of quantum states over~$\cH$. In this context we may allow a protocol to produce an approximation to the desired quantum state. Suppose the output quantum state that an LOCC 
protocol for~$f$ produces on inputs~$(x,y)$ is denoted by~$w_{xy}$.
Let~$p$ be a probability distribution over~$X\times Y$. We say a
protocol computes an approximation of~$f$ with average-case error
at most~$\epsilon$ if there are quantum states~$\set{z_{xy} : x \in X, y \in
Y, (x,y,z_{xy}) \in f}$ such that
\[
\sum_{x \in X,y \in Y} p_{xy} \; \rF(w_{xy},z_{xy}) \quad \geq \quad
\sqrt{1 - \epsilon^2} \enspace.
\]
The above condition may equivalently be written as $\rP(\zeta, \omega) 
\le \epsilon$, where~$\zeta \coloneqq \sum_{x,y} p_{xy} \density{xy} 
\tensor z_{xy}$ is an ideal input-output state, and~$\omega \coloneqq 
\sum_{x,y} p_{xy} \density{xy} \tensor w_{xy}$ is the
actual input-output state of the protocol.

\begin{definition}
The \emph{average-case communication
complexity\/} of~$f$ is defined as the minimum number of bits exchanged in
an LOCC protocol computing an approximation of~$f$ with
average-case error at most~$\epsilon$, and is denoted
by~$\sQ^*_{p}(f,\epsilon)$.
\end{definition}

Similarly, we say a protocol computes an
approximation of~$f$ with worst-case error at most~$\epsilon$
if there are quantum states~$\set{z_{xy} : x \in X, y \in Y, (x,y,z_{xy})
\in f}$ such that
\[
\max_{x \in X, y \in Y} \rP(w_{xy},z_{xy}) \quad \leq \quad \epsilon
\enspace.
\]

\begin{definition} 
The \emph{worst-case communication complexity\/} of~$f$ is defined as the minimum number of bits exchanged in an LOCC protocol computing an approximation of~$f$ with worst-case error at most~$\epsilon$, and is denoted by~$\sQ^*(f,\epsilon)$.
\end{definition}
 
Note that ``error'' here refers to the quality of approximation in the output state. The result of any probabilistic error made by the protocol is included in the output state, and hence this kind of error is reflected in the quality of approximation.

\subsection{Quantum information theory}
\label{sec:Asymptotic Information Theory}

Let $X$ be a register in quantum
state $\rho\in\sD(\cH)$. Then the \textit{von Neumann entropy}~$\rS(\rho)$ of
$X$ is defined as
\[
\rS(\rho)\quad\coloneqq \quad -\trace(\rho\log\rho)\enspace.
\]
Let $X$ and $Y$ be two registers in quantum states $\rho_{X}\in\sD(\cH)$
and $\sigma_{Y}\in\sD(\cH)$, respectively. The \emph{relative
entropy} denoted by $\rS(\rho_{X}\|\sigma_{Y})$ is defined as
\[
\rS(\rho_{X}\|\sigma_{Y}) \quad \coloneqq \quad 
\trace\left(\rho_{X}\log\rho_{X}-\rho_{X}\log\sigma_{Y}\right) 
\]
if~$\support(\rho)\subseteq \support(\sigma)$, and as~$\infty$ otherwise.
Suppose that~$\rho_{XY}\in\sD(\cH\otimes\cK)$ is the joint state of registers~$X$ and~$Y$, then the \textit{mutual information} of~$X$ and~$Y$ is defined as 
\[
\rI(X:Y)_{\rho}\quad\coloneqq\quad\rS(\rho_{X})+\rS(\rho_{Y})-\rS(\rho_{XY})\enspace,
\]
where~$\rho_X=\trace_Y(\rho_{XY})$ and~$\rho_Y=\trace_X(\rho_{XY})$.
When the register whose state is~$\rho$ is clear from the context,
we may omit it from the subscript of~$\rho$.
Similarly, when the state~$\rho$ of the registers~$XY$ is clear from 
the context, we may omit it from the subscript of~$\rI(X:Y)$.

For~$\rho,\sigma\in\sD(\cH)$, the \emph{observational
divergence\/}~\cite{jain_privacy_2002} between~$\rho$ and~$\sigma$ is defined as
\[
\rD(\rho\|\sigma)\quad\coloneqq\quad\sup\left\{
\trace(M\rho)\log\frac{\trace(M\rho)}{\trace(M\sigma)} \;:\;
0\leq M\leq\id,\trace(M\sigma)\neq0\right\} \enspace.
\]

Let~$\scE=\bigl( (p_{j},\rho_{j}):1\leq j\leq n \bigr)$ be an ensemble
of quantum states, i.e., $0\leq p_{j}\leq1$ for~$1\leq j\leq n$,
$\sum_{j=1}^{n}p_{j}=1$, and~$\rho_{j} \in \sD(\cH)$ are quantum states
over the same space. The \textit{Holevo
information} of $\scE$, denoted as~$\chi(\scE)$, is defined as
\[
\chi(\scE)\quad\coloneqq\quad\sum_{j=1}^{n}p_{j} \, \rS(\rho_{j}\|\rho)\enspace,
\]
where~$\rho$ is the ensemble average, i.e., $\rho=\sum_{j=1}^{n}p_{j}\rho_{j}$.
Similarly, we define the \textit{divergence information} of~$\scE$,
denoted as~$\rD(\scE)$, as
\[
\rD(\scE)\quad\coloneqq\quad\sum_{j=1}^{n}p_{j} \, \rD(\rho_{j}\|\rho)\enspace.
\]

Let~$S$ be a set, and~$Q:S\rightarrow\sD(\cH)$ be a function which
``encodes'' each~$x\in S$ as a quantum state. Let~$p$ be a probability
distribution over~$S$, and~$\rho_{AB}(p)$ be the bipartite
state~$\rho_{AB}(p) \coloneqq  \sum_x p_x \density{x}_A \otimes Q(x)_B$. We define the \emph{maximum
possible information} in~$Q$~\cite{jain_communication_2006},
denoted by~$\sT(Q)$, as 
\[
\sT(Q)\quad\coloneqq\quad\max_{p}\:\rI(A:B)_{\rho(p)}\enspace,
\]
where the maximum is taken over all probability distributions~$p$
over~$S$.

Note that for a classical-quantum state~$\rho_{AB}=\sum_{j=1}^{n}p_{j}\ketbra{j}j\otimes\rho_{j}$,
the mutual information of~$A$ and~$B$ is equal to
the Holevo information of the quantum ensemble~$\scE=\bigl(
(p_{j},\rho_{j}):1\leq j\leq n \bigr)$, i.e., $\chi(\scE)=\rI(A:B)$, and therefore~$\sT(Q)\geq\chi(\scE)$.

Most of the entropic quantities defined above
arise naturally in the analysis of information processing
tasks in the \emph{asymptotic\/} setting, i.e., when the available
resources may be used to jointly complete arbitrarily long sequences 
of tasks on independent, identically distributed (iid) inputs.
The asymptotic setting is an idealization that may not be realistic in
certain scenarios. More often, we are faced with single instances of a
task which we wish to accomplish with the fewest resources.
Recently, researchers have begun to formally study tasks in the non-iid or
\emph{one-shot\/} setting, and the entropic notions that arise therein.
Several one-shot entropic concepts have been implicit in traditional (iid) 
information theory and in communication complexity. 
For example, Jain, Radhakrishnan, and Sen implicitly studied the concept of
\emph{smooth max-relative entropy\/} in Ref.~\cite{jain_privacy_2002}.
However, non-asymptotic concepts were formalized only
later (see, e.g.,
Refs.~\cite{renner_security_2005,renner_smooth_2004,datta_min-_2009}).
In this work, we use one-shot entropic quantities to tightly 
characterize the communication complexity of remote state preparation.

Let~$\rho,\sigma\in\sD(\cH)$ be two quantum states. The \textit{max-relative
entropy} of~$\rho$ with respect to~$\sigma$ is defined as 
\[
\Dmax(\rho\|\sigma)\quad\coloneqq\quad\min\{\lambda:\rho\leq2^{\lambda}\sigma\}\enspace,
\]
when~$\support(\rho) \subseteq \support(\sigma)$, and is~$\infty$ otherwise~\cite{datta_min-_2009}.
This notion captures how two states~$\rho,\sigma$ behave relative to each
other under the application of a measurement. 
For a bipartite quantum state~$\rho_{AB}\in\sD(\cH'\tensor\cH)$,
the \textit{max-information} part~$B$ has about 
part~$A$~\cite{berta_quantum_2011} is defined as
\[
\Imax(A:B)_\rho
\quad\coloneqq\quad\min_{\sigma\in\sD(\cH)}\Dmax(\rho_{AB} \,\|\, \rho_{A}\otimes\sigma_B)\enspace.
\]
Note that this quantity is asymmetric with respect to the parts~$A$ and~$B$.
As for mutual information, we include the state as a subscript only when
it is not clear from the context. 
The \emph{smoothed\/} versions of these quantities come into play when
approximations are allowed in the tasks at hand. \textit{Smooth
max-relative entropy\/} is defined as
\[
\Dmax^{\epsilon}(\rho\|\sigma)\quad\coloneqq\quad\min_{\tilde{\rho}\in\sB^{\epsilon}(\rho)}\Dmax(\tilde{\rho}\|\sigma)\enspace,
\]
and \textit{smooth max-information\/} is defined as
\[
\Imax^{\epsilon}(A:B)_{\rho}\quad\coloneqq\quad\min_{\tilde{\rho}\in\sB^{\epsilon}(\rho)}\Imax(A:B)_{\tilde{\rho}}\enspace.
\]
There are several ways to define max-information using max-relative
entropy~\cite{ciganovic_smooth_2013}. We choose the above definition in
this work since it can be used to characterize average-case communication
complexity of the remote state preparation problem. 

The following are some properties of max-information we use.
Both the exact and smooth versions of this quantity are monotonic under 
the application of a quantum channel~\cite{berta_quantum_2011}.
\begin{proposition}[Monotonicity under quantum channels]
\label{thm-monotonicity-mi}
Let~$\Phi:\rL(\cH')\rightarrow\rL(\cK)$ be a quantum channel,
$\rho_{AB}$ a bipartite sub-normal state over~$\cH'\tensor\cH$,
$\sigma_{AB}\in\sD(\cH'\tensor\cH)$
a bipartite quantum state, and~$\epsilon \in [0,1]$. Then
\begin{align*}
\Imax(A':B)_{\rho'}
    \quad & \leq \quad \Imax(A:B)_{\rho}\enspace, \qquad
                 \textrm{and} \\
\Imax^\epsilon(A':B)_{\sigma'}
    \quad & \leq \quad \Imax^\epsilon(A:B)_{\sigma}\enspace,
\end{align*}
where~$A',B$ denote two parts
of the states~$\rho'_{A'\!B} \coloneqq  (\Phi\tensor\id)(\rho)$
and~$\sigma'_{A'\!B} \coloneqq  (\Phi\tensor\id)(\sigma)$.
\end{proposition}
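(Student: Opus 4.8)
The plan is to reduce both claims to a single elementary fact: the \emph{data-processing inequality} for the unsmoothed max-relative entropy. First I would record that for any quantum channel~$\Lambda$ one has~$\Dmax(\Lambda(\omega)\|\Lambda(\tau))\leq\Dmax(\omega\|\tau)$. This is immediate from the definition: if~$\omega\leq 2^{\lambda}\tau$, then applying the positive linear map~$\Lambda$ preserves the operator inequality, so~$\Lambda(\omega)\leq 2^{\lambda}\Lambda(\tau)$; minimizing over feasible~$\lambda$ gives the bound. Crucially, this argument is valid verbatim when~$\omega$ is merely sub-normal, which is what I will need later.

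For the exact claim, I would let~$\sigma_B^{*}\in\sD(\cH)$ attain the minimum in~$\Imax(A:B)_{\rho}=\Dmax(\rho_{AB}\,\|\,\rho_{A}\tensor\sigma_B^{*})$. The key observation is that~$\Phi$ acts only on~$A$, so~$(\Phi\tensor\id)(\rho_{A}\tensor\sigma_B^{*})=\Phi(\rho_{A})\tensor\sigma_B^{*}=\rho'_{A'}\tensor\sigma_B^{*}$, where I use~$\rho'_{A'}=\Phi(\rho_{A})$ because partial trace over~$B$ commutes with~$\Phi\tensor\id$ (that is,~$\trace_{B}\compose(\Phi\tensor\id)=\Phi\compose\trace_{B}$). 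Applying the channel~$\Phi\tensor\id$ to both arguments and invoking the data-processing inequality then yields~$\Dmax(\rho'_{A'B}\,\|\,\rho'_{A'}\tensor\sigma_B^{*})\leq\Imax(A:B)_{\rho}$. Since~$\sigma_B^{*}$ is itself a feasible reference state in the minimization defining~$\Imax(A':B)_{\rho'}$, I conclude~$\Imax(A':B)_{\rho'}\leq\Imax(A:B)_{\rho}$. The point worth stressing is that one does not re-optimize the reference state after the channel; reusing the old optimizer as a feasible point works precisely because the channel leaves the~$B$-marginal of the reference untouched.

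For the smooth claim, I would let~$\tilde{\sigma}\in\sB^{\epsilon}(\sigma_{AB})$ attain~$\Imax^{\epsilon}(A:B)_{\sigma}=\Imax(A:B)_{\tilde{\sigma}}$ and push it through the channel. Two things must be checked. First,~$(\Phi\tensor\id)(\tilde{\sigma})$ must still lie in the~$\epsilon$-ball around~$\sigma'_{A'B}=(\Phi\tensor\id)(\sigma_{AB})$; this is monotonicity of the purified distance under the channel, which follows from monotonicity of fidelity (recorded earlier in the excerpt) together with the fact that~$\Phi\tensor\id$ is trace preserving, so the sub-normal correction term in the extended fidelity is unchanged and~$\rP$ does not increase. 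Second, I apply the already-established exact monotonicity to the \emph{sub-normal} state~$\tilde{\sigma}$, obtaining~$\Imax(A':B)_{(\Phi\tensor\id)(\tilde{\sigma})}\leq\Imax(A:B)_{\tilde{\sigma}}$. Combining the two, $(\Phi\tensor\id)(\tilde{\sigma})$ is a feasible smoothing of~$\sigma'_{A'B}$ whose max-information is at most~$\Imax^{\epsilon}(A:B)_{\sigma}$, whence~$\Imax^{\epsilon}(A':B)_{\sigma'}\leq\Imax^{\epsilon}(A:B)_{\sigma}$.

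The argument is essentially routine, and the main obstacle is the first verification in the smooth case: that the smoothing ball is mapped into the smoothing ball of the image, i.e.\ that purified distance is contractive under~$\Phi\tensor\id$ on sub-normal states. This is the one place where the extended definition of fidelity and the trace preservation of the channel genuinely enter, so it is where I would take the most care. It also explains why the exact statement is phrased for sub-normal~$\rho_{AB}$ rather than only for states: that generality is exactly what lets the exact case be reused inside the smooth case.
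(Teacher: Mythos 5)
Your proof is correct. The paper does not prove this proposition itself---it simply cites Berta, Christandl, and Renner~\cite{berta_quantum_2011}---but your argument (operator-order preservation under positive maps giving data processing for $\Dmax$, reuse of the optimal reference state on $B$ since the channel acts only on $A$, and contraction of the purified distance to map the smoothing ball into the smoothing ball) is exactly the standard proof from that reference, including the correct observation that the exact case must be stated for sub-normal states so it can be invoked inside the smooth case.
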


For a classical-quantum state~$\rho_{AB}$, the value of smooth
max-information is achieved by a classical-quantum state~$\rho'_{AB}$
that is~$\epsilon$-close to~$\rho_{AB}$. A proof is included in
Appendix~\ref{App-proof of properties}.

\begin{proposition}
\label{thm-classical-smi}
Let~$\rho_{AB}\in\sD(\cH'\tensor\cH)$
be a bipartite quantum state that is classical on~$A$. For any~$\epsilon\ge0$, there exists~$\rho'_{AB}\in\sB^{\epsilon}(\rho_{AB})\cap\sD(\cH'\tensor\cH)$
classical on~$A$ such that 
\[
\Imax^{\epsilon}(A:B)_{\rho}\quad=\quad\Imax(A:B)_{\rho'}\enspace.
\]
\end{proposition}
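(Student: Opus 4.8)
The plan is to start from a state attaining the minimum in the definition of $\Imax^\epsilon(A:B)_\rho$ and transform it, by first normalizing and then dephasing on $A$, into a state that is simultaneously a genuine density operator and classical on $A$, while never increasing the max-information. The minimum in the definition is attained on the compact ball $\sB^\epsilon(\rho_{AB})$, so let $\tilde\rho \in \sB^\epsilon(\rho_{AB})$ be a minimizer, with $\Imax(A:B)_{\tilde\rho} = \Imax^\epsilon(A:B)_\rho$. The case $\epsilon = 0$ is immediate, since then $\sB^\epsilon(\rho_{AB}) = \{\rho_{AB}\}$ already has the required form; so I assume $0 < \epsilon < 1$.

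First I would normalize $\tilde\rho$ to make it a density operator. Let $t \coloneqq \trace(\tilde\rho)$. Since $\rho_{AB}$ is a genuine state, the extended-fidelity bonus term vanishes and $\rF(\rho_{AB}, \tilde\rho) = \trace\sqrt{\sqrt{\rho_{AB}}\,\tilde\rho\,\sqrt{\rho_{AB}}}$; in particular, by Proposition~\ref{eq:fidelity-trace}, $t \ge \rF(\rho_{AB},\tilde\rho)^2 \ge 1-\epsilon^2 > 0$, so $\tilde\rho/t$ is well defined. Two facts drive this step. First, rescaling $\tilde\rho$ by $1/t \ge 1$ scales the fidelity above by $1/\sqrt{t} \ge 1$, so $\rF(\rho_{AB}, \tilde\rho/t) \ge \rF(\rho_{AB}, \tilde\rho)$ and hence $\rP(\rho_{AB}, \tilde\rho/t) \le \rP(\rho_{AB}, \tilde\rho) \le \epsilon$; as $\trace(\tilde\rho/t) = 1$, this shows $\tilde\rho/t \in \sB^\epsilon(\rho_{AB}) \cap \sD(\cH' \tensor \cH)$. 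Second, the value $\Dmax(\tilde\rho_{AB}\,\|\,\tilde\rho_A \tensor \sigma)$ is invariant under simultaneously rescaling both $\tilde\rho_{AB}$ and its marginal $\tilde\rho_A$ by $1/t$, because the factor cancels in the operator inequality $\tilde\rho_{AB} \le 2^\lambda\,\tilde\rho_A \tensor \sigma$; taking the minimum over $\sigma$ gives $\Imax(A:B)_{\tilde\rho/t} = \Imax(A:B)_{\tilde\rho}$. Thus $\tilde\rho/t$ is still a minimizer and has unit trace, so I may assume $\trace(\tilde\rho) = 1$ from now on.

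Next I would symmetrize over the classical structure of $A$. Let $\Phi$ be the completely dephasing channel on $\cH'$ in the standard basis, $\Phi(\cdot) = \sum_i \density{e_i}\,(\cdot)\,\density{e_i}$, and set $\rho'_{AB} \coloneqq (\Phi \tensor \id)(\tilde\rho)$. By construction $\rho'_{AB}$ is classical on $A$, and since $\Phi \tensor \id$ is trace preserving, $\rho'_{AB} \in \sD(\cH' \tensor \cH)$. Because $\rho_{AB}$ is already classical on $A$, it is fixed by $\Phi \tensor \id$; monotonicity of fidelity under channels then gives $\rF(\rho_{AB}, \rho'_{AB}) = \rF\bigl((\Phi\tensor\id)(\rho_{AB}),\,(\Phi\tensor\id)(\tilde\rho)\bigr) \ge \rF(\rho_{AB}, \tilde\rho)$, whence $\rP(\rho_{AB}, \rho'_{AB}) \le \epsilon$ and $\rho'_{AB} \in \sB^\epsilon(\rho_{AB})$. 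Finally, Proposition~\ref{thm-monotonicity-mi} applied to $\Phi \tensor \id$ yields $\Imax(A:B)_{\rho'} \le \Imax(A:B)_{\tilde\rho} = \Imax^\epsilon(A:B)_\rho$, while membership $\rho'_{AB} \in \sB^\epsilon(\rho_{AB})$ forces the reverse inequality by the definition of $\Imax^\epsilon$. Hence equality holds, and $\rho'_{AB}$ is the desired classical-on-$A$ density operator.

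I expect the only genuinely delicate point to be the normalization step: one must confirm both that rescaling a sub-normal minimizer keeps it inside the purified-distance ball (which hinges on $\rho_{AB}$ having unit trace, so the extra fidelity term disappears) and that $\Dmax$, and therefore $\Imax$, is exactly scale invariant under the joint rescaling of the state and its $A$-marginal. Everything else follows from monotonicity of fidelity and of max-information under the dephasing channel, together with the fact that this channel leaves the already-classical $\rho_{AB}$ untouched.
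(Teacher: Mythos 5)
Your proof is correct and follows essentially the same route as the paper's: take a minimizer in $\sB^{\epsilon}(\rho_{AB})$, normalize it (noting that rescaling cannot increase the purified distance to the unit-trace state $\rho_{AB}$ and leaves $\Dmax(\tilde{\rho}_{AB}\,\|\,\tilde{\rho}_A\tensor\sigma)$ unchanged), then apply the dephasing channel on $A$ and use monotonicity of fidelity and of max-information to conclude equality. Your treatment of the normalization step is in fact more detailed than the paper's, which simply asserts the corresponding facts.
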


Smooth max-information satisfies the Asymptotic Equipartition Property,
as proven by Berta, Christandl, and Renner~\cite{berta_quantum_2011}.
Let~$\rH$ denote the binary entropy function~$\rH(\alpha) \coloneqq  - \alpha
\log \alpha - (1-\alpha)\log (1-\alpha)$.

\begin{theorem}[Quantum Asymptotic Equipartition property]
\label{thm-QAEP-Imax}
Let $\epsilon>0$, $n$ an integer such that~$n \geq 2(1-\epsilon^2)$,
and $\rho_{AB}\in \sD(\cH_{AB})$. Then
\begin{align}
\label{eq-QAEP-lb}
\rI(A:B)_\rho - \frac{3}{n} \, \rH(\epsilon)
- 2\epsilon \log(\abs{A}\abs{B})
    \quad \leq \quad \frac{1}{n} \, \Imax^{\epsilon}(A:B)_{\rho^{\otimes n}}
\enspace,
\end{align}
and
\begin{align}
\label{eq-QAEP-ub}
\frac{1}{n} \, \Imax^{\epsilon}(A:B)_{\rho^{\otimes n}}
    \quad \leq \quad \rI(A:B)_\rho
     + \frac{\xi(\epsilon)}{\sqrt{n}}
     - \frac{2}{n} \log{\frac{\epsilon^2}{24}} \enspace, 
\end{align}
where~$\xi(\epsilon) = 8\sqrt{13-4\log \epsilon} \,
(2+\frac{1}{2}\log \abs{A})$. Therefore,
\[
\lim_{\epsilon\rightarrow0} \lim_{n\rightarrow\infty} ~ \frac{1}{n}
\, \Imax^{\epsilon}(A:B)_{\rho^{\otimes n}}
    \quad = \quad \rI(A:B)_\rho \enspace.
\]
\end{theorem}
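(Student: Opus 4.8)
The plan is to establish the two finite-$n$ inequalities separately and then read off the limit by sending $n \to \infty$ first and $\epsilon \to 0$ afterwards. Both bounds rest on the additivity $\rI(A:B)_{\rho^{\otimes n}} = n\,\rI(A:B)_\rho$ together with the variational identity $\rI(A:B)_\rho = \rS(\rho_{AB}\,\|\,\rho_A \otimes \rho_B) = \min_{\sigma_B} \rS(\rho_{AB}\,\|\,\rho_A \otimes \sigma_B)$, whose minimum is attained at $\sigma_B = \rho_B$ by nonnegativity of relative entropy; the max-information analogue of this minimization is built into the definition of $\Imax$.

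For the lower bound \eqref{eq-QAEP-lb}, I would first use the ordering $\Dmax(\tau\|\omega) \geq \rS(\tau\|\omega)$, which immediately gives $\Imax(A:B)_\tau \geq \rI(A:B)_\tau$ for every state $\tau$. Applying this to the minimizer $\tilde\rho \in \sB^{\epsilon}(\rho^{\otimes n})$ achieving $\Imax^{\epsilon}(A:B)_{\rho^{\otimes n}}$ reduces matters to bounding $\rI(A:B)_{\tilde\rho}$ from below by $\rI(A:B)_{\rho^{\otimes n}}$. Since $\rP(\tilde\rho, \rho^{\otimes n}) \leq \epsilon$ (which also controls trace distance) and the underlying dimension is $(\abs{A}\abs{B})^n$, a continuity estimate for mutual information of Alicki--Fannes / Fannes--Audenaert type bounds the gap by $2\epsilon\, n \log(\abs{A}\abs{B}) + 3\,\rH(\epsilon)$, where only the dimension term scales with $n$. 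Dividing by $n$ and using additivity yields exactly the claimed inequality; the sub-normalization of $\tilde\rho$ is a technicality handled by working with the extension of mutual information to sub-normal states.

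The upper bound \eqref{eq-QAEP-ub} is the substantive direction. Here I would specialize the auxiliary state to $\sigma_B = \rho_B^{\otimes n}$, so that $\Imax^{\epsilon}(A:B)_{\rho^{\otimes n}}$ is controlled by the smooth max-relative entropy $\Dmax^{\epsilon}\big(\rho_{AB}^{\otimes n}\,\big\|\,(\rho_A \otimes \rho_B)^{\otimes n}\big)$, a quantity comparing two $n$-fold tensor powers. Because $\rS(\rho_{AB}\,\|\,\rho_A \otimes \rho_B) = \rI(A:B)_\rho$, the asymptotic equipartition property for smooth max-relative entropy gives $\Dmax^{\epsilon}(\tau^{\otimes n}\|\omega^{\otimes n}) \leq n\,\rS(\tau\|\omega) + \sqrt{n}\,\xi(\epsilon) - 2\log(\epsilon^2/24)$ after tracking constants, and dividing by $n$ produces \eqref{eq-QAEP-ub}. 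The $\Dmax^{\epsilon}$-AEP itself I would prove by sandwiching $\Dmax^{\epsilon}$ between the R\'enyi relative entropies $D_\alpha$ for $\alpha$ just above $1$ and then second-order Taylor-expanding $D_\alpha$ about $\alpha = 1$: the first derivative reproduces $\rS(\tau\|\omega)$ and the relative-entropy variance from the second derivative supplies the $1/\sqrt{n}$ fluctuation.

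The hard part will be the upper bound, for two reasons. First, the definition of $\Imax^{\epsilon}$ smooths the full bipartite state $\tilde\rho_{AB}$ and places its \emph{own} marginal $\tilde\rho_A$ in the reference state, whereas the clean $\Dmax^{\epsilon}$-AEP wants the fixed product reference $(\rho_A \otimes \rho_B)^{\otimes n}$; reconciling these requires showing the marginal perturbation costs at most an $\Order(1)$ additive term, e.g.\ via a triangle-type inequality for $\Dmax$ or by arranging the smoothing to leave the $A$-marginal intact. Second, pinning down the explicit constant $\xi(\epsilon)$ and the $-\frac{2}{n}\log(\epsilon^2/24)$ correction demands careful bookkeeping in the R\'enyi expansion and in the choice of the smoothing parameter. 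Once both finite-$n$ inequalities are in hand, the limit is immediate: sending $n \to \infty$ collapses the $\xi(\epsilon)/\sqrt{n}$, $\frac{1}{n}\rH(\epsilon)$, and $\frac{1}{n}\log(\epsilon^2/24)$ terms, sandwiching $\lim_{n} \frac1n \Imax^{\epsilon}(A:B)_{\rho^{\otimes n}}$ between $\rI(A:B)_\rho - 2\epsilon \log(\abs{A}\abs{B})$ and $\rI(A:B)_\rho$, and letting $\epsilon \to 0$ then forces equality.
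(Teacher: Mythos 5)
The paper does not prove this theorem; it is imported verbatim from Berta, Christandl, and Renner \cite{berta_quantum_2011}, so there is no in-paper proof to compare against. Judged on its own terms, your lower-bound argument is essentially complete and is the standard one: $\Dmax \geq \rS$ gives $\Imax(A:B)_{\tilde\rho} \geq \rI(A:B)_{\tilde\rho}$ for the optimal $\tilde\rho \in \sB^{\epsilon}(\rho^{\otimes n})$ (after normalizing), Alicki--Fannes continuity in dimension $(\abs{A}\abs{B})^n$ supplies the $2\epsilon n \log(\abs{A}\abs{B}) + 3\rH(\epsilon)$ correction, and additivity of $\rI$ finishes it.

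The upper bound is where your sketch has a genuine unresolved gap, and it is exactly the one you flag. With this paper's definition, $\Imax^{\epsilon}(A:B)_{\rho^{\otimes n}} = \min_{\tilde\rho}\min_{\sigma}\Dmax(\tilde\rho_{AB}\,\|\,\tilde\rho_A \tensor \sigma_B)$, so exhibiting a $\tilde\rho$ with $\Dmax(\tilde\rho_{AB}\,\|\,\rho_A^{\otimes n}\tensor\rho_B^{\otimes n})$ small does not yet bound $\Imax^{\epsilon}$: converting the reference from $\rho_A^{\otimes n}\tensor\sigma_B$ to $\tilde\rho_A\tensor\sigma_B$ requires an operator inequality of the form $\rho_A^{\otimes n} \leq 2^{c}\,\tilde\rho_A$, i.e.\ a bound on $\Dmax(\rho_A^{\otimes n}\|\tilde\rho_A)$, and purified-distance closeness $\rP(\tilde\rho,\rho^{\otimes n})\leq\epsilon$ gives no control whatsoever over this quantity (a small eigenvalue of $\tilde\rho_A$ can make it infinite). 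There is no generic ``triangle-type inequality for $\Dmax$'' that rescues this; the literature resolves it either by constructing the smoothed state explicitly via typical-subspace projections so that the $A$-marginal is dominated in the right direction, or by proving equivalence up to additive $\Order(\log\tfrac{1}{\epsilon})$ terms between the several inequivalent definitions of smooth max-information (Ciganovi\'c, Beaudry, and Renner \cite{ciganovic_smooth_2013}, which this paper cites for precisely this point). A second, more minor mismatch: the stated correction $\xi(\epsilon) = 8\sqrt{13-4\log\epsilon}\,(2+\tfrac12\log\abs{A})$ depends only on $\abs{A}$, which reflects a proof that decomposes max-information into a difference of smooth max- and min-entropies of the $A$ system and applies the min-/max-entropy AEP twice; your proposed route via a second-order R\'enyi expansion of $\Dmax^{\epsilon}(\rho_{AB}^{\otimes n}\|(\rho_A\tensor\rho_B)^{\otimes n})$ would instead produce a coefficient governed by the relative-entropy variance, so even after closing the marginal gap you would obtain a bound of the right shape but not the stated constants.
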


For $\epsilon \in [0,1)$, the \textit{$\epsilon$-hypothesis testing relative
entropy\/}~\cite{wang_one-shot_2012} of two quantum states~$\rho,\sigma
\in \sD(\cH)$ is defined as
\[
\Dh^{\epsilon}(\rho\|\sigma)\quad\coloneqq\quad-\log\frac{\upbeta^{\epsilon}(\rho\|\sigma)}{1-\epsilon}\enspace,
\]
where
\begin{align}
\label{eq:def-betaEpsilon}
\upbeta^{\epsilon}(\rho\|\sigma)
    \quad \coloneqq \quad \inf ~ \{ \langle Q,\sigma\rangle
    \;|\; 0\leq Q\leq\id ~ \textrm{and} ~ \langle Q,\rho\rangle\geq1-\epsilon\} 
    \enspace.
\end{align}
The infimum in the above definition is always achieved
and~$\upbeta^\epsilon(\rho\|\sigma)$ is between~$0$ and~$1$. In this
definition, we interpret~$(Q,\id-Q)$ as a measurement for
distinguishing~$\rho$ from~$\sigma$, i.e., as a strategy in
\emph{hypothesis testing\/}. So~$\upbeta^\epsilon(\rho\|\sigma)$
corresponds to the minimum probability of \emph{incorrectly\/} 
identifying~$\sigma$ when~$\rho$ is identified \emph{correctly\/} with 
probability at least~$1-\epsilon$.
This one-shot entropic quantity has been studied for a 
long time either implicitly (see, e.g., Refs.~\cite{hiai1991,ogawa2000strong})
or explicitly, albeit without giving it a name
(see, e.g., Refs.~\cite{BD10,BD11}). It also arises in the 
context of \textit{channel coding\/}~\cite{hayashi2002general,wang_one-shot_2012} and other tasks~\cite{Hayashi17}.

The error in hypothesis testing may only increase under the action of a
quantum channel. This has been known for some time; see, e.g.,
Ref.~\cite[Eq.~(44)]{bjelakovic2003quantum} for a proof.

\begin{proposition}[Data Processing Inequality]
\label{thm-DPI-beta}
Let $\rho,\sigma\in\sD(\cH)$ for some Hilbert space $\cH$, and $\Phi:\sL(\cH)\rightarrow\sL(\cK)$
be a quantum channel. Then
\[
\upbeta^{\epsilon}(\rho\|\sigma)\quad\leq\quad\upbeta^{\epsilon}(\Phi(\rho)
\,\|\, \Phi(\sigma))\enspace.
\]
\end{proposition}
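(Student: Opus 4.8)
The plan is to exploit the dual, variational form of~$\upbeta^\epsilon$ and pull back the optimal test operator through the adjoint (Hilbert--Schmidt) channel. Fix~$\epsilon \in [0,1)$. Since the infimum defining~$\upbeta^{\epsilon}(\Phi(\rho)\|\Phi(\sigma))$ is always achieved, I would let~$Q' \in \sL(\cK)$ be an optimal measurement operator for the right-hand side, so that~$0 \leq Q' \leq \id$, the feasibility constraint~$\langle Q', \Phi(\rho)\rangle \geq 1-\epsilon$ holds, and~$\langle Q', \Phi(\sigma)\rangle = \upbeta^{\epsilon}(\Phi(\rho)\|\Phi(\sigma))$. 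I then define the pulled-back operator~$Q \coloneqq \Phi^*(Q') \in \sL(\cH)$, where~$\Phi^* : \sL(\cK)\rightarrow\sL(\cH)$ denotes the adjoint of~$\Phi$ with respect to the Hilbert--Schmidt inner product, i.e.\ the unique linear map satisfying~$\langle \Phi^*(Y), X\rangle = \langle Y, \Phi(X)\rangle$ for all~$X \in \sL(\cH)$ and~$Y \in \sL(\cK)$.

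The crux is to check that~$Q$ is a valid measurement operator for the left-hand problem attaining the same objective value, and this rests on two structural properties of~$\Phi^*$. First, because~$\Phi$ is completely positive, so is~$\Phi^*$; in particular~$\Phi^*$ maps positive semidefinite operators to positive semidefinite operators, so~$Q' \geq 0$ forces~$Q = \Phi^*(Q') \geq 0$. Second, because~$\Phi$ is trace preserving, $\Phi^*$ is unital: for every~$X$ we have~$\langle \Phi^*(\id), X\rangle = \langle \id, \Phi(X)\rangle = \trace(\Phi(X)) = \trace(X) = \langle \id, X\rangle$, whence~$\Phi^*(\id) = \id$. Combining the two, $\id - Q' \geq 0$ yields~$\Phi^*(\id - Q') = \id - Q \geq 0$, so~$Q \leq \id$, and therefore~$0 \leq Q \leq \id$.

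It then remains to transport the two linear constraints across the adjoint identity. By the defining property of~$\Phi^*$, we have~$\langle Q, \rho\rangle = \langle \Phi^*(Q'), \rho\rangle = \langle Q', \Phi(\rho)\rangle \geq 1-\epsilon$, so~$Q$ is feasible for the program defining~$\upbeta^{\epsilon}(\rho\|\sigma)$; and likewise~$\langle Q, \sigma\rangle = \langle Q', \Phi(\sigma)\rangle = \upbeta^{\epsilon}(\Phi(\rho)\|\Phi(\sigma))$. Since~$\upbeta^{\epsilon}(\rho\|\sigma)$ is the infimum over all feasible operators, it is at most~$\langle Q, \sigma\rangle$, which is exactly the claimed inequality. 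The only genuinely substantive step is establishing that~$\Phi^*$ is both positive and unital; everything else is bookkeeping with the adjoint relation. The point I would watch most carefully is that the feasibility level is preserved \emph{exactly}: the constraint threshold~$1-\epsilon$ is carried through the pullback with the same~$\epsilon$, so the two optimizations are compared at a matching smoothing parameter, which is what makes the bound meaningful.
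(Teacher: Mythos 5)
Your proof is correct. The paper itself does not prove Proposition~\ref{thm-DPI-beta} --- it only cites an external reference --- and your argument (pulling the optimal test $Q'$ back through the adjoint map $\Phi^{*}$, which is positive because $\Phi$ is completely positive and unital because $\Phi$ is trace preserving, so that $\Phi^{*}(Q')$ is feasible for the input-side program at the same threshold $1-\epsilon$ and with the same objective value) is precisely the standard operational argument: measuring $\{Q',\id-Q'\}$ after applying $\Phi$ is itself a valid test on $(\rho,\sigma)$. Nothing is missing.
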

The following two properties have been proved implicitly by Matthews and
Wehner~\cite{matthews_finite_2012}. For completeness, we include their
proofs in Appendix~\ref{App-proof of properties}. 

Hypothesis testing error satisfies a restricted form of joint convexity in its two arguments.

\begin{proposition} 
\label{thm-covexity of Be in p}
Let~$\rho_{AB}(p)\in\sD(\cH_{A}\otimes\cH_{B})$ be a state classical
on~$A$ such that the distribution on~$A$ is given by the probability
vector $p$. Let~$\rho_{A}(p)=\trace_{B}(\rho_{AB}(p))$,
and~$\sigma\in\sD(\cH_{B})$ be a quantum state on Hilbert
space $\cH_{B}$. Then the function $\upbeta^{\epsilon}(\rho_{AB}(p)
\,\|\, \rho_{A}(p)\otimes\sigma)$
is convex with respect to~$p$.
\end{proposition}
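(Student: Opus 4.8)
The plan is to exploit the classical structure on $A$ to turn $\upbeta^{\epsilon}$ into a concrete operator optimization, and then to make its convexity in $p$ manifest through a linearizing change of variables. Throughout, I would write $\rho_{AB}(p)=\sum_x p_x\density{x}_A\tensor\rho_x$ with the conditional states $\rho_x\in\sD(\cH_B)$ fixed (independent of $p$), so that $\rho_A(p)\tensor\sigma=\sum_x p_x\density{x}_A\tensor\sigma$, and abbreviate $\beta(p)\coloneqq\upbeta^{\epsilon}(\rho_{AB}(p)\,\|\,\rho_A(p)\tensor\sigma)$.

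First I would reduce the optimization in~\eqref{eq:def-betaEpsilon} to operators that are block diagonal on $A$. Both $\rho_{AB}(p)$ and $\rho_A(p)\tensor\sigma$ are fixed points of the self-adjoint pinching channel $\Phi(\cdot)=\sum_x(\density{x}_A\tensor\id)(\cdot)(\density{x}_A\tensor\id)$, which is unital and completely positive and hence maps the feasible set $\{0\leq Q\leq\id\}$ into itself. Replacing any feasible $Q$ by $\Phi(Q)=\sum_x\density{x}_A\tensor Q_x$, where $Q_x$ is the $x$-th diagonal block of $Q$ (so $0\leq Q_x\leq\id$), therefore leaves both $\trace(Q\rho_{AB}(p))$ and $\trace(Q(\rho_A(p)\tensor\sigma))$ unchanged. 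This yields the explicit form
\[
\beta(p)\quad=\quad\min\Bigl\{\,\textstyle\sum_x p_x\trace(Q_x\sigma)\ :\ 0\leq Q_x\leq\id\ \text{for all }x,\ \textstyle\sum_x p_x\trace(Q_x\rho_x)\geq 1-\epsilon\,\Bigr\}\enspace.
\]

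The crucial step is then the substitution $R_x\coloneqq p_x Q_x$. In the variables $R=(R_x)$ both the objective $\sum_x\trace(R_x\sigma)$ and the testing constraint $\sum_x\trace(R_x\rho_x)\geq 1-\epsilon$ become \emph{linear}, and all dependence on $p$ is pushed into the feasibility bounds $0\leq R_x\leq p_x\id$. I would then fix $\lambda\in[0,1]$, take distributions $p^{(0)},p^{(1)}$ with optimizers $R^{(0)},R^{(1)}$ (these exist since the infimum in~\eqref{eq:def-betaEpsilon} is always attained), and set $p\coloneqq\lambda p^{(0)}+(1-\lambda)p^{(1)}$ and $R\coloneqq\lambda R^{(0)}+(1-\lambda)R^{(1)}$. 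A one-line check shows $R$ is feasible for $p$: each $R_x\geq 0$, each $R_x\leq\lambda p^{(0)}_x\id+(1-\lambda)p^{(1)}_x\id=p_x\id$, and the linear testing inequality is preserved under convex combinations. Since the objective is linear in $R$, its value at this $R$ equals $\lambda\beta(p^{(0)})+(1-\lambda)\beta(p^{(1)})$, and minimality of $\beta(p)$ gives $\beta(p)\leq\lambda\beta(p^{(0)})+(1-\lambda)\beta(p^{(1)})$, which is the claim.

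The step I expect to require the most care is the legitimacy of the two reductions, rather than any hard estimate. For the pinching, I must confirm that $\Phi$ fixes both arguments (immediate from their block-diagonal form on $A$) so that feasibility and objective are genuinely unchanged; and for the substitution, I must handle the degenerate case $p_x=0$, where $0\leq R_x\leq p_x\id=0$ forces $R_x=0$, contributing nothing to either objective or constraint, so that the reparametrized program really coincides with the original for every $p$ (including on the boundary of the simplex). The entire content of the argument is that the map $R_x=p_x Q_x$ converts what is a bilinear, hence not visibly convex, dependence on $(p,Q)$ into a linear program whose feasible region is stable under joint convex combinations of $(p,R)$.
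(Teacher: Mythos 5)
Your proof is correct, but it takes a genuinely different route from the one in the paper. The paper introduces an auxiliary flag register $X$ and a channel that maps $\rho_{AB}(q)$, with $q=\lambda p_{0}+(1-\lambda)p_{1}$, to the mixture $\lambda\,\density{0}\tensor\rho_{AB}(p_{0})+(1-\lambda)\,\density{1}\tensor\rho_{AB}(p_{1})$; it then invokes the data-processing inequality for $\upbeta^{\epsilon}$ (Proposition~\ref{thm-DPI-beta}) in both directions to show that adding the flag changes nothing, and finally plugs the block test $\sum_{x}\density{x}\tensor Q_{x}$, assembled from the two optimal tests, into the definition of $\upbeta^{\epsilon}$. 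You instead stay on the original space: you pinch to restrict to tests block-diagonal on $A$ (a reduction the paper never needs, since its $Q_{x}$ are permitted to act on all of $AB$), and then linearize via $R_{x}=p_{x}Q_{x}$ so that $\upbeta^{\epsilon}(\rho_{AB}(p)\,\|\,\rho_{A}(p)\otimes\sigma)$ becomes the optimal value of a linear program whose feasible region is stable under joint convex combinations of $(p,R)$. Both arguments ultimately combine the two optimal tests into a single feasible test for the mixed distribution, but yours is more elementary --- it uses no data-processing inequality, only unitality and self-adjointness of the pinching map --- and it makes the convexity structurally transparent rather than a consequence of monotonicity. Your handling of the boundary case $p_{x}=0$ and your appeal to attainment of the infimum (noted in the paper after Equation~\eqref{eq:def-betaEpsilon}) are both in order.
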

Hypothesis testing error is concave in its second argument.

\begin{proposition}
\label{thm-concavity of Be in sigma}
For any fixed quantum state~$\rho\in\sD(\cH)$, the function
$\upbeta^{\epsilon}(\rho\|\sigma)$ is a concave function with 
respect to $\sigma$.
\end{proposition}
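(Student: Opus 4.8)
The plan is to exploit the fact that, for fixed~$\rho$ and~$\epsilon$, the quantity~$\upbeta^{\epsilon}(\rho\|\sigma)$ is an infimum over a family of functions each of which is \emph{linear} in~$\sigma$, with the index set of the family \emph{independent} of~$\sigma$. Concavity then follows from the standard fact that a pointwise infimum of concave (here, affine) functions is concave. The entire argument is thus structural and requires no explicit optimization.

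First I would fix~$\rho\in\sD(\cH)$ and~$\epsilon\in[0,1)$ and isolate the feasible set
\[
\cM \quad\coloneqq\quad \{Q\in\sL(\cH) : 0\leq Q\leq\id,\ \langle Q,\rho\rangle\geq1-\epsilon\}\enspace.
\]
The crucial observation is that~$\cM$ depends only on~$\rho$ and~$\epsilon$: the state~$\sigma$ enters the optimization defining~$\upbeta^{\epsilon}$ only through the objective~$\langle Q,\sigma\rangle=\trace(Q\sigma)$, never through a constraint. Hence we may write~$\upbeta^{\epsilon}(\rho\|\sigma)=\inf_{Q\in\cM}\langle Q,\sigma\rangle$, and for each fixed~$Q\in\cM$ the map~$\sigma\mapsto\langle Q,\sigma\rangle$ is linear, in particular concave.

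Next I would verify the defining inequality directly. Take~$\sigma_{1},\sigma_{2}\in\sD(\cH)$ and~$\lambda\in[0,1]$, and set~$\sigma=\lambda\sigma_{1}+(1-\lambda)\sigma_{2}$. For any~$Q\in\cM$, linearity of the objective gives
\[
\langle Q,\sigma\rangle \quad=\quad \lambda\,\langle Q,\sigma_{1}\rangle+(1-\lambda)\,\langle Q,\sigma_{2}\rangle \quad\geq\quad \lambda\,\upbeta^{\epsilon}(\rho\|\sigma_{1})+(1-\lambda)\,\upbeta^{\epsilon}(\rho\|\sigma_{2})\enspace,
\]
where the inequality uses~$\langle Q,\sigma_{i}\rangle\geq\inf_{Q'\in\cM}\langle Q',\sigma_{i}\rangle=\upbeta^{\epsilon}(\rho\|\sigma_{i})$ for each~$i$. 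Taking the infimum over~$Q\in\cM$ on the left-hand side preserves the inequality and yields~$\upbeta^{\epsilon}(\rho\|\sigma)\geq\lambda\,\upbeta^{\epsilon}(\rho\|\sigma_{1})+(1-\lambda)\,\upbeta^{\epsilon}(\rho\|\sigma_{2})$, which is exactly concavity.

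As for the main obstacle: honestly there is no hard step, since concavity of an infimum of affine maps is immediate once the problem is set up correctly. The single point that genuinely demands care is confirming that~$\cM$ is independent of~$\sigma$, so that the infimum really is taken over a family of affine functions sharing a common domain; this is what makes the interchange of the infimum with the convex combination legitimate. Note also that attainment of the infimum (guaranteed, as remarked after the definition of~$\upbeta^{\epsilon}$) is convenient but not logically needed for the concavity argument, since the inequality above holds for the infimum regardless of whether it is achieved.
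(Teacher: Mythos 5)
Your proof is correct and follows essentially the same route as the paper's: both arguments rest on the observation that the feasible set of measurement operators depends only on~$\rho$ and~$\epsilon$, so~$\upbeta^{\epsilon}(\rho\|\sigma)$ is an infimum of functions affine in~$\sigma$. The only cosmetic difference is that the paper invokes attainment of the infimum at the convex combination, whereas you work with the infimum directly, which as you note is not needed.
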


It turns out that hypothesis testing relative entropy is closely 
related to smooth max-relative entropy, as captured by the following
theorem.
\begin{theorem}[\cite{dupuis_generalized_2013,tomamichel_hierarchy_2013}]
\label{thm-Dmax-Dh} 
Let~$\rho,\sigma\in\sD(\cH)$ be two quantum states in Hilbert space~$\cH$.
For any~$\epsilon \in (0,1)$ and~$\delta \in (0,\epsilon)$, 
the following inequalities hold:
\begin{align}
\label{eq:Dmax<Dh}
\Dmax^{\sqrt{2(1-\epsilon)}}(\rho\|\sigma) 
    \quad \leq & \quad \Dh^{\epsilon}(\rho\|\sigma) \enspace ,
        \qquad \textrm{and} \\
\label{eq:Dmax>Dh}
\Dmax^{\sqrt{1-\epsilon}}(\rho\|\sigma) 
    \quad \geq & \quad \Dh^{\epsilon-\delta}(\rho\|\sigma)
        - \log \frac{\epsilon(1-\epsilon+\delta)}{\delta^{3}}
        - 3\log3 \enspace .
\end{align}
\end{theorem}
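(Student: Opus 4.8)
The plan is to treat the two inequalities separately, in each case passing explicitly between an (approximately) optimal hypothesis test and an (approximately) optimal smoothing state; these are the two objects underlying $\Dh^{\epsilon}$ and $\Dmax^{\epsilon}$ respectively. Since the statement is attributed to Refs.~\cite{dupuis_generalized_2013,tomamichel_hierarchy_2013}, I would either cite it directly or reconstruct the two standard constructions below, tracking the fidelity estimates to recover the stated constants.

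For~\eqref{eq:Dmax<Dh}, set $\lambda \coloneqq \Dh^{\epsilon}(\rho\|\sigma)$ and $G \coloneqq 2^{\lambda}\sigma$, and let $Q$ attain the infimum defining $\upbeta^{\epsilon}(\rho\|\sigma)$, so that $0\le Q\le\id$, $\trace(Q\rho)\ge 1-\epsilon$, and $\trace(Q\sigma)=\upbeta^{\epsilon}(\rho\|\sigma)=(1-\epsilon)2^{-\lambda}$. I would then form the truncation $\tilde\rho \coloneqq G^{1/2}\,\min\!\left(G^{-1/2}\rho\,G^{-1/2},\,\id\right)G^{1/2}$ (restricted to $\support(\sigma)$, using generalized inverses). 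By functional calculus $\min(\cdot,\id)\le\id$, so $\tilde\rho\le G=2^{\lambda}\sigma$, which gives $\Dmax(\tilde\rho\|\sigma)\le\lambda$ at no cost; moreover $\tilde\rho\le\rho$. It remains to certify $\tilde\rho\in\sB^{\sqrt{2(1-\epsilon)}}(\rho)$. Because $\tilde\rho\le\rho$ one has $\rF(\rho,\tilde\rho)\ge\trace(\tilde\rho)$ (using operator monotonicity of the square root and Proposition~\ref{eq:fidelity-trace}), and the mass removed is supported where $\rho$ exceeds $2^{\lambda}\sigma$, which is exactly the region the optimal test rejects; lower-bounding $\trace(\tilde\rho)$ through $\trace(Q\rho)\ge 1-\epsilon$ and converting the resulting fidelity estimate into purified distance via $\rP=\sqrt{1-\rF^{2}}$ then yields the stated smoothing radius, the precise constant $\sqrt{2(1-\epsilon)}$ being recovered from the fidelity bookkeeping (Proposition~\ref{eq:triangle-ineq-fidelity}) of the cited works.

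For~\eqref{eq:Dmax>Dh} I would run the conversion in reverse. Let $\mu \coloneqq \Dmax^{\sqrt{1-\epsilon}}(\rho\|\sigma)$ be attained by some $\tilde\rho\in\sB^{\sqrt{1-\epsilon}}(\rho)$ with $\tilde\rho\le 2^{\mu}\sigma$, and build a test from $\tilde\rho$ itself, namely a threshold projector $Q_{\tau}\coloneqq\{\tilde\rho\ge\tau\sigma\}$. On its range $\tilde\rho\ge\tau\sigma$, so $\trace(Q_{\tau}\sigma)\le\tau^{-1}\trace(Q_{\tau}\tilde\rho)\le\tau^{-1}$, while $\rP(\rho,\tilde\rho)\le\sqrt{1-\epsilon}$ lets one transfer acceptance probability from $\tilde\rho$ to $\rho$ at a cost controlled by $\delta$, giving $\trace(Q_{\tau}\rho)\ge 1-(\epsilon-\delta)$. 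A single threshold is insufficient because the spectrum of $G^{-1/2}\tilde\rho\,G^{-1/2}$ is confined to $[0,2^{\mu}]$ and may concentrate just below the cut; the fix is to partition this range into a geometric grid of width set by $\delta$ and to select, by a pigeonhole argument, a slab carrying enough weight. This discretization is the main obstacle, and it is precisely where the additive corrections $-\log\frac{\epsilon(1-\epsilon+\delta)}{\delta^{3}}$ and $-3\log 3$ are incurred; feeding the chosen test into the definition of $\Dh^{\epsilon-\delta}$ then completes the bound.
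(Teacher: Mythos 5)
The paper itself offers no proof of Theorem~\ref{thm-Dmax-Dh}: it is imported verbatim from the cited references, so your reconstruction can only be measured against the arguments there. For~\eqref{eq:Dmax<Dh} you are in the right family of arguments (construct the smoothing state by truncating $\rho$ against $2^{\lambda}\sigma$), but the one step that actually produces the constant is the step you defer. Two concrete issues. First, the trace removed by the conjugated truncation, $\trace\bigl(G^{1/2}(G^{-1/2}\rho G^{-1/2}-\id)_{+}G^{1/2}\bigr)$, \emph{dominates} $\trace\bigl((\rho-G)_{+}\bigr)$ rather than being dominated by it (the operator is positive semidefinite and upper-bounds $\rho-G$), so controlling the mass of ``the region where $\rho$ exceeds $2^{\lambda}\sigma$'' does not by itself control $\trace(\rho-\tilde\rho)$. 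Second, even for that quantity, feasibility of the projector onto the positive part of $\rho-2^{\lambda}\sigma$ only yields $\trace\bigl((\rho-2^{\lambda}\sigma)_{+}\bigr)\le\max(\epsilon,1-\epsilon)$, which fails to give the radius $\sqrt{2(1-\epsilon)}$ precisely in the regime $\epsilon$ close to $1$ where the paper invokes the inequality (Theorem~\ref{thm-worst-case lb} uses it with $\epsilon$ replaced by $1-\epsilon^{2}-\delta$). Getting the removed mass down to $1-\epsilon$ (whence $\rF\ge\trace\tilde\rho\ge\epsilon$ and $1-\epsilon^{2}\le 2(1-\epsilon)$) requires either the Neyman--Pearson form of the optimal test, whose threshold $t$ satisfies $t\,\upbeta^{\epsilon}(\rho\|\sigma)\le\trace(Q\rho)=1-\epsilon$ and hence $t\le2^{\lambda}$, or the dual SDP of $\upbeta^{\epsilon}$. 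Also, Proposition~\ref{eq:fidelity-trace} states $\rF(\rho,\sigma)^{2}\le\trace(\sigma)$ and cannot justify $\rF(\rho,\tilde\rho)\ge\trace(\tilde\rho)$; that needs the separate fact that $\tilde\rho=\sqrt{\rho}\,X\sqrt{\rho}$ with $0\le X\le\id$ implies $\rF(\rho,\tilde\rho)\ge\trace(X\rho)$.

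For~\eqref{eq:Dmax>Dh} the argument points in the wrong direction. The inequality rearranges to an \emph{upper} bound $\Dh^{\epsilon-\delta}(\rho\|\sigma)\le\Dmax^{\sqrt{1-\epsilon}}(\rho\|\sigma)+\log\frac{\epsilon(1-\epsilon+\delta)}{\delta^{3}}+3\log 3$, i.e.\ a \emph{lower} bound on the infimum $\upbeta^{\epsilon-\delta}(\rho\|\sigma)$ over \emph{all} feasible tests. Exhibiting one threshold test $Q_{\tau}$ built from $\tilde\rho$ and ``feeding it into the definition of $\Dh^{\epsilon-\delta}$'' can only upper-bound that infimum and hence \emph{lower}-bound $\Dh^{\epsilon-\delta}$; it proves a statement of the flavour of~\eqref{eq:Dmax<Dh}, not~\eqref{eq:Dmax>Dh}. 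The correct route takes an \emph{arbitrary} $Q$ with $0\le Q\le\id$ and $\trace(Q\rho)\ge 1-\epsilon+\delta$, uses $\trace(Q\sigma)\ge 2^{-\mu}\trace(Q\tilde\rho)$ where $\mu=\Dmax^{\sqrt{1-\epsilon}}(\rho\|\sigma)$ is attained by $\tilde\rho\le 2^{\mu}\sigma$ with $\rF(\rho,\tilde\rho)\ge\sqrt{\epsilon}$, and lower-bounds $\trace(Q\tilde\rho)$ via monotonicity of fidelity under the two-outcome measurement $(Q,\id-Q)$, giving $\trace(Q\tilde\rho)\ge(\sqrt{\epsilon}-\sqrt{\epsilon-\delta})^{2}\ge\delta^{2}/(4\epsilon)$; no threshold grid or pigeonhole is needed for this direction.
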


\subsection{The minimax theorem}

The minimax theorem is a powerful result that provides conditions under
which switching the order of minimization and maximization in certain
optimization problems does not change the optimum.

\begin{theorem}[\cite{osborne_course_1994}]
\label{thm-minimax theorem}
Let~$n$ be a positive integer, and~$A_{1}, A_{2}$ be non-empty, 
convex and compact subsets of~$\reals^{n}$.
Let $f:A_{1}\times A_{2}\rightarrow\reals$
be a continuous function such that
\begin{enumerate}
\item $\forall a_{2}\in A_{2}$, the set $\{a_{1}\in A_{1} :
(\forall a'_{1}\in A_{1}) \; f(a_{1},a_{2})\geq f(a'_{1},a_{2})\}$
is convex.
\item $\forall a_{1}\in A_{1}$, the set $\{a_{2}\in A_{2}:(\forall
a'_{2}\in A_{2}) \; f(a_{1},a_{2})\leq f(a_{1},a'_{2})\}$
is convex.
\end{enumerate}
Then 
\[
\max_{a_{1}\in A_{1}}\min_{a_{2}\in A_{2}}f(a_{1},a_{2})\quad=\quad\min_{a_{2}\in A_{2}}\max_{a_{1}\in A_{1}}f(a_{1},a_{2})\enspace.
\]

\end{theorem}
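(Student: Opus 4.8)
The plan is to prove the two inequalities separately. The inequality $\max_{a_1\in A_1}\min_{a_2\in A_2} f \leq \min_{a_2\in A_2}\max_{a_1\in A_1} f$ (weak duality) holds with no hypotheses whatsoever: for every $a_1\in A_1$ and $a_2\in A_2$ we have $\min_{a_2'\in A_2} f(a_1,a_2') \leq f(a_1,a_2) \leq \max_{a_1'\in A_1} f(a_1',a_2)$, and taking the maximum over $a_1$ on the left and the minimum over $a_2$ on the right yields the claim. The entire content of the theorem is therefore the reverse inequality, and I would establish it by exhibiting a \emph{saddle point}, i.e.\ a pair $(a_1^*,a_2^*)$ with $f(a_1^*,a_2^*) = \max_{a_1\in A_1} f(a_1,a_2^*) = \min_{a_2\in A_2} f(a_1^*,a_2)$. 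Given such a pair, $\min_{a_2}\max_{a_1} f \leq \max_{a_1} f(a_1,a_2^*) = f(a_1^*,a_2^*) = \min_{a_2} f(a_1^*,a_2) \leq \max_{a_1}\min_{a_2} f$, which closes the gap and, combined with weak duality, gives equality.

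To produce a saddle point I would set up best-response correspondences and invoke Kakutani's fixed point theorem. Define, for each $a_2\in A_2$, the set $B_1(a_2)\subseteq A_1$ of maximizers of $a_1\mapsto f(a_1,a_2)$, and for each $a_1\in A_1$, the set $B_2(a_1)\subseteq A_2$ of minimizers of $a_2\mapsto f(a_1,a_2)$. By continuity of $f$ and compactness of $A_1,A_2$ (Weierstrass), each of these sets is nonempty; being the preimage of an extremal value under a continuous function, each is closed, hence compact. By hypotheses~1 and~2 respectively, $B_1(a_2)$ and $B_2(a_1)$ are convex. This is exactly where the two hypotheses enter, and it is worth stressing that they are strictly weaker than quasi-concavity of $f(\cdot,a_2)$ and quasi-convexity of $f(a_1,\cdot)$: a level-set (KKM) style proof would require \emph{all} superlevel/sublevel sets to be convex, whereas Kakutani needs only the top level set (the set of maximizers/minimizers) to be convex, which is precisely what is assumed.

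Next I would form the product set-valued map $B : A_1\times A_2 \rightrightarrows A_1\times A_2$ given by $B(a_1,a_2) \coloneqq B_1(a_2)\times B_2(a_1)$, defined on the nonempty compact convex set $A_1\times A_2 \subseteq \reals^{2n}$; its values are nonempty, convex and compact. The only remaining Kakutani hypothesis is that $B$ has a closed graph, which I would verify directly: if $(a_1^{(k)},a_2^{(k)})\to(a_1,a_2)$ with $a_1^{(k)}\in B_1(a_2^{(k)})$, then $f(a_1^{(k)},a_2^{(k)})\geq f(a_1',a_2^{(k)})$ for every $a_1'\in A_1$, and passing to the limit using continuity of $f$ gives $f(a_1,a_2)\geq f(a_1',a_2)$ for all $a_1'$, so $a_1\in B_1(a_2)$; the argument for $B_2$ is symmetric. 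Kakutani's theorem then yields a fixed point $(a_1^*,a_2^*)\in B(a_1^*,a_2^*)$, which means $a_1^*\in B_1(a_2^*)$ and $a_2^*\in B_2(a_1^*)$ — precisely the saddle-point condition above. The main obstacle is largely organizational: assembling the Kakutani hypotheses correctly and recognizing that a fixed point of the joint best-response map \emph{is} a saddle point. The one genuinely delicate verification is the closed-graph property, and it is there that continuity of $f$ together with compactness of $A_1$ and $A_2$ are indispensable.
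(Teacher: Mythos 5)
Your proof is correct; the paper itself offers no proof of this theorem, citing it directly to the textbook~\cite{osborne_course_1994}, and your argument---weak duality plus a saddle point obtained from Kakutani's fixed point theorem applied to the joint best-response correspondence, with hypotheses~1 and~2 supplying exactly the convexity of the best-response sets---is precisely the standard proof given in that source. In particular, you correctly identify that the stated hypotheses are weaker than quasi-concavity/quasi-convexity of $f$, and your closed-graph verification is the genuinely load-bearing step and is carried out correctly.
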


\subsection{Remote state preparation}
\label{part:Approximate-remote-state}
 
Let~$S$ be a finite, non-empty set, and
let~$Q:S\rightarrow\sD(\cH)$ be a function that maps each element~$x\in
S$ to a quantum state~$Q(x)$ over the Hilbert space~$\cH$. Recall that
remote state preparation, denoted as~$\RSP(S,Q)$, is a communication
task in which one party, Alice, is given an input~$x\in S$, and engages
in an LOCC protocol with another party, Bob, so that Bob is able to
prepare~$Q(x)$. The function~$Q$ is known to both parties. In the
approximate remote state preparation, we allow Bob to prepare an
approximation~$\sigma_x \in \sD(\cH)$ to~$Q(x)$. We consider two notions of error in approximation: worst case and average case. Let~$\epsilon \in [0,1]$, and let~$p$ be a probability distribution on~$S$.
We say a protocol for~$\RSP(S,Q)$ makes worst-case error~$\epsilon$
if~$\rP(\sigma_x, Q(x))\leq \epsilon$ for each~$x\in S$. We say a
protocol for~$\RSP(S,Q)$ makes average-case error~$\epsilon$ w.r.t.\ the
distribution~$p$ over~$S$ if the purified distance between the ideal and
actual joint input-output states is at most~$\epsilon$. By the
definitions of purified distance and fidelity, this condition is
equivalent to
\[
\sum_{x\in S} p_x \, \rF(\sigma_x,Q(x)) 
    \quad \geq \quad \sqrt{1 - \epsilon^2} \enspace.
\]

In Sections~\ref{sec:Average-case-error-communication} and~\ref{sec:Worst-case-error-communication}, we characterize the communication complexity of this
problem for the two different kinds of approximation. We emphasize that Alice and Bob communicate with
a noiseless classical channel, they have access to an arbitrarily large
amount of entanglement of their choice,
and they have unlimited computational power. 

A straightforward protocol for approximate remote state preparation is
as follows. Alice sends her input~$x$ directly to Bob and Bob creates
the desired state~$Q(x)$. Thus Bob prepares the target state with zero error
($\epsilon=0$) using~$\lceil\log(n+1)\rceil$
bits of classical communication, where~$n=\abs{S}$. 

Jain, Radhakrishnan, and Sen~\cite{jain_privacy_2002,jain_prior_2005}
proposed the following, potentially more efficient protocol, which we
call the JRS protocol in the sequel.
Let~$\cK$ be a Hilbert space with~$\dim(\cK)\geq \dim(\cH)$ and~$\{\sigma_{x}\}_{x\in S}\subseteq\sD(\cH)$
be a set of quantum states such that for all~$x\in S$,~$\rP(\sigma_{x},Q(x))\leq \delta$ for some~$\delta\in[0,1]$.
Suppose that for some~$\lambda\in[0,\infty)$ and some~$\sigma\in\sD(\cH)$,
we have 
\begin{equation}
\sigma_{x}\quad\leq\quad2^{\lambda}\sigma\quad\quad\mathrm{for\: all}\quad x\in S\enspace.\label{eq:RSP protocol}
\end{equation}
This can be rewritten for a fixed~$x \in S$ as
$$\sigma \quad = \quad 2^{-\lambda}\sigma_x+(1-2^{-\lambda})\xi_x\enspace,$$
where~$\xi_x\in\sD(\cH)$ is a quantum state. Let~$\ket{v_{x}}\in\cK\tensor\cH$ be a purification of~$\sigma_x$ in the Hilbert space~$\cK\otimes\cH$, and~$\ket{u_{x}}\in\cK\tensor\cH$ be a purification of~$\xi_x$. Then
\[
\ket{w_{x}}\quad=\quad\sqrt{2^{-\lambda}}\:\ket{0}\ket{v_{x}}\quad+\quad\sqrt{1-2^{-\lambda}}\:\ket{1}\ket{u_{x}}\enspace,
\]
is a purification of~$\sigma$. 
Let~$\ket{w}$ be an arbitrary but fixed purification of~$\sigma$
in~$\complex^{2}\tensor\cK\tensor\cH$. By the unitary equivalence of
purifications, there is a unitary operation~$U_{x}$ on
the space~$\complex^{2}\tensor\cK$ which transforms~$\ket{w}$ to~$\ket{w_{x}}$.
We are ready to describe the JRS protocol.

\textbf{JRS Protocol:}
Alice and Bob agree on a parameter~$t$, that depends on the quality of
approximation they desire. Initially, Alice and Bob share~$t$ copies 
of the quantum state~$\ket{w}$. The registers corresponding to Hilbert
spaces~$\complex^{2}$ and~$\cK$ in the~$i$th copy of~$\ket{w}$ are
called~$C_i$ and~$K_i$, respectively, and are held by Alice. The register
corresponding to the Hilbert space~$\cH$ is called~$H_i$ and is held by Bob.
\begin{enumerate}

\item
On getting input~$x$, Alice performs the unitary operation~$U_{x}$ on
registers~$C_i K_i$ for each~$i \in [t]$. This transforms all copies
of~$\ket{w}$ to copies of~$\ket{w_{x}}$. Then she measures
the register~$C_i$ for all~$i \in [t]$.
If at least one of the measurement outcomes, say the~$j$th,
is equal to zero, she sends the index~$j$ to Bob,
using~$\ceil{\log(t+1)}$ bits. (She may choose to
send any such index.)
Otherwise, if the outcomes of all~$t$ measurements are equal to one,
she sends~$0$ to Bob.

\item On receiving an integer~$k$, where~$0 \le k \le t$, 
Bob outputs the state in register~$H_k$ if~$k
\in [t]$, and outputs the maximally mixed state over~$\cH$ if~$k = 0$.

\end{enumerate}

The output of this protocol is~$\frac{\id}{\abs{H}}$ with
probability~$\left(1-2^{-\lambda} \right)^{t}$ and~$\sigma_{x}$ with the
remaining probability.
Hence, the output state is 
\[
\tilde{\sigma}_{x} \quad = \quad
    \left( 1 - \left( 1 - 2^{-\lambda} \right)^{t} \right) \; \sigma_{x}
    + \left( 1 - 2^{-\lambda} \right)^{t} \; \frac{\id}{\abs{H}} \enspace.
\]
By choosing the approximation parameter~$\delta$ small enough 
and~$t$ large enough, Bob produces a  state~$\tilde{\sigma}_{x}$ with
the desired accuracy. We use this protocol
to give upper bounds on the worst-case error and average-case communication
complexity of~$\RSP(S,Q)$.

\section{Average-case communication complexity}
\label{sec:Average-case-error-communication}

Let~$p$ be a probability distribution over~$S$ and~$\sQ_{p}^{*}(\RSP(S,Q),\epsilon)$
denote the average-case entanglement-assisted communication complexity
of approximate remote state preparation (ARSP), with respect to~$p$, and with
(average) error at most~$\epsilon$. We characterize this quantity in terms of smooth max-information, a one-shot analogue of mutual information.

\subsection{An upper bound}

First, we show that the average-case communication complexity with
error~$\epsilon$ of ARSP is bounded above essentially by~$\Imax^{\delta}(A:B)_{\rho(p)}$,
where~$\rho(p)$ is the ideal joint state of Alice's input and Bob's
output, and~$\delta\in \Theta(\epsilon)$. To do so, we use the JRS
protocol described in Section~\ref{part:Approximate-remote-state}.

\begin{theorem} \label{thm-Ub} For any finite set~$S$, function~$Q:S\rightarrow\sD(\cH)$,
and~$\epsilon\in(0,1]$, let~$p$ be a probability distribution
over~$S$ and~$\rho_{AB}(p)\in\sD(\cH'\tensor\cH)$ be the bipartite
classical-quantum state~$\rho_{AB}(p)=\sum_{x\in S}{p_{x}\density{x}_{A}\tensor Q(x)_{B}}$.
Then 
\[
\sQ_{p}^{*}(\RSP(S,Q),\epsilon)\quad\leq\quad\Imax^{\delta}(A:B)_{\rho(p)}+\log_{2}
\ln\frac{8}{\epsilon^{2}} + 2\enspace,
\]
where~$\delta=\epsilon/2\sqrt{2}$.
\end{theorem}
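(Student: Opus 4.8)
The plan is to instantiate the JRS protocol with a carefully chosen witness state and then account for the three sources of error that the protocol introduces: the smoothing of~$\rho_{AB}(p)$, the approximation parameter~$\delta$ in the purifications, and the ``failure'' probability~$(1-2^{-\lambda})^t$ with which the protocol outputs the maximally mixed state. The communication cost of the JRS protocol is exactly~$\ceil{\log(t+1)}$ bits, so the whole task reduces to choosing~$\lambda$ and~$t$ just large enough that the composite error stays below the target~$\epsilon$, while keeping~$\lambda$ essentially equal to~$\Imax^{\delta}(A:B)_{\rho(p)}$ and~$\ceil{\log(t+1)}$ essentially equal to the additive~$\log_2 \ln(8/\epsilon^2) + 2$ term.

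First I would apply Proposition~\ref{thm-classical-smi} to the classical-quantum state~$\rho_{AB}(p)$: for the smoothing radius~$\delta = \epsilon/(2\sqrt 2)$ there is a state~$\rho'_{AB}\in\sB^{\delta}(\rho_{AB}(p))$, classical on~$A$, achieving~$\Imax^{\delta}(A:B)_{\rho(p)} = \Imax(A:B)_{\rho'}$. Write~$\rho'_{AB} = \sum_x p_x \density{x}_A \tensor \sigma_x$ for some sub-normal~$\sigma_x$ (one can argue the optimizer may be taken with the same classical part on~$A$), so that each~$\rho(\sigma_x,Q(x))$-type fidelity is controlled by the purified distance bound coming from~$\rho' \in \sB^{\delta}$. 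By the definition of~$\Imax$ there is a~$\sigma\in\sD(\cH)$ with~$\Dmax(\rho'_{AB}\|\rho'_A\tensor\sigma) = \Imax(A:B)_{\rho'}$; setting~$\lambda \coloneqq \Imax^{\delta}(A:B)_{\rho(p)}$ this yields exactly the operator inequality~$\sigma_x \le 2^{\lambda}\sigma$ for all~$x$ that the JRS protocol requires in~\eqref{eq:RSP protocol}. This~$\lambda,\sigma,\{\sigma_x\}$ are precisely the inputs to the protocol.

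Next I would bound the fidelity of the protocol's output~$\tilde\sigma_x = (1-(1-2^{-\lambda})^t)\sigma_x + (1-2^{-\lambda})^t \frac{\id}{\abs{H}}$ against the ideal~$Q(x)$, then average over~$p$. Using joint concavity of fidelity, the average-case fidelity~$\sum_x p_x \rF(\tilde\sigma_x, Q(x))$ splits into a ``good'' contribution weighted by~$1-(1-2^{-\lambda})^t$, which is close to~$\sum_x p_x \rF(\sigma_x,Q(x)) \ge \sqrt{1-\delta^2}$ by the smoothing, and a small ``failure'' contribution. The failure weight satisfies~$(1-2^{-\lambda})^t \le e^{-t\,2^{-\lambda}}$; I would choose~$t \approx 2^{\lambda}\ln(8/\epsilon^2)$ so this is at most~$\epsilon^2/8$, making~$\ceil{\log(t+1)} \le \lambda + \log_2\ln(8/\epsilon^2) + 2$, which is the claimed communication bound once the rounding is absorbed into the additive constant. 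A short calculation combining the~$\sqrt{1-\delta^2}$ bound with the~$\epsilon^2/8$ failure term and~$\delta = \epsilon/(2\sqrt 2)$ should then certify total average-case error at most~$\epsilon$, i.e.~$\sum_x p_x \rF(\tilde\sigma_x,Q(x)) \ge \sqrt{1-\epsilon^2}$.

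The main obstacle I expect is the error bookkeeping in the last step: the purified distance is not additive, so I cannot simply add~$\delta$ and the failure contribution. The clean way is to pass everything through fidelity (which behaves well under the convex mixture defining~$\tilde\sigma_x$ and under averaging via joint concavity) rather than through purified distance directly, and only convert back to~$\rP \le \epsilon$ at the very end. I would have to verify that the quadratic relationship~$\rP = \sqrt{1-\rF^2}$ together with the specific constant~$\delta = \epsilon/(2\sqrt 2)$ leaves exactly enough slack to absorb the~$\epsilon^2/8$ failure term; this is where the particular numerical constants in the statement (the~$8$, the~$2\sqrt 2$, the~$+2$) are pinned down, and getting them to line up is the delicate part of the argument.
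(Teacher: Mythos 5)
Your overall strategy coincides with the paper's: instantiate the JRS protocol with the smoothing optimizer from Proposition~\ref{thm-classical-smi}, choose~$t=\lceil 2^{\lambda}\ln(8/\epsilon^2)\rceil$ so that the failure weight~$(1-2^{-\lambda})^t\le e^{-t2^{-\lambda}}\le\epsilon^2/8$, and read off the communication cost~$\lceil\log(t+1)\rceil\le\lambda+\log_2\ln(8/\epsilon^2)+2$. The gap is your parenthetical claim that the optimizer~$\rho'_{AB}\in\sB^{\delta}(\rho_{AB})$ ``may be taken with the same classical part on~$A$,'' i.e.\ $\rho'_{AB}=\sum_x p_x\density{x}\tensor\sigma_x$. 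Proposition~\ref{thm-classical-smi} only yields a classical-on-$A$ optimizer~$\rho'_{AB}=\sum_x q_x\density{x}\tensor\sigma^x_B$ for \emph{some} distribution~$q$, which need not equal~$p$: the smoothing ball allows the marginal on~$A$ to move, and restricting the minimization to states with marginal exactly~$p$ could increase the value of~$\Imax^{\delta}$ (while replacing~$q$ by~$p$ after the fact produces a state only within purified distance about~$\sqrt{2}\,\delta$ of~$\rho_{AB}$, not~$\delta$). Everything downstream of that claim --- in particular the assertion~$\sum_x p_x\,\rF(\sigma_x,Q(x))\ge\sqrt{1-\delta^2}$ ``by the smoothing'' --- therefore does not follow as written.

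The paper closes exactly this gap. It keeps~$q\ne p$, runs the protocol on the true input distribution~$p$ using the states~$\sigma^x_B\le 2^{\lambda}\sigma_B$, and bounds
$\rF(\tilde\rho_{AB},\rho'_{AB})\ge\bigl(1-(1-2^{-\lambda})^t\bigr)\sum_x\sqrt{p_xq_x}$,
where~$\sum_x\sqrt{p_xq_x}=\rF(\rho'_A,\rho_A)\ge\rF(\rho'_{AB},\rho_{AB})\ge\sqrt{1-\delta^2}$ by monotonicity of fidelity under partial trace. It then converts closeness to~$\rho'_{AB}$ into closeness to the target~$\rho_{AB}$ via Proposition~\ref{eq:triangle-ineq-fidelity}, $\rF(\tilde\rho,\rho)\ge\rF(\tilde\rho,\rho')^2+\rF(\rho,\rho')^2-1$, which is precisely the ``triangle-inequality substitute'' your final paragraph anticipates needing; with~$\delta=\epsilon/(2\sqrt2)$ and failure weight~$\epsilon^2/8$ this gives~$\rF(\tilde\rho,\rho)\ge 1-\epsilon^2/2\ge\sqrt{1-\epsilon^2}$. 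You could alternatively salvage your own route by deducing~$\sum_x p_x\,\rF(\sigma^x_B,Q(x))\ge 1-\delta^2$ from~$\sum_x\sqrt{p_xq_x}\,\rF(\sigma^x_B,Q(x))\ge\sqrt{1-\delta^2}$ via Cauchy--Schwarz, but as it stands that step is missing and the argument is incomplete.
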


\begin{proof} Fix some~$\epsilon\in(0,1]$, and let~$\lambda$ be
equal to~$\Imax^{\delta}(A:B)_{\rho(p)}$ with~$\delta$ as in the statement of the theorem. By Proposition~$\ref{thm-classical-smi}$,
there exist quantum states~$\rho'_{AB}\in\sB^{\delta}(\rho_{AB})$ and~$\sigma_{B}\in\sD(\cH)$
such that~$\rho'_{AB}\leq2^{\lambda}\rho'_{A}\tensor\sigma_{B}\enspace,$where~$\rho'_{AB}=\sum_{x}q_{x}\ketbra{x}{x}\tensor\sigma_{B}^{x}$
with~$\sum_{x}q_{x}=1$ and~$\sigma_{B}^{x}\in\sD(\cH)$, and~$\rho'_{A}=\sum_{x}q_{x}\ketbra{x}{x}$.
Then
\begin{equation}
\sigma_{B}^{x}\quad\leq\quad 2^{\lambda}\sigma_{B}\enspace,\label{eq:ineqforUpperbounAvg}
\end{equation}
for all~$x\in S$ with~$q_x\neq0$. For each~$x\in S$ with
$q_{x}=0$, we assume, w.l.o.g., that~$\sigma_{B}^{x}=\sigma_{B}$.
Inequality~\eqref{eq:ineqforUpperbounAvg} is in the form of inequality~\eqref{eq:RSP protocol} and therefore
we may execute the JRS protocol with a suitable choice of parameter~$t$.
Initially, Alice and Bob share~$t$ copies of entangled state~$\ket{w}$,
where~$\ket{w}$ is a purification of~$\sigma_B$.
Alice gets input~$x$ with probability~$p_x$.
They perform the protocol for approximating state~$\sigma_B^x$ from~$\sigma_B$. The final joint state of Alice's input and Bob's output is 
\[
\tilde{\rho}_{AB}\quad=\quad\sum_{x\in S}p_{x}\density{x}\otimes\tilde{\sigma}_{B}^{x}\enspace,
\]
where
\[
\tilde{\sigma}_{B}^{x} \quad = \quad 
    \left( 1 - \left( 1 - 2^{-\lambda} \right)^{t} \right)
    \, \sigma_{B}^{x} + \left(1-2^{-\lambda} \right)^{t} \frac{\id}{\dim(\cH)}
    \enspace.
\]
Therefore, 
\begin{align*}
\rF(\tilde{\rho}_{AB},\rho'_{AB})\quad= & \quad\rF \! \left(\sum_{x\in
S}p_{x}\density{x}\otimes\tilde{\sigma}_{B}^{x}, \; \sum_{x\in S}q_{x}\density{x}\otimes\sigma_{B}^{x}\right)\\
\geq & \quad\big(1-(1-2^{-\lambda})^{t}\big) \; \rF \! \left(\sum_{x\in
S}p_{x}\density{x}\otimes\sigma_{B}^{x}, \; \sum_{x\in S}q_{x}\density{x}\otimes\sigma_{B}^{x}\right)\\
= & \quad\big(1-(1-2^{-\lambda})^{t}\big)\sum_{x\in S}\sqrt{p_{x}q_{x}}\\
\geq & \quad\left(1-(1-2^{-\lambda})^{t}\right)\sqrt{1-\delta^{2}}\enspace,
\end{align*}
where the first inequality follows from the joint concavity of fidelity.
The last inequality follows from monotonicity under quantum
channels:
\begin{align*}
\sum_{x\in S}\sqrt{p_{x}q_{x}}\quad= & \quad\rF(\rho'_{A},\rho_{A})\quad
\geq \quad\rF(\rho'_{AB},\rho_{AB})\enspace.
\end{align*}
In addition, by Proposition~\ref{eq:triangle-ineq-fidelity}, 
\begin{align*}
\rF(\tilde{\rho}_{AB},\rho_{AB})\quad\geq & \quad\rF(\tilde{\rho}_{AB},\rho'_{AB})^{2}+\rF(\rho_{AB},\rho'_{AB})^{2}-1\\
\geq & \quad\big(1-(1-2^{-\lambda})^{t}\big)^2(1-\delta^{2})+(1-\delta^{2})-1\\
\geq & \quad\sqrt{1-\epsilon^{2}}\enspace,
\end{align*}
where the last inequality is derived using inequalities~$\ln(1-x)\leq-x$
and~$\sqrt{1-x} \leq 1-\frac{x}{2}$, which hold for~$x\in[0,1)$, and the
parameter values~$\delta=\epsilon/2\sqrt{2}$ and~$t=\ceil{2^\lambda \ln
\tfrac{8}{\epsilon^2}}$.
Since~$\rF(\tilde{\rho}_{AB},\rho_{AB})=\sum_{x\in S}p_x \, \rF(\tilde{\sigma}_x, Q(x))$, the protocol
has average-case error at most~$\epsilon$. 

The communication cost of this protocol is~$\lceil \log(t+1) \rceil$. So the communication complexity of approximate
remote state preparation with average-case error~$\epsilon$ is
\begin{align*}
\sQ_{p}^*(\RSP(S,Q),\epsilon)\quad \leq & \quad \left\lceil \log(t+1)
\right\rceil \quad \leq \quad \lambda+\log_{2} \ln \frac{8}{\epsilon^2} +2\enspace,
\end{align*}
as required.
\end{proof}

We have not attempted to optimize the upper bound derived above.
It is possible that the parameter~$\delta$ and the~$\epsilon$-dependent
additive term be improved further.

\subsection{A lower bound}

Next, we show that the average-case communication complexity of any protocol
for approximate remote state preparation is bounded from below by~$\Imax^{\epsilon}(A:B)_{\rho(p)}$.
In order to do this, we strengthen a property of
smooth max-information due to Berta, Christandl, and Renner~\cite[Lemma
B.12]{berta_quantum_2011}, in the case of a tripartite state~$\rho_{M\!
AB}$ that is classical on~$M$.

\begin{lemma} \label{thm-upper-dp} Let~$\epsilon\geq 0$ and~$\rho_{M\! AB}\in\sD(\cM\tensor\cH'\tensor\cH)$
be any tripartite quantum state over registers~$M$,~$A$ and~$B$ such
that~$\rho$ is classical on~$M$. Then
\[
\Imax^{\epsilon}(A:M\! B)\quad\leq\quad\Imax^{\epsilon}(A:B)+\log\abs{M}\enspace.
\]

\end{lemma}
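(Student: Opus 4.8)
The plan is to start from an optimal smoothing for the right-hand side, transport it to the joint system~$MAB$ via the Uhlmann-type extension property, restore the classical structure on~$M$ by dephasing, and then pay the additive~$\log\abs{M}$ through a block-diagonal operator inequality comparing~$M$ to its maximally mixed state. Concretely, I would set $\lambda\coloneqq\Imax^{\epsilon}(A:B)_{\rho}$ and choose an optimal smoothed state $\tilde{\rho}_{AB}\in\sB^{\epsilon}(\rho_{AB})$ together with a $\sigma_{B}\in\sD(\cH)$ attaining it, so that $\tilde{\rho}_{AB}\leq 2^{\lambda}\,\tilde{\rho}_{A}\tensor\sigma_{B}$ with $\tilde{\rho}_{A}=\trace_{B}(\tilde{\rho}_{AB})$; both minima are attained by compactness. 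The goal is then to produce a single state in $\sB^{\epsilon}(\rho_{MAB})$ whose max-information that~$MB$ has about~$A$ is at most $\lambda+\log\abs{M}$.

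Next I would lift this smoothing to the larger system. Regarding $\rho_{MAB}$ as an extension of its marginal $\rho_{AB}=\trace_{M}(\rho_{MAB})$ and invoking Proposition~\ref{cor-uhlmann} (with the combined register~$AB$ in the role of ``$A$'' and~$M$ in the role of ``$B$''), there is a state $\tilde{\rho}_{MAB}\in\sB^{\epsilon}(\rho_{MAB})$ with $\trace_{M}(\tilde{\rho}_{MAB})=\tilde{\rho}_{AB}$. This extension need not be classical on~$M$, so I would apply the dephasing channel $\Delta_{M}$ in the standard basis of~$M$ to obtain $\hat{\rho}_{MAB}\coloneqq(\Delta_{M}\tensor\id_{AB})(\tilde{\rho}_{MAB})$. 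Since $\rho_{MAB}$ is already classical on~$M$ we have $\Delta_{M}(\rho_{MAB})=\rho_{MAB}$, so by monotonicity of fidelity, and hence of purified distance, under $\Delta_{M}$ the state $\hat{\rho}_{MAB}$ still lies in $\sB^{\epsilon}(\rho_{MAB})$; moreover dephasing leaves the $AB$-marginal untouched, so $\trace_{M}(\hat{\rho}_{MAB})=\tilde{\rho}_{AB}$. Thus $\hat{\rho}_{MAB}$ is classical on~$M$, is $\epsilon$-close to $\rho_{MAB}$, and has $AB$-marginal $\tilde{\rho}_{AB}$ (and hence $A$-marginal $\tilde{\rho}_{A}$).

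Finally I would bound $\Imax(A:MB)_{\hat{\rho}}$ directly. Writing $\hat{\rho}_{MAB}=\sum_{m}q_{m}\,\density{m}_{M}\tensor\hat{\rho}_{AB}^{\,m}$, each summand obeys $q_{m}\hat{\rho}_{AB}^{\,m}\leq\trace_{M}(\hat{\rho}_{MAB})=\tilde{\rho}_{AB}\leq 2^{\lambda}\,\tilde{\rho}_{A}\tensor\sigma_{B}$, being one term of a sum of positive operators. Taking the candidate $\tau_{MB}\coloneqq\frac{\id_{M}}{\abs{M}}\tensor\sigma_{B}\in\sD(\cM\tensor\cH)$, the operator $\hat{\rho}_{A}\tensor\tau_{MB}$ is block diagonal in the standard basis of~$M$ with every block equal to $\frac{1}{\abs{M}}\,\tilde{\rho}_{A}\tensor\sigma_{B}$, so a block-by-block comparison yields
\[
\hat{\rho}_{MAB}\quad\leq\quad 2^{\lambda+\log\abs{M}}\;\hat{\rho}_{A}\tensor\tau_{MB}\enspace.
\]
Hence $\Imax(A:MB)_{\hat{\rho}}\leq\lambda+\log\abs{M}$, and since $\hat{\rho}_{MAB}\in\sB^{\epsilon}(\rho_{MAB})$ the definition of smooth max-information gives $\Imax^{\epsilon}(A:MB)_{\rho}\leq\lambda+\log\abs{M}=\Imax^{\epsilon}(A:B)_{\rho}+\log\abs{M}$.

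I expect the transport step to be the crux: one needs a single state on~$MAB$ that is simultaneously $\epsilon$-close to $\rho_{MAB}$, classical on~$M$, and carries exactly the optimal $AB$-marginal $\tilde{\rho}_{AB}$. The Uhlmann extension secures closeness and the marginal but not classicality, whereas a hand-built classical extension secures classicality and the marginal but not closeness; the dephasing channel reconciles the two, and it does so precisely because $\rho_{MAB}$ is classical on~$M$ to begin with (this is also where the improvement from $2\log\abs{M}$ to $\log\abs{M}$ comes from). Everything after the transport is the routine block-diagonal inequality above.
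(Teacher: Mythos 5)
Your proposal is correct and follows essentially the same route as the paper's proof: optimal smoothing of $\Imax^{\epsilon}(A:B)$, lifting via Proposition~\ref{cor-uhlmann}, dephasing on~$M$ (valid because $\rho_{M\!AB}$ is classical on~$M$), and the operator inequality bounding the classical-on-$M$ extension by $\id_M$ tensored with its $AB$-marginal. Your block-by-block comparison is just a restatement of the paper's step $\tilde{\rho}_{M\!AB}\leq\id_{M}\otimes\tilde{\rho}_{AB}$, so there is nothing to add.
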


\begin{proof} Fix~$\sigma_{B}\in\sD(\cH)$ and~$\tilde{\rho}_{AB}\in\sB^{\epsilon}(\rho_{A\! B})$
such that~$\Imax^{\epsilon}(A:B)=\Dmax(\tilde{\rho}_{A\! B} \,\|\,
\tilde{\rho}_{A}\otimes\sigma_{B})$. Let~$\lambda$ denote this
max-relative entropy, i.e.,~$\lambda$ is the minimum non-negative real number for which~$\tilde{\rho}_{A\! B}\leq2^{\lambda}\tilde{\rho}_{A}\otimes\sigma_{B}$.
Then
\begin{align}\label{eq:First_in_lemma}
\frac{\id}{\abs{M}}\otimes\tilde{\rho}_{AB}\quad\leq\quad2^{\lambda} \; \frac{\id}{\abs{M}}\otimes\tilde{\rho}_{A}\otimes\sigma_{B}\enspace.
\end{align}

By Proposition~$\ref{cor-uhlmann}$, there exists some extension~$\rho'_{M\!
AB}$ of~$\tilde{\rho}_{AB}$ such that~$\rho'_{M\!
AB} \in\sB^{\epsilon}(\rho_{M\! AB})$. By construction,
we have~$\trace_{M}(\rho'_{M\! AB})=\tilde{\rho}_{AB}$. Consider
the quantum-to-classical channel~$\Phi:\rL(\cM)\rightarrow\rL(\cM)$ defined by
\[
\Phi(X)\quad=\quad\sum_{i}\bra{e_i}X\ket{e_i}\ket{e_i}\!\bra{e_i}
\]
for all~$X\in\rL(\cM)$, where~$\{\ket{e_i}\}$ is the standard orthonormal
basis for the Hilbert space~$\cM$.
The state~$(\Phi\otimes\id)(\rho'_{M\! AB})$ is classical
on~$M$, and is an extension of~$\tilde{\rho}_{AB}$.
Define~$\tilde{\rho}_{M\! AB} \coloneqq (\Phi\otimes\id)(\rho'_{M\!
AB})$.
Since~$\rho'_{M\! AB}\in\sB^{\epsilon}(\rho_{M\! AB})$, by monotonicity
of fidelity under quantum channels and because~$\rho_{M\! AB}$
is classical on~$M$, we have~$\tilde{\rho}_{M\!
AB}\in\sB^{\epsilon}(\rho_{M\! AB})$. So~$\tilde{\rho}_{M\! AB}$ may
be written as 
\[
\tilde{\rho}_{M\!
AB}\quad=\quad\sum_{i}{\gamma_{i}\ket{e_i}\!\bra{e_i}\otimes\sigma_{AB}^{i}}\enspace,
\]
where all~$\sigma_{AB}^{i}$ are normalized and~$\sum_{i}{\gamma_{i}}\leq1$.
We have~$\tilde{\rho}_{M\! AB}\leq\id_{M}\otimes\tilde{\rho}_{AB}$.
Combining this with Equation~(\ref{eq:First_in_lemma}), we can conclude that 
\[
\tilde{\rho}_{M\! AB}\quad\leq\quad2^{\lambda}|M|\left(\frac{\id_{M}}{|M|}\otimes\tilde{\rho}_{A}\otimes\sigma_{B}\right)
\]
and consequently, 
\[
\Dmax\left(\tilde{\rho}_{M\! AB} \left\| ~
\frac{\id_{M}}{|M|}\otimes\tilde{\rho}_{A}\otimes\sigma_{B} \right. \right)
    \quad\leq\quad\lambda+\log{|M|}\enspace.
\]
By the definition of smooth max-information, this implies that 
\[
\Imax^{\epsilon}(A:M\! B)\quad\leq\quad\lambda+\log{|M|}\enspace ,
\]
as required.
\end{proof}

\begin{remark}
The above lemma could alternatively be derived from an analogous inequality 
for~\emph{$\alpha$-R\'enyi mutual
information\/}~\cite[Equation~(2.25)]{leditzky2016strong}). 
Taking the limit as~$\alpha \rightarrow \infty$ gives us the 
inequality for max-information (i.e., for~$\epsilon = 0$). We may extend
this to any~$\epsilon \geq 0$ by smoothing arguments similar to
those in the above proof.
\end{remark}

Using this lemma, we bound the average-case communication complexity
of~$\RSP(S,Q)$ from below.

\begin{theorem} \label{thm-lb} For any finite set~$S$, function~$Q:S\rightarrow\sD(\cH)$, and probability distribution~$p$
over~$S$, let~$\rho(p)$ be the bipartite quantum state
$$\rho(p) \quad = \quad \sum_{x\in S}{p_{x}\density{x}_{A}\tensor Q(x)_{B}}\enspace .$$
For any~$\epsilon\in[0,1]$, we have
\[
\sQ_{p}^{*}(\RSP(S,Q),\epsilon)\quad\geq\quad\Imax^{\epsilon}(A:B)_{\rho(p)}\enspace.
\]
\end{theorem}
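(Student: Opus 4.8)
The plan is to take an arbitrary LOCC protocol for $\RSP(S,Q)$ with average-case error at most $\epsilon$, and show that the number of communicated bits is at least $\Imax^\epsilon(A:B)_{\rho(p)}$. The key idea is that the entire protocol, from the perspective of the input register $A$ and the output register $B$, is a quantum channel acting on Bob's side, and the classical messages exchanged can be collected into a transcript register $M$ whose dimension is bounded by $2^c$, where $c$ is the total communication cost. I would then combine three ingredients: monotonicity of smooth max-information under quantum channels (Proposition~\ref{thm-monotonicity-mi}), the transcript bound of Lemma~\ref{thm-upper-dp}, and the fact that smooth max-information of a product state $\rho_A \tensor \sigma_B$ vanishes.

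First I would set up the information flow. Let $\omega$ be the actual joint input-output state $\omega = \sum_x p_x \density{x}_A \tensor w_x$, where $w_x$ is Bob's output on input $x$. The average-case error condition says exactly that $\rP(\omega, \rho(p)) \le \epsilon$, i.e.\ $\omega \in \sB^\epsilon(\rho(p))$, so by definition $\Imax^\epsilon(A:B)_{\rho(p)} \le \Imax(A:B)_\omega$. Hence it suffices to lower bound $\Imax(A:B)_\omega$, or more precisely to show $\Imax(A:B)_\omega \le c$ where $c$ is the communication cost. The natural route is to track how $\Imax(A:\cdot)$ evolves round by round. At the start, Alice holds $A$ together with her share of the input-independent entangled state, while Bob holds the rest; since the shared state is independent of $A$, the initial joint state factorizes as (something on $A$) $\tensor$ (Bob's registers), so Bob's max-information about $A$ is zero. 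Each of Alice's local operations acts only on her registers and cannot be undone on Bob's side, but the crucial events are the messages: when a message $M_i$ of $m_i$ bits is appended to Bob's transcript, Lemma~\ref{thm-upper-dp} (applied with the transcript-so-far as the classical register) bounds the increase in the max-information Bob has about $A$ by at most $\log\abs{M_i} = m_i$. Bob's own local operations are quantum channels on his side and, by Proposition~\ref{thm-monotonicity-mi}, cannot increase his max-information about $A$.

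I would therefore argue by induction on the rounds: writing $B_i$ for the totality of Bob's registers after round $i$, the quantity $\Imax(A:B_i)$ starts at $0$, is non-increasing under Bob's local operations and under Alice's local operations (which act on the complementary system and leave Bob's reduced state, hence his max-information, unchanged), and increases by at most $m_i$ when Alice sends an $m_i$-bit message. Summing over all rounds gives $\Imax(A:B_k) \le \sum_i m_i = c$. Finally, Bob's output register $B$ is a sub-register of his final registers, obtained by tracing out (a quantum channel), so a last application of Proposition~\ref{thm-monotonicity-mi} yields $\Imax(A:B)_\omega \le \Imax(A:B_k) \le c$. Combined with the smoothing inequality $\Imax^\epsilon(A:B)_{\rho(p)} \le \Imax(A:B)_\omega$ from the error condition, this gives $\Imax^\epsilon(A:B)_{\rho(p)} \le c = \sQ_p^*(\RSP(S,Q),\epsilon)$.

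The main obstacle is the careful bookkeeping of which register plays the role of the classical register $M$ in Lemma~\ref{thm-upper-dp} at each message, and verifying that messages in \emph{both} directions are handled correctly. Messages from Alice to Bob add classical registers to Bob's side and are exactly what Lemma~\ref{thm-upper-dp} controls; messages from Bob to Alice, however, \emph{remove} information from Bob's side (or leave his reduced state a result of a local channel), so they can only decrease $\Imax(A:B_i)$ and contribute nothing to the bound---consistent with the theorem counting only the total communication. I would need to state cleanly that $\Imax$ about $A$ is insensitive to operations on the $A$-complement that do not touch Bob, and that appending a copied classical message is the scenario of Lemma~\ref{thm-upper-dp}. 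A secondary subtlety is that Lemma~\ref{thm-upper-dp} is stated for the smooth version with a fixed smoothing parameter, whereas here I want the unsmoothed ($\epsilon=0$) per-step inequality and then a single smoothing step at the end via the error condition; I would make sure to apply the lemma with $\epsilon = 0$ throughout the round-by-round analysis and only invoke smoothing once, at the very end.
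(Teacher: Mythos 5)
Your proposal is correct and follows essentially the same route as the paper's proof: initialize $\Imax(A:Z_0)=0$, charge each Alice-to-Bob message via Lemma~\ref{thm-upper-dp} (with smoothing parameter zero), absorb Bob's local operations and Bob-to-Alice messages by monotonicity (Proposition~\ref{thm-monotonicity-mi}), and smooth once at the end using the error condition. The only differences are presentational --- the paper first reduces to protocols where Alice speaks first rather than handling Bob-first messages directly, and your justification that Alice's local operations are harmless should refer to the invariance of the \emph{joint} reduced state on $A$ and Bob's registers (which holds because $A$ is untouched and Alice's channels are trace-preserving on her private registers), not merely of Bob's marginal.
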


\begin{proof}
In this proof we follow the notation and convention described in
Section~\ref{sec-LOCC protocols}. Consider a~$k$-round LOCC protocol~$\Pi$ for~$\RSP(S,Q)$ with average-case error~$\epsilon$. Suppose Bob sends the first message, and the joint state in Alice and Bob's registers (excluding the input register~$A$) after the message is~$\phi$. As Bob receives no input, the joint state~$\phi$ is known to both parties. Hence, the rest of the protocol can be considered as a \emph{new\/} LOCC protocol, with the same output, in which the initial shared state of parties is~$\phi$, \emph{and\/} Alice starts the protocol. The communication cost of this new protocol is less than the communication cost of the original one. Therefore, it suffices to show the lower bound for protocols in which Alice starts.

Let~$A$ be Alice's input register, and~$Y_{i} \coloneqq P_{i}V_{i}$ and~$Z_{i}\coloneqq Q_{i}W_{i}$ be Alice's and Bob's classical-quantum private registers, respectively, after the~$i$th round of the protocol for~$i\geq 0$.
Initially,~$A$ and~$Z_{0}$ are independent, and so
\begin{align}
\Imax(A:Z_{0}) \quad = \quad 0\enspace. \label{eq:initial-2-way}
\end{align}

Consider the~$i$th round of a two-way LOCC protocol. The communication in each round is either from Alice to Bob (for odd~$i$) or from Bob to Alice (for even~$i$).
 
\paragraph{Odd round~$i$:} In this case, Alice measures her private qubits~$V_{i-1}$ controlled by~$P_{i-1}$ and~$A$. She includes the outcome of her measurement~$M_{i}$ in the register~$P_{i}$ (recall that $P_{i}=
P_{i-1} M_{i}$), and sends a copy of~$M_{i}$ to Bob using~$m_{i}
\coloneqq \left\lceil \log (\abs{M_{i}}+1) \right\rceil$ bits of communication. Then Bob includes the received message~$M_{i}$ in~$Q_{i}$ (recall
that $Q_{i}=Q_{i-1}M_{i}$). Thus,
\begin{align}
\nonumber
\Imax(A:Z_{i}) \quad \leq &\quad \Imax(A: Z_{i-1})+\log \abs{M_{i}}& (\mathrm{by\ Lemma~\ref{thm-upper-dp}})\\
 \leq &\quad \Imax(A:Z_{i-1})+m_i\enspace. & \label{eq:odd-round}
\end{align}

\paragraph{Even round~$i$:} In this case, Bob measures his private qubits~$W_{i-1}$ controlled by~$Q_{i-1}$. He includes the outcome of his measurement~$M_i$ in~$Q_{i}$, and sends a copy of~$M_i$ to Alice using~$m_i=\left\lceil \log (\abs{M_{i}}+1) \right\rceil$ bits of communication. Alice includes the received message in~$P_{i}$. Thus,
\begin{align}
\Imax(A:Z_{i}) \quad \leq & \quad \Imax(A:Z_{i-1})\enspace. &
(\mathrm{by\ Proposition~\ref{thm-monotonicity-mi}}) \label{eq:even-round}
\end{align}

Combining Eqs.~\eqref{eq:odd-round} and~\eqref{eq:even-round} recursively, we get
\[
\Imax(A:Z_{k})\quad\leq\quad\Imax(A:Z_{0})+\sum_{\substack{1\leq i \leq k \\ i \ \mathrm{odd}}}{m_{i}}\quad =\quad \sum_{\substack{1\leq i \leq k \\ i \ \mathrm{odd}}}{m_{i}} \enspace,
\]
after~$k$ rounds of communication. Let~$m \coloneqq \sum_{1 \le i \le k,
\textrm{ odd}} m_i$. At the end of the protocol, Bob applies a quantum
channel on his register~$Z_{k}$  to get the output~$B$. By
monotonicity of max-information (Proposition~\ref{thm-monotonicity-mi}), we have 
\[
\Imax(A:B)_{\rho'(p)}\quad\leq\quad m\enspace,
\]
where~$\rho'(p)=\sum_x p_x\density{x}\otimes\sigma_x$ is the bipartite quantum state of registers~$AB$, and~$m$ is the number of bits of communication from Alice to Bob. In addition, protocol~$\Pi$ guarantees that~$\rho'(p)$
is within purified distance~$\epsilon$ of~$\rho(p)$. Therefore, we conclude the theorem.
\end{proof}

\section{Worst-case communication complexity\label{sec:Worst-case-error-communication}}

In this section, we characterize the
worst-case communication complexity of remote state preparation,
denoted as~$\sQ^{*}(\RSP(S,Q),\epsilon)$, in terms of smooth
max-relative entropy. 

\subsection{An upper bound}\label{subsec:ub-worst-case}

We show that for some fixed~$\epsilon\in(0,1]$, the worst-case communication complexity
of the approximate remote state preparation problem is bounded from above essentially by
$$\min_{\sigma\in\sD(\cH)}\max_{x\in S}\ \Dmax^{\delta}(Q(x)\|\sigma)\enspace,$$
where~$\delta\in\Theta(\epsilon)$. 
As for the average case, we utilize the JRS protocol presented in Section~\ref{part:Approximate-remote-state}.

\begin{theorem}\label{thm-worst-upperbound} Let~$S$ be a non-empty finite
set,~$Q:S\rightarrow\sD(\cH)$ be a function from~$S$ to the set
of density operators in the Hilbert space~$\cH$, and~$\epsilon\in[0,1]$.
Then
\[
\sQ^{*}(\RSP(S,Q),\epsilon)\quad\leq\quad\min_{\sigma\in\sD(\cH)}\max_{x\in S}\ \Dmax^{\delta}(Q(x)\|\sigma)+\log_2(1+\epsilon^{2})+\log_{2}\ln\frac{2}{\epsilon^{4}}+2\enspace,
\]
where~$\delta=\frac{\epsilon}{\sqrt{1+\epsilon^{2}}}$.

\end{theorem}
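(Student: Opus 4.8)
The plan is to mirror the average-case upper bound (Theorem~\ref{thm-Ub}): extract from the defining optimization a reference state together with a family of ``typical'' approximations, feed them to the JRS protocol of Section~\ref{part:Approximate-remote-state}, and then bound the output error and the communication. The one genuinely new ingredient, and the step I expect to be the main obstacle, is that smoothing for~$\Dmax^{\delta}$ produces \emph{sub-normal} states, whereas the JRS protocol is stated for normalized states; reconciling this is precisely what will produce the extra~$\log_2(1+\epsilon^2)$ term.

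Concretely, I would set~$\delta=\frac{\epsilon}{\sqrt{1+\epsilon^2}}$, so that~$1-\delta^2=\frac{1}{1+\epsilon^2}$, and let~$\lambda\coloneqq\min_{\sigma}\max_{x}\Dmax^{\delta}(Q(x)\|\sigma)$ with minimizer~$\sigma^*$ (the minimum is attained by compactness of~$\sD(\cH)$ and of the balls~$\sB^{\delta}(Q(x))$; if~$\lambda=\infty$ the bound is vacuous). By the definition of~$\Dmax^{\delta}$, for each~$x$ there is a sub-normal~$\sigma_x\in\sB^{\delta}(Q(x))$ with~$\sigma_x\leq2^{\lambda}\sigma^*$.

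The renormalization step is the crux. Define~$\hat\sigma_x\coloneqq\sigma_x/\trace(\sigma_x)$. Since~$Q(x)$ is normalized, the cross term in the sub-normal fidelity vanishes and a direct computation gives~$\rF(\hat\sigma_x,Q(x))=\rF(\sigma_x,Q(x))/\sqrt{\trace\sigma_x}\geq\rF(\sigma_x,Q(x))\geq\sqrt{1-\delta^2}$, so~$\rP(\hat\sigma_x,Q(x))\leq\delta$. By Proposition~\ref{eq:fidelity-trace} we have~$\trace\sigma_x\geq\rF(Q(x),\sigma_x)^2\geq1-\delta^2$, whence~$\hat\sigma_x\leq\frac{2^{\lambda}}{1-\delta^2}\sigma^*=2^{\lambda'}\sigma^*$ with~$\lambda'\coloneqq\lambda+\log_2(1+\epsilon^2)$. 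Thus the normalized family~$\{\hat\sigma_x\}$ satisfies the hypotheses of the JRS protocol with reference~$\sigma^*$ and parameter~$\lambda'$, at the cost of inflating the exponent by exactly~$\log_2\frac{1}{1-\delta^2}=\log_2(1+\epsilon^2)$.

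Finally I would run the JRS protocol preparing~$\hat\sigma_x$ from~$\sigma^*$ with~$t=\lceil2^{\lambda'}\ln\frac{2}{\epsilon^4}\rceil$ copies. The output is~$\tilde\sigma_x=(1-\eta)\hat\sigma_x+\eta\,\frac{\id}{\dim(\cH)}$ with~$\eta=(1-2^{-\lambda'})^{t}$, and the choice of~$t$ together with~$\ln(1-u)\leq-u$ forces~$\eta\leq\epsilon^4/2$. Concavity of fidelity then yields~$\rF(\tilde\sigma_x,Q(x))\geq(1-\eta)\rF(\hat\sigma_x,Q(x))\geq(1-\tfrac{\epsilon^4}{2})\sqrt{1-\delta^2}\geq\sqrt{1-\epsilon^4}\,\sqrt{1-\delta^2}=\sqrt{1-\epsilon^2}$, using~$1-\tfrac{x}{2}\geq\sqrt{1-x}$ and~$1-\delta^2=\frac{1}{1+\epsilon^2}$; hence~$\rP(\tilde\sigma_x,Q(x))\leq\epsilon$ for every~$x$, i.e.\ the protocol has worst-case error at most~$\epsilon$. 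The communication cost~$\lceil\log(t+1)\rceil$ is bounded exactly as in the average case to give~$\lambda'+\log_2\ln\frac{2}{\epsilon^4}+2=\min_{\sigma}\max_{x}\Dmax^{\delta}(Q(x)\|\sigma)+\log_2(1+\epsilon^2)+\log_2\ln\frac{2}{\epsilon^4}+2$, as claimed. Apart from the renormalization, every estimate parallels the proof of Theorem~\ref{thm-Ub}, so I expect no further difficulty beyond verifying that normalizing the smoothed states preserves both the distance bound and the operator inequality.
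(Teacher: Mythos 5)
Your proposal is correct and follows essentially the same route as the paper's proof: take the minimizing reference state, renormalize the sub-normal smoothing states $\sigma_x$, use Proposition~\ref{eq:fidelity-trace} to bound $\trace(\sigma_x)\geq 1-\delta^2=\tfrac{1}{1+\epsilon^2}$ (which is exactly where the $\log_2(1+\epsilon^2)$ term comes from), and run the JRS protocol with $t\approx 2^{\lambda+\log(1+\epsilon^2)}\ln\tfrac{2}{\epsilon^4}$. The error and communication estimates match the paper's line for line.
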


\begin{proof} Let~$\alpha \coloneqq \min_{\sigma\in\sD(\cH)}\max_{x\in S}\Dmax^{\delta}(Q(x)\|\sigma)$
and~$\sigma'$ be the quantum state for which the minimum is achieved,
i.e.,~$\alpha=\max_{x\in S}\Dmax^{\delta}(Q(x)\|\sigma')$. By definition, for
all~$x\in S$ there exists some~$\sigma_{x}\in\sB^{\delta}(Q(x))$
such that
\[
\sigma'\quad\geq\quad2^{-\alpha}\sigma_{x}\enspace.
\]
Since~$\rP(\sigma_{x},Q(x))\leq\delta$, we have~$\rF(\sigma_{x},Q(x))^2 \geq 1-\delta^2$. So, by Proposition~\ref{eq:fidelity-trace},~$\trace(\sigma_{x}) \geq 1-\delta^2 = \frac{1}{1+\epsilon^{2}}$
for all~$x\in S$. For each~$x\in S$, define~$\rho_{x}\coloneqq\frac{\sigma_{x}}{\trace(\sigma_{x})}$. Then for all~$x\in S$,~$\rho_{x}$ is a quantum state~$\delta$-close
to~$Q(x)$, i.e.,~$\rho_{x}\in\sB^{\delta}(Q(x))\cap\sD(\cH)$ , and
\begin{align*}
\sigma'\quad\geq \quad2^{-\alpha}\ \trace(\sigma_{x})\ \rho_{x}
    \quad \geq \quad \frac{2^{-\alpha}}{1+\epsilon^{2}}\ \rho_{x}\enspace.
\end{align*}

This inequality is precisely in the form of inequality~(\ref{eq:RSP protocol}). Now
we run the JRS protocol to approximate~$Q(x)$, with~$t=2^\alpha(1+\epsilon^{2})\ln\frac{2}{\epsilon^{4}}$.
At the end of this protocol, Bob's output is
\[
\tilde{\sigma}_{x}\quad \coloneqq \quad\left( 1- \left(1-2^{-\kappa}
\right)^{t}\right)\sigma_{x}+(1-2^{-\kappa})^{t}\frac{\id}{\dim(\cH)}\enspace,
\]
where~$\kappa=\alpha+\log(1+\epsilon^{2})$. 

By joint concavity of fidelity, and because~$\sigma_{x}$ is 
$\frac{\epsilon}{\sqrt{1+\epsilon^{2}}}$-close to~$Q(x)$, we have
\begin{align*}
\rF(Q(x),\tilde{\sigma}_{x})\quad  \geq\quad\big(1-(1-2^{-\kappa})^{t}\big) \  \rF(Q(x),\sigma_{x})
    \quad \geq \quad\frac{1-(1-2^{-\kappa})^{t}}{\sqrt{1+\epsilon^{2}}}
    \quad \geq \quad \sqrt{1 - \epsilon^2} \enspace.
\end{align*}
Here we appealed to the inequalities~$\ln(1-x)\leq-x$ and~$\sqrt{1-x}
\leq 1 - \tfrac{x}{2}$ (for~$x \in [0,1)$), and the definition of~$\kappa$
and~$t$.
Thus, the purified distance of~$Q(x)$ and~$\tilde{\sigma}_x$ is at most~$\epsilon$,
and the protocol performs remote state preparation with worst-case
error~$\epsilon$. The communication cost of this protocol is~$\lceil \log(t+1)\rceil$. Hence, we have
\begin{align*}
\sQ^*(\RSP(S,Q),\epsilon) \quad \leq \quad \lceil \log(t+1)\rceil
\quad \leq  \quad \alpha+\log_2(1+\epsilon^2)+\log_2\ln \frac{2}{\epsilon^4}+2 \enspace,
\end{align*} 
the stated upper bound.
\end{proof}

\subsection{A lower bound}

By definition, any protocol with worst-case error at most~$\epsilon$
is also a protocol with average-case error at most~$\epsilon$. As a consequence, any lower bound for average-case communication complexity is also a lower bound for worst-case communication complexity.
In particular, by Theorem~\ref{thm-lb}, 
for each probability distribution~$p$, $\Imax^{\epsilon}(A:B)_{\rho(p)}$
is a lower bound for the worst-case communication complexity of remote
state preparation. Therefore,
\begin{equation}
\max_{p}\:\Imax^{\epsilon}(A:B)_{\rho(p)}\quad\leq\quad\sQ^{*}(\RSP(S,Q),\epsilon)\enspace,\label{eq:worst-case lb(1)}
\end{equation}
where the maximum is over all probability distributions~$p$ on the set~$S$.
In the following theorem, we give a lower bound for~$\sQ^*(\RSP(S,Q),\epsilon)$ in terms of max-relative entropy using Equation~\eqref{eq:worst-case lb(1)}.

\begin{theorem}\label{thm-worst-case lb} Let~$S$ be a non-empty finite set,
$Q:S\rightarrow\sD(\cH)$ be a function from~$S$ to the set of density
operators in Hilbert space~$\cH$,~$\epsilon\in(0,1]$, and~$\delta \in
(0,1-\epsilon^2)$. Then
\[
\min_{\sigma\in\sD(\cH)}\max_{x\in S}\ \Dmax^{\gamma}(Q(x)\|\sigma)-\log\frac{(1-\epsilon^{2})(\epsilon^{2}+\delta)}{\delta^{3}}-3\log3\quad\leq\quad\sQ^{*}(\RSP(S,Q),\epsilon)\enspace,
\]
where~$\gamma=\sqrt{2(\epsilon^{2}+\delta)}$.

\end{theorem}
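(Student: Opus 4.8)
The plan is to lower-bound $\sQ^{*}(\RSP(S,Q),\epsilon)$ by the smooth max-information maximized over input distributions, and then convert this into the stated expression involving smooth max-relative entropy. The starting point is inequality~\eqref{eq:worst-case lb(1)}, which already gives $\sQ^{*}(\RSP(S,Q),\epsilon)\geq\max_{p}\Imax^{\epsilon}(A:B)_{\rho(p)}$. So it suffices to show that $\max_{p}\Imax^{\epsilon}(A:B)_{\rho(p)}$ dominates $\min_{\sigma}\max_{x}\Dmax^{\gamma}(Q(x)\|\sigma)$ up to the additive term.

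The bridge between the two kinds of entropic quantities will be the hypothesis-testing relative entropy. First I would fix a distribution $p$ and relate $\Imax^{\epsilon}(A:B)_{\rho(p)}$ to $\min_{\sigma}\Dh^{\eta}(\rho(p)_{AB}\,\|\,\rho(p)_{A}\tensor\sigma)$ with $\eta=1-\epsilon^{2}-\delta$. Using Proposition~\ref{thm-classical-smi}, the optimal smoothed state is classical on $A$ and is dominated by a product $\rho(p)_{A}\tensor\sigma$; since it lies in $\sB^{\epsilon}(\rho(p))$, comparing $\Dmax$ against $\Dh$ and then invoking inequality~\eqref{eq:Dmax>Dh} of Theorem~\ref{thm-Dmax-Dh} with the parameters $1-\epsilon^{2}$ and $\delta$ produces exactly the additive loss $-\log\frac{(1-\epsilon^{2})(\epsilon^{2}+\delta)}{\delta^{3}}-3\log3$. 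The delicate point here is keeping the reference operator of the form $\rho(p)_{A}\tensor\sigma$, with the correct classical marginal: a Cauchy--Schwarz estimate on the conditional fidelities $\rF(\sigma_{x},Q(x))$ is what lets one replace the smoothed marginal by $\rho(p)_{A}$ while staying within the required purified-distance ball.

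Next I would take the maximum over $p$ and exchange it with the minimum over $\sigma$. The quantity $\upbeta^{\eta}(\rho(p)_{AB}\,\|\,\rho(p)_{A}\tensor\sigma)$ is convex in $p$ by Proposition~\ref{thm-covexity of Be in p} and concave in $\sigma$ by Proposition~\ref{thm-concavity of Be in sigma}; since $\Dh^{\eta}=-\log\frac{\upbeta^{\eta}}{1-\eta}$ and both the probability simplex and $\sD(\cH)$ are convex and compact, the minimax theorem (Theorem~\ref{thm-minimax theorem}) applies and yields $\max_{p}\min_{\sigma}\Dh^{\eta}=\min_{\sigma}\max_{p}\Dh^{\eta}$. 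Evaluating the inner maximum over $p$ at the point-mass distributions collapses $\rho(p)_{AB}$ to a single $\density{x}\tensor Q(x)$, so that $\max_{p}\Dh^{\eta}(\rho(p)_{AB}\,\|\,\rho(p)_{A}\tensor\sigma)\geq\max_{x}\Dh^{\eta}(Q(x)\|\sigma)$, leaving $\min_{\sigma}\max_{x}\Dh^{\eta}(Q(x)\|\sigma)$. Finally, inequality~\eqref{eq:Dmax<Dh} of Theorem~\ref{thm-Dmax-Dh}, applied with $\eta=1-\epsilon^{2}-\delta$, turns each $\Dh^{\eta}(Q(x)\|\sigma)$ into a lower bound $\Dmax^{\gamma}(Q(x)\|\sigma)$ with $\gamma=\sqrt{2(\epsilon^{2}+\delta)}$, completing the chain.

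I expect the main obstacle to be the first relation, between $\Imax^{\epsilon}$ and the hypothesis-testing quantity at a fixed $p$. The smooth max-information smooths the whole classical--quantum state under one fidelity budget and is insensitive to the classical weights, so the optimal smoothing can cheaply ``hide'' an input $x$ on which $Q(x)$ is far from every candidate $\sigma$ by shifting weight away from it; consequently a per-distribution bound stated in the original weights is genuinely too weak, and it is only the maximization over $p$ (pinned down through the minimax theorem) that forbids this and forces every sufficiently-weighted input to be approximated well. Making the weight bookkeeping precise, while keeping the smoothing radius at $\epsilon$ so that the additive terms match those coming from Theorem~\ref{thm-Dmax-Dh}, is the crux of the argument.
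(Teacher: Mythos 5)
Your proposal follows essentially the same route as the paper's own proof: it starts from the average-case lower bound $\max_{p}\Imax^{\epsilon}(A:B)_{\rho(p)}$, passes to hypothesis-testing relative entropy via Theorem~\ref{thm-Dmax-Dh} (incurring exactly the stated additive loss), exchanges $\max_{p}$ and $\min_{\sigma}$ by the minimax theorem using Propositions~\ref{thm-covexity of Be in p} and~\ref{thm-concavity of Be in sigma}, restricts to point-mass distributions, and converts back to $\Dmax^{\gamma}$ with $\gamma=\sqrt{2(\epsilon^{2}+\delta)}$. The only differences are cosmetic (you restrict to point masses before rather than after returning to $\Dmax$, and you flag the marginal-replacement issue that the paper handles implicitly), so the argument is correct and matches the paper.
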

\begin{proof}
By definition of the smooth max-information, Eq.~\eqref{eq:worst-case lb(1)} implies that
\begin{align}
\max_{p}\min_{\sigma\in\sD(\cH)}\ \Dmax^{\epsilon}(\rho_{AB}(p)\ \|\ \rho_{A}(p)\otimes\sigma)\quad\leq\quad\sQ^{*}(\RSP(S,Q),\epsilon)\enspace, \label{eq:first lb for worst(Dmax)} 
\end{align}
whereas the upper bound shown in Theorem~\ref{thm-worst-upperbound} is
\[ \min_{\sigma\in\sD(\cH)} \max_{x\in S} \; \Dmax^\delta (Q(x)\|\sigma)\enspace.
\]
If the minimax theorem held for the above expression, the theorem would
follow. 
However, smooth max-relative entropy~$\Dmax^\epsilon$ is neither convex
nor concave in its arguments, and the minimax theorem does not apply
directly.
Instead, we appeal to Theorem~\ref{thm-Dmax-Dh}, and
approximate it with
hypothesis testing relative entropy~$\Dh^\epsilon$, and write it in
terms of the hypothesis
testing error~$\upbeta^\epsilon$. This measure satisfies the hypotheses of the minimax theorem
(cf.\ Proposition~\ref{thm-concavity of Be in sigma} and~\ref{thm-covexity of Be in p}). We then apply the minimax theorem, and finally return to~$\Dmax^\epsilon$
to derive the lower bound.

By Theorem~\ref{thm-Dmax-Dh}, we have
\begin{align*}
\max_{p}\min_{\sigma\in\sD(\cH)}\ \Dmax^{\epsilon}(\rho_{AB}(p)\ \|\ \rho_{A}(p)\otimes\sigma)\quad\geq & \quad\max_{p}\min_{\sigma\in\sD(\cH)}\ \Dh^{\lambda}(\rho_{AB}(p)\ \|\ \rho_{A}(p)\otimes\sigma)-f(\epsilon,\delta)\\
= & \quad\max_{p}\min_{\sigma\in\sD(\cH)}\ \left(-\log\ \upbeta^{\lambda}(\rho_{AB}(p)\ \|\ \rho_{A}(p)\otimes\sigma)\right)\\
&\quad \mbox{} + \log(1-\lambda)-f(\epsilon,\delta)\\
= & \quad-\log\left(\min_{p}\max_{\sigma\in\sD(\cH)}\ \upbeta^{\lambda}(\rho_{AB}(p)\ \|\ \rho_{A}(p)\otimes\sigma)\right)\\
&\quad \mbox{} + \log(1-\lambda)-f(\epsilon,\delta)\enspace,
\end{align*}
where~$f(\epsilon,\delta)=\log\frac{(1-\epsilon^{2})(\epsilon^{2}+\delta)}{\delta^{3}}+3\log3$
and~$\lambda=1-\epsilon^{2}-\delta$. 

Let~$A_{1}$ be the set of all probability distributions~$p$ over~$S$, and
$A_{2}$ be the set of all quantum states~$\sigma\in\sD(\cH)$. Viewing
$\sigma$ as an element of the real vector space of Hermitian operators
in~$\sL(\cH)$,~$A_{1}$ and~$A_{2}$ are non-empty, convex and compact
subsets of~$\reals^{n}$ for some positive integer~$n$. The quantity
$\upbeta^{\lambda}(\rho_{AB}(p) \,\|\,
\rho_{A}(p)\otimes\sigma)$ is a continuous function of its arguments. 
Moreover, by Proposition~\ref{thm-covexity of Be in p} and 
Proposition~\ref{thm-concavity of Be in sigma}, it
satisfies both conditions of the minimax theorem, Theorem~\ref{thm-minimax theorem}. Thus, we conclude that
\begin{align}
\nonumber
\max_{p}\min_{\sigma\in\sD(\cH)}\ \Dmax^{\epsilon}(\rho_{AB}(p)\ \|\
\rho_{A}(p)\otimes\sigma)\quad\geq &
\quad-\log\left(\max_{\sigma\in\sD(\cH)}\min_{p}\ \upbeta^{\lambda}(\rho_{AB}(p)\ \|\ \rho_{A}(p)\otimes\sigma)\right)\\
&\quad \mbox{} + \log(1-\lambda)-f(\epsilon,\delta)\nonumber\\
\nonumber
= & \quad\min_{\sigma\in\sD(\cH)}\max_{p}\ \Dh^{\lambda}(\rho_{AB}(p)\ \|\ \rho_{A}(p)\otimes\sigma)-f(\epsilon,\delta)\\
\nonumber
\geq & \quad\min_{\sigma\in\sD(\cH)}\max_{p}\ \Dmax^{\gamma}(\rho_{AB}(p)\ \|\ \rho_{A}(p)\otimes\sigma)-f(\epsilon,\delta)\\
\geq & \quad\min_{\sigma\in\sD(\cH)}\max_{x\in S}\ \Dmax^{\gamma}(Q(x)\|\sigma)-f(\epsilon,\delta)\enspace,\label{eq:minimax applied} 
\end{align}
where~$\gamma=\sqrt{2(1-\lambda)}=\sqrt{2(\epsilon^{2}+\delta)}$.
In the second inequality above, we use Theorem~\ref{thm-Dmax-Dh} to move between hypothesis testing relative entropy and max-relative entropy. Combining Eqs.~\eqref{eq:minimax applied} and~\eqref{eq:first lb for worst(Dmax)}, we get the lower bound for the worst-case communication complexity of ARSP.
\end{proof}

\section{Some observations\label{sec:Some-Observations}}

In earlier sections, we characterized the communication complexity
of the approximate remote state preparation problem (ARSP) for both worst-case
error and average-case error. We now discuss the results, especially in light of previous work.

\subsection{A comparison with previous works }

In Section~\ref{sec:Worst-case-error-communication}, we derived 
bounds on the worst-case communication complexity
of ARSP. Jain~\cite{jain_communication_2006} showed that
the worst-case communication complexity of ARSP of a sequence of quantum
states~$(Q(x) : x \in S)$ is bounded from above in terms of
the ``maximum possible information''~$\sT(Q)$ as:
\begin{align}\label{eq:Jain-upperbound}
\frac{8(4\sT(Q)+7)}{\left(1-\sqrt{1-\epsilon^{2}}\right)^{2}}\enspace, 
\end{align}
where~$\epsilon$ is the approximation error. (See
Section~\ref{sec:Asymptotic Information Theory} for a definition
of~$\sT(Q)$.)

We observe that for certain sets of states 
there is a large separation between the
bound established in Theorem~\ref{thm-worst-upperbound},
and Equation~\eqref{eq:Jain-upperbound}. Specifically, the upper bound in Theorem~\ref{thm-worst-upperbound} may be asymptotically smaller than the bound in Equation~\eqref{eq:Jain-upperbound}.

The separation follows from a combination of two pieces of work.
The first is an information-theoretic result, the \emph{Substate
theorem} due to Jain, Radhakrishnan, and Sen~\cite{jain_property_2009},
which
relates the smooth max-relative entropy of two states to their observational
divergence. The precise form of the statement below is due to Jain and
Nayak~\cite{jain_short_2012}.

\begin{theorem}[Substate
theorem~\cite{jain_property_2009,jain_short_2012}]
\label{thm-substate theorem}
Let~$\cH$ be a Hilbert
space, and let~$\rho,\sigma\in\sD(\cH)$ be quantum states such that
$\support(\rho)\subseteq \support(\sigma)$. For any~$\epsilon\in(0,1)$,
\[
\Dmax^{\epsilon}(\rho\|\sigma)\quad\leq\quad\frac{\rD(\rho\|\sigma)}{\epsilon^{2}}+\log\frac{1}{1-\epsilon^2}\enspace.
\]
\end{theorem}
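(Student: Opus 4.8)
The plan is to route through the hypothesis-testing relative entropy, exploiting that the observational divergence~$\rD(\rho\|\sigma)$ and the hypothesis-testing error~$\upbeta^{\epsilon}(\rho\|\sigma)$ are optimizations over the \emph{same\/} family of operators~$\set{M : 0\leq M\leq\id}$. First I would pass from~$\Dmax^{\epsilon}$ to~$\Dh$ by inequality~\eqref{eq:Dmax<Dh} of Theorem~\ref{thm-Dmax-Dh}: choosing the hypothesis-testing parameter~$\eta = 1-\epsilon^{2}/2$ gives~$\sqrt{2(1-\eta)}=\epsilon$, hence~$\Dmax^{\epsilon}(\rho\|\sigma)\leq\Dh^{\eta}(\rho\|\sigma)$. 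It then suffices to bound~$\Dh^{\eta}$, equivalently~$\upbeta^{\eta}$, in terms of~$\rD$.

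The heart of this route is a one-line comparison. Let~$Q$ achieve~$\upbeta^{\eta}(\rho\|\sigma)$, so that~$\langle Q,\rho\rangle\geq1-\eta$ and~$\langle Q,\sigma\rangle=\upbeta^{\eta}(\rho\|\sigma)$; note~$\upbeta^{\eta}\leq1-\eta$ (witnessed by~$Q=(1-\eta)\id$). Since~$Q$ is feasible in the definition of~$\rD$, and since~$t\mapsto t\log(t/b)$ is increasing for~$t\geq b$, I would obtain
\[
\rD(\rho\|\sigma)\quad\geq\quad\langle Q,\rho\rangle\,\log\frac{\langle Q,\rho\rangle}{\langle Q,\sigma\rangle}\quad\geq\quad(1-\eta)\,\log\frac{1-\eta}{\upbeta^{\eta}(\rho\|\sigma)}\enspace.
\]
Rearranging gives~$\Dh^{\eta}(\rho\|\sigma)=\log\frac{1-\eta}{\upbeta^{\eta}(\rho\|\sigma)}\leq\frac{\rD(\rho\|\sigma)}{1-\eta}$, and substituting~$1-\eta=\epsilon^{2}/2$ yields~$\Dmax^{\epsilon}(\rho\|\sigma)\leq\frac{2\,\rD(\rho\|\sigma)}{\epsilon^{2}}$. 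This already matches the stated bound in form.

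I expect the main obstacle to be sharpening the leading constant from~$2$ to~$1$: the loss arises because the generic~$\Dmax$--$\Dh$ conversion is not tight. To recover the precise constant (and the additive~$\log\frac{1}{1-\epsilon^{2}}$) I would instead construct the smoothed state directly. In the commuting case one sets~$\rho'\propto\min(\rho,2^{\lambda}\sigma)$, rescaled so as to maximise~$\rF(\rho,\rho')$ subject to~$\rho'\leq2^{\lambda}\sigma$, and controls the fidelity defect by applying the observational-divergence inequality at \emph{every\/} threshold~$s$: the projector onto~$\set{\rho\geq2^{s}\sigma}$ shows that the~$\rho$-mass above likelihood ratio~$2^{s}$ is at most~$\rD(\rho\|\sigma)/s$, and one then integrates over~$s$. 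Lifting this construction to non-commuting~$\rho,\sigma$ --- via pinching together with an Uhlmann/purification argument in the spirit of Proposition~\ref{cor-uhlmann} --- is the genuinely delicate step, and is where the work of~\cite{jain_property_2009,jain_short_2012} is concentrated.
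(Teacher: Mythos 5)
The first thing to note is that the paper does not prove this statement at all: Theorem~\ref{thm-substate theorem} is imported as a black box from~\cite{jain_property_2009,jain_short_2012} and used only once, in Theorem~\ref{thm-separationWorst&Avg}. So there is no in-paper proof to compare against, and your attempt has to be judged on its own. The first half of your argument is correct and self-contained: the optimal test~$Q$ achieving~$\upbeta^{\eta}(\rho\|\sigma)$ is feasible for the supremum defining~$\rD$ (note~$\langle Q,\sigma\rangle>0$ follows from~$\support(\rho)\subseteq\support(\sigma)$ and~$\langle Q,\rho\rangle\geq 1-\eta>0$, a point worth saying explicitly), the monotonicity of~$t\mapsto t\log(t/b)$ for~$t\geq b$ together with~$\upbeta^{\eta}\leq 1-\eta$ gives~$\rD(\rho\|\sigma)\geq(1-\eta)\,\Dh^{\eta}(\rho\|\sigma)$, and inequality~\eqref{eq:Dmax<Dh} of Theorem~\ref{thm-Dmax-Dh} with~$\eta=1-\epsilon^{2}/2$ then yields~$\Dmax^{\epsilon}(\rho\|\sigma)\leq 2\rD(\rho\|\sigma)/\epsilon^{2}$. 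That is a genuine and pleasantly short substate-type inequality, and it would in fact suffice for the only use the paper makes of the theorem, since Theorem~\ref{thm-separationWorst&Avg} only needs a bound of order~$\Order(k/\delta^{2})$.

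However, what you have proved is not the stated theorem. The bound~$2\rD/\epsilon^{2}$ neither implies nor is implied by~$\rD/\epsilon^{2}+\log\frac{1}{1-\epsilon^{2}}$: for large~$\rD$ yours is strictly weaker, for~$\rD$ near zero it is stronger, so it is simply a different inequality. The second half of your write-up, which is where the stated constants would have to come from, is a sketch rather than a proof. The commuting case you describe is essentially right --- the projector~$M_{s}$ onto~$\set{\rho\geq 2^{s}\sigma}$ gives~$\trace(M_{s}\rho)\leq\rD(\rho\|\sigma)/s$, and truncating at~$\lambda=\rD/\epsilon^{2}$ controls the trace (hence fidelity) loss --- but you explicitly defer the non-commutative lifting to~\cite{jain_property_2009,jain_short_2012}, and that lifting is the entire content of the quantum substate theorem; a naive pinching of~$\rho$ to the eigenbasis of~$\sigma$ changes~$\rho$ by too much, which is exactly why the original proof was difficult. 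So the honest summary is: you have a complete proof of a correct but different lemma, plus an outline of the real theorem with its central step left to the cited references; as a proof of Theorem~\ref{thm-substate theorem} as stated, it has a genuine gap.
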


The second result is due to Jain, Nayak, and
Su~\cite{jain_separation_2010}, who constructed an ensemble of quantum
states for which there is a large separation between its Holevo and
Divergence information. (See Section~\ref{sec:Asymptotic Information
Theory} for a definition of these two information quantities.) 

\begin{theorem} 
\label{thm-separationD&chi}
Let~$n$ be a positive
integer, and~$\cH$ be a Hilbert space of dimension
$n$. For every positive real number~$k\geq1$ such that~$\log_2 n > 36 k^{2}$,
there is a finite set~$S$ and an ensemble~$\scE=\{(\lambda_{x},\xi_{x}):x\in S\}$ of quantum
states~$\xi_{x}\in\sD(\cH)$ with~$\xi \coloneqq \sum_{x\in S}\lambda_{x}\xi_{x}=\frac{\id}{n}$,
such that~$\rD(Q(x)\|\xi)=\rD(\scE)=k$ for all~$x\in S$ and~$\chi(\scE)\in\Theta(k\log\log n)$. 
\end{theorem}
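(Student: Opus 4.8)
The plan is to reduce the construction to classical probability distributions, realized as diagonal density operators, and to exploit the structural gap between the two information quantities: $\rD(\rho\|\sigma)$ is the value of a single best two-outcome test, whereas $\rS(\rho\|\sigma)$ aggregates discrimination across all scales of the likelihood ratio. First I would fix a base distribution $P$ on $[n]$ and let the ensemble states be permuted copies $\xi_x \coloneqq \Pi_x \diag(P)\,\Pi_x^{\dagger}$, where the $\Pi_x$ range with equal weight $\lambda_x$ over a transitive family of permutation matrices (e.g.\ the $n$ cyclic shifts of $[n]$). Averaging over such a symmetric family forces $\xi = \sum_x \lambda_x \xi_x = \id/n$ automatically. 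Since $\rD$, $\rS$, and hence $\chi$ are invariant under the common conjugation by $\Pi_x$, every per-state quantity equals its value for the base pair: $\rD(\xi_x\|\xi) = \rD(\diag(P)\,\|\,\id/n)$ and $\chi(\scE) = \sum_x \lambda_x \rS(\xi_x\|\xi) = \rS(\diag(P)\,\|\,\id/n)$ for all $x$. Because $\diag(P)$ and $\id/n$ commute, an optimal measurement operator in the definition of $\rD$ may be taken diagonal, so the quantum observational divergence equals the classical quantity $\sup_{A\subseteq[n]} P(A)\log\frac{P(A)}{\abs{A}/n}$, and likewise $\rS(\diag(P)\,\|\,\id/n) = \sum_i P_i\log(n P_i)$. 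The theorem thus reduces to building a single $P$ with $\rS = \Theta(k\log\log n)$ but $\rD = k$ exactly.

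For the base distribution I would organize the likelihood ratio $r_i \coloneqq nP_i$ into $L = \Theta(\log\log n)$ geometric levels. Partition most of $[n]$ into blocks $B_1,\dots,B_L$ on which $r$ is constant and equal to $2^{a_j}$, with exponents that double, $a_j \coloneqq k\,2^{j-1}$, and place the residual mass in a bulk block of ratio $\approx 1$. Choosing $P(B_j) \coloneqq c\,k/a_j = c\,2^{-(j-1)}$ for a suitable constant $c$ makes each block contribute $P(B_j)\,a_j \approx ck$ to the relative entropy while keeping $\sum_j P(B_j)$ bounded by a convergent geometric sum; the block cardinalities $\abs{B_j} = n\,P(B_j)\,2^{-a_j}$ remain at least $1$ precisely while $a_j \lesssim \log_2 n$, i.e.\ while $2^{L-1} \lesssim (\log_2 n)/k$. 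This is exactly where the hypothesis enters: $\log_2 n > 36k^2$ gives $k < \tfrac{1}{6}\sqrt{\log_2 n}$, hence $\log_2 k < \tfrac12 \log_2\log_2 n$ and $L \approx \log_2\!\big((\log_2 n)/k\big) = \Theta(\log\log n)$. The relative-entropy (hence Holevo) lower bound is then immediate: $\rS(\diag(P)\,\|\,\id/n) = \sum_j P(B_j)\,a_j + \Order(1) = \Theta(kL) = \Theta(k\log\log n)$.

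The crux, and the step I expect to be hardest, is the matching upper bound $\rD(\diag(P)\,\|\,\id/n) = \Order(k)$, established together with the exact normalization $\rD = k$ that the theorem demands. I would first reduce the supremum over all events $A$ to a supremum over superlevel sets $A_\tau \coloneqq \{i : r_i \ge \tau\}$: for any fixed value of $P(A)$ the quantity $P(A)\log\frac{P(A)}{\abs{A}/n}$ is largest when $\abs{A}$ is smallest, i.e.\ when $A$ collects the points of largest likelihood ratio, so only superlevel sets matter. On the geometric block structure $A_\tau = \bigcup_{j \ge j_0} B_j$, and because the doubling schedule makes both $P(A_\tau)$ and its uniform measure dominated by their smallest-index term $j_0$ (the succeeding terms carry the factor $2^{-a_{j+1}} = 2^{-2a_j}$ and form a rapidly convergent tail), one finds $P(A_\tau)/(\abs{A_\tau}/n) \approx 2^{a_{j_0}}$ and hence $P(A_\tau)\log\frac{P(A_\tau)}{\abs{A_\tau}/n} \approx P(B_{j_0})\,a_{j_0} = \Theta(k)$, uniformly in $j_0$. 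Making this domination rigorous, including the continuous maximization over thresholds that cut through a single block and the control of the geometric tails so that the supremum stays $\Theta(k)$ rather than accumulating over the $L$ levels, is the real technical content; the doubling of the exponents is chosen precisely to force these tails to converge. A final tuning of the constant $c$ sets the supremum equal to $k$ exactly, so that $\rD(\xi_x\|\xi) = k$ for every $x$ and therefore $\rD(\scE) = k$, while the Holevo bound is unaffected at $\chi(\scE) = \Theta(k\log\log n)$, completing the construction.
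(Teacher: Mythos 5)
The paper does not actually prove this statement: Theorem~\ref{thm-separationD&chi} is imported verbatim from Jain, Nayak, and Su~\cite{jain_separation_2010}, so there is no in-paper argument to compare yours against. Judged on its own terms, your construction is sound and, as far as I can tell, does establish the theorem up to two loose ends noted below. The reduction to a single classical distribution via a transitive permutation orbit is clean and correct ($\xi=\id/n$ by symmetry, and both $\rD$ and $\rS$ against $\id/n$ are conjugation-invariant); the restriction of the supremum in $\rD$ to subsets $A\subseteq[n]$ is justified because $(p,q)\mapsto p\log(p/q)$ is jointly convex (it is the perspective of $-\log$), so the maximum over diagonal $0\le M\le\id$ is attained at a vertex of the cube; and the rearrangement to greedy/superlevel sets works because on the relevant range the objective is increasing in $P(A)$ and decreasing in $\abs{A}$. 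Your tail estimates also check out: for a greedy set splitting as a partial block $C\subseteq B_{j_0}$ plus all higher blocks, either $P(C)\ge\tfrac12 P(A)$, in which case $P(A)\le 2P(B_{j_0})$ and $P(A)/(\abs{A}/n)\le 2^{a_{j_0}+1}$, or the higher blocks carry half the mass, in which case $P(A)\le 4P(B_{j_0+1})$ and the ratio is at most $2^{a_{j_0+1}+2}$; in both cases the objective is $\Order(ck)$ uniformly in $j_0$, exactly because $P(B_j)a_j$ is constant across levels. The hypothesis $\log_2 n>36k^2$ enters exactly where you say it does, guaranteeing $L=\Theta(\log\log n)$ doubling levels fit below the ratio ceiling $n$.

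The one genuine gap is the final sentence, ``a final tuning of the constant $c$ sets the supremum equal to $k$ exactly.'' The block cardinalities $\abs{B_j}=nc\,2^{-(j-1)}2^{-a_j}$ must be integers, so $c$ cannot be varied continuously without breaking the structure, and in any case you have only shown $\rD=\Theta(k)$, not that the supremum sweeps through the value $k$. To get the exact equality the theorem demands, you need an explicit continuity argument: for instance, first arrange $\rD(P\|U)\ge k$ (and $=\Order(k)$), then interpolate $P_t=(1-t)U+tP$, note that $t\mapsto\rD(P_t\|U)$ is continuous with value $0$ at $t=0$, and apply the intermediate value theorem to find $t^*$ with $\rD(P_{t^*}\|U)=k$; the upper bound $\rD(P\|U)=\Order(k)$ forces $t^*=\Omega(1)$, and one then checks $\rS(P_{t^*}\|U)\ge t^*\sum_j P(B_j)(a_j+\log t^*)=\Omega(kL)$ so the Holevo separation survives. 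With that step (and a remark that rounding block sizes to integers only perturbs constants), your proof is complete. Since the source paper's own construction is not reproduced here, I cannot certify whether your doubling-exponent distribution coincides with theirs, but the underlying mechanism, that observational divergence sees only one likelihood-ratio scale while relative entropy accumulates over all $\Theta(\log\log n)$ scales, is precisely the separation the cited result is built on.
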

Jain~$\etal$~\cite{jain_separation_2010} also showed that this is the best separation
possible for an ensemble of quantum states with a completely mixed
ensemble average. 

Putting these together, we get:
\begin{theorem} 
\label{thm-separationWorst&Avg}
Let~$\delta \in (0,1]$ and~$\cH$ be Hilbert space with dimension~$n$.
Then, for every positive
real number~$k\geq1$ such that~$\log_2 n > 36k^{2}$, there is a finite
set~$S$ and a function
$Q:S\rightarrow\sD(\cH)$ such that~$\sT(Q)\in\Omega(k\log\log n)$
while
\[
\min_{\sigma\in\sD(\cH)}\max_{x\in S}\ \Dmax^{\delta}(Q(x)\|\sigma)\quad\leq\quad\frac{k}{\delta^{2}}+\log\frac{1}{1-\delta^{2}}\enspace.
\]
\end{theorem}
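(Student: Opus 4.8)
The plan is to assemble the statement directly from Theorem~\ref{thm-separationD&chi} and the Substate theorem (Theorem~\ref{thm-substate theorem}), treating the lower bound on~$\sT(Q)$ and the upper bound on the min-max smooth max-relative entropy as two independent computations that happen to refer to the same ensemble. First I would invoke Theorem~\ref{thm-separationD&chi}: given~$k \geq 1$ with~$\log_2 n > 36 k^2$, it supplies a finite set~$S$ and an ensemble~$\scE = \{(\lambda_x, \xi_x) : x \in S\}$ whose average is the completely mixed state~$\xi = \id/n$, with~$\rD(\xi_x \| \xi) = k$ for every~$x \in S$ and~$\chi(\scE) \in \Theta(k \log\log n)$. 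I would set~$Q(x) \coloneqq \xi_x$, so that the set~$S$ and the function~$Q$ in the statement are exactly those produced by the construction.

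For the lower bound on~$\sT(Q)$, I would use the observation recorded in Section~\ref{sec:Asymptotic Information Theory} that~$\sT(Q) \geq \chi(\scE)$. Concretely, taking the distribution~$p = \lambda$ in the maximization defining~$\sT(Q) = \max_p \rI(A:B)_{\rho(p)}$ yields a classical-quantum state whose mutual information equals the Holevo information of the ensemble, so~$\sT(Q) \geq \rI(A:B)_{\rho(\lambda)} = \chi(\scE) \in \Theta(k\log\log n)$; in particular~$\sT(Q) \in \Omega(k \log\log n)$, as claimed.

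For the upper bound, I would bound the minimization over~$\sigma$ from above by a single convenient choice, namely~$\sigma = \xi = \id/n$, giving
\[
\min_{\sigma \in \sD(\cH)} \max_{x \in S} \Dmax^\delta(Q(x)\|\sigma)
    \quad\leq\quad \max_{x \in S} \Dmax^\delta(Q(x)\|\xi)\enspace.
\]
Since~$\xi$ is full rank, the support condition~$\support(Q(x)) \subseteq \support(\xi)$ needed to apply Theorem~\ref{thm-substate theorem} holds automatically for every~$x$. Applying the Substate theorem with~$\epsilon = \delta$ and using~$\rD(Q(x)\|\xi) = k$ then gives~$\Dmax^\delta(Q(x)\|\xi) \leq \frac{k}{\delta^2} + \log\frac{1}{1-\delta^2}$ uniformly in~$x$, which is exactly the desired bound (the edge case~$\delta = 1$ being trivial, as the right-hand side is then~$+\infty$, and the Substate theorem covers~$\delta \in (0,1)$).

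The main point of care is bookkeeping rather than mathematical difficulty. I must make sure the single distribution~$p = \lambda$ is admissible in the maximization defining~$\sT(Q)$ (it is, being a probability distribution on~$S$) and that the \emph{same} ensemble simultaneously achieves the large Holevo information and the small divergence~$\rD(\xi_x\|\xi) = k$ for every~$x$, so that both estimates concern one fixed~$Q$. All of the genuine work is hidden inside the two cited theorems; the contribution here is to notice that the divergence-information side of the separation controls the smooth max-relative entropy via the Substate theorem, while the Holevo-information side controls~$\sT(Q)$ through~$\sT(Q) \geq \chi(\scE)$.
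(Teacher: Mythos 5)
Your proposal is correct and follows essentially the same route as the paper's proof: take the ensemble from Theorem~\ref{thm-separationD&chi}, set $Q(x)=\xi_x$, bound the min-max from above by choosing $\sigma=\xi$ and applying the Substate theorem with the uniform divergence bound $\rD(Q(x)\|\xi)=k$, and use $\sT(Q)\geq\chi(\scE)$ for the lower bound. Your added remarks on the support condition (automatic since $\xi=\id/n$ is full rank) and the $\delta=1$ edge case are correct but not substantive departures.
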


\begin{proof}
Let~$S$ be the set~$S$ and~$\scE=\{(\lambda_{x},\xi_{x}):x\in S\}$ the
ensemble given by
Theorem~\ref{thm-separationD&chi}. Let~$Q:S\rightarrow\sD(\cH)$
be the function such that~$Q(x)=\xi_{x}$ for all~$x\in S$. Suppose
that~$\xi \coloneqq \sum_{x\in S}\lambda_{x}\xi_{x}$ is the ensemble average.
Then we have
\begin{align*}
\min_{\sigma\in\ \sD(\cH)}\max_{x\in S}\ \Dmax^{\delta}(Q(x)\|\sigma)\quad\leq & \quad\max_{x\in S}\ \Dmax^{\delta}(Q(x)\|\xi)\\
\leq & \quad\frac{\max_{x}\rD(Q(x)\|\xi)}{\delta^{2}}+\log\frac{1}{1-\delta^{2}}\\
= & \quad \frac{k}{\delta^2}+\log \frac{1}{1-\delta^2}\enspace,
\end{align*}
where the second inequality is derived using the Substate theorem
(Theorem \ref{thm-substate theorem}). Moreover, by definition of the
maximum possible information~$\sT(Q)$,
we have~$\sT(Q)\geq\chi(\scE)$. This gives us the existence of the 
required function~$Q$.
\end{proof}

Jain~\cite{jain_communication_2006} also gave a
lower bound of~$\sT(Q)/2$ for exact remote state preparation.
The above observation also implies that allowing remote state
preparation with non-zero error in approximating the state may decrease 
the communication cost asymptotically.
By Theorem~\ref{thm-separationWorst&Avg},
we get a function~$Q$ for which the worst-case 
complexity with zero error~$\sQ^{*}(\RSP(S,Q),0) \in\Omega(k\log\log n)$,
while for any~$\epsilon \in (0,1]$, the complexity with error~$\epsilon$
is
\[
\sQ^{*}(\RSP(S,Q),\epsilon)\quad\leq\quad\frac{k}{\delta^{2}}+\log
\frac{1}{1-\delta^2} \enspace,
\]
where~$\delta \coloneqq \frac{\epsilon}{2\sqrt{1+\epsilon^{2}}}$.

\subsection{Average-case error vs.\ worst-case error}
\label{sec:worst-vs-avg}

Requiring bounded worst-case error in approximating states in remote
state preparation is more
demanding, and potentially requires more communication, as compared to
the average case. Here we quantify how much more expensive it could be.

For the rest of this subsection, we let~$n$ be a positive integer,
fix~$S = \set{1, 2, \dotsc, 2^n}$, $\cH = \Span \set{ \ket{x} : x
\in S}$, and define~$Q : S \rightarrow \sD(\cH)$ by~$Q(x) = \density{x}$
for all~$x \in S$. 

\begin{proposition}
\label{thm-worst-vs-avg}
For every~$\epsilon\in [0,1/\sqrt{2}\,)$, there is a probability 
distribution~$p_\epsilon$
over the set~$S$ such that $\sQ_{p_\epsilon}^{*}(\RSP(S,Q),\epsilon) = 0$,
while~$\sQ^{*}(\RSP(S,Q),\epsilon) \ge n$.
\end{proposition}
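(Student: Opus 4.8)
The plan is to establish the two bounds independently, exploiting that the worst-case complexity does not depend on the distribution at all. The average-case bound is essentially free once the right distribution is chosen, while the worst-case bound is a reduction to the classical transmission task controlled by Theorem~\ref{thm-convey info by LOCC}.

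For the average-case bound I would take~$p_\epsilon$ to be the distribution concentrated on a single element~$x_0\in S$. With no communication Bob learns nothing about Alice's input, but against a point mass he need not: he simply outputs the fixed state~$Q(x_0)=\density{x_0}$, which requires no communication. The average fidelity then collapses to the single surviving term~$\rF(\density{x_0},\density{x_0})=1\ge\sqrt{1-\epsilon^2}$, so this protocol has average-case error~$0\le\epsilon$ and hence~$\sQ^*_{p_\epsilon}(\RSP(S,Q),\epsilon)=0$. (This works for every~$\epsilon\ge0$; the restriction~$\epsilon<1/\sqrt2$ enters only through the worst-case bound.)

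For the worst-case bound I would reduce to transmitting a uniformly random~$n$-bit string and invoke Theorem~\ref{thm-convey info by LOCC}. Starting from any LOCC protocol~$\Pi$ for~$\RSP(S,Q)$ with worst-case error~$\epsilon$, I identify~$S=\{1,\dots,2^n\}$ with~$n$-bit strings and append to~$\Pi$ a final local measurement by Bob in the standard basis~$\{\ket x\}$, producing a classical output~$Y$; since this measurement is local it adds no communication. Because~$\Pi$ has worst-case error~$\epsilon$ and the~$Q(x)=\density x$ are pure, Bob's output state~$\sigma_x$ satisfies~$\bra x\sigma_x\ket x=\rF(\sigma_x,\density x)^2\ge 1-\epsilon^2$ for every~$x$, so the measurement returns~$Y=x$ with probability at least~$1-\epsilon^2$. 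Feeding Alice a uniformly random input~$X$, independent of the input-independent shared entanglement exactly as the theorem requires, yields a two-way LOCC protocol that conveys~$X$ to Bob with success probability~$p\coloneqq\Pr[Y=X]\ge 1-\epsilon^2$.

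Applying Theorem~\ref{thm-convey info by LOCC} then gives~$m_A\ge n+\log p\ge n+\log(1-\epsilon^2)$ for the number~$m_A$ of bits Alice sends. The decisive point is integrality combined with the hypothesis~$\epsilon<1/\sqrt2$: the latter yields~$1-\epsilon^2>1/2$, hence~$\log(1-\epsilon^2)>-1$ and~$m_A>n-1$; since~$m_A$ is an integer this forces~$m_A\ge n$. As the total communication is at least~$m_A$, we conclude~$\sQ^*(\RSP(S,Q),\epsilon)\ge n$. The only substantive ingredient is Theorem~\ref{thm-convey info by LOCC}, established separately; the reduction above is routine, and the sole subtlety is that the threshold~$\epsilon<1/\sqrt2$ is precisely what lets the~$\log(1-\epsilon^2)$ loss be absorbed when rounding~$m_A$ up to~$n$.
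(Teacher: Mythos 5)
Your proposal is correct and follows essentially the same route as the paper: the worst-case bound is the same reduction to classical transmission via a standard-basis measurement followed by Theorem~\ref{thm-convey info by LOCC} and the integrality argument, and the average-case bound uses a zero-communication protocol against a highly skewed distribution. The only (immaterial) difference is that you take $p_\epsilon$ to be a point mass at $x_0$, whereas the paper puts weight $\sqrt{1-\epsilon^2}$ on $x_0$ and spreads the rest uniformly; both choices yield average fidelity at least $\sqrt{1-\epsilon^2}$ with no communication.
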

Using quantum teleportation, any set of quantum states in 
space~$\cH$ can be prepared with zero error with communication 
cost~$2 n$. Thus, the above separation is maximal, up to the factor
of~$2$.

To prove Proposition~\ref{thm-worst-vs-avg}, we first analyze worst-case 
error protocols.
\begin{lemma}
\label{lem-worst-case}
For any~$\epsilon\in [0, 1/\sqrt{2} \,)$,
$\sQ^{*}(\RSP(S,Q),\epsilon)\quad\geq\quad n $.
\end{lemma}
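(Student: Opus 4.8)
Lemma \ref{lem-worst-case} asserts that worst-case error remote state preparation of the computational-basis states $Q(x) = \density{x}$ requires at least $n$ bits of communication, for any error $\epsilon < 1/\sqrt{2}$.
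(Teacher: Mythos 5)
Your proposal contains no proof: it only restates what the lemma asserts. Everything that makes the statement true is missing. The key idea in the paper's argument is a reduction to the classical bit-transmission task~$\cT$: given any ARSP protocol~$\Pi$ with worst-case error~$\epsilon$ for the states~$Q(x) = \density{x}$, Bob measures the output~$\sigma_x$ in the computational basis~$(\density{y} : y \in S)$, turning~$\Pi$ into an LOCC protocol~$\Pi'$ that transmits Alice's uniformly random~$n$-bit input with success probability
\[
\Pr[Y = X] \quad \geq \quad \frac{1}{2^n}\sum_{x}\rF(\sigma_x, Q(x))^2 \quad \geq \quad 1-\epsilon^2 \enspace.
\]
One then invokes Theorem~\ref{thm-convey info by LOCC} (the strengthening of Nayak--Salzman for LOCC protocols, proved in Section~\ref{sec-locc-bits}) to conclude that the communication is at least~$n + \log(1-\epsilon^2)$, and the hypothesis~$\epsilon < 1/\sqrt{2}$ is exactly what guarantees~$\log(1-\epsilon^2) > -1$, so that the integer number of bits is at least~$n$. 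Without this reduction and without appealing to the LOCC transmission lower bound, there is no argument; you would need to supply both steps (and note where the condition~$\epsilon \in [0, 1/\sqrt{2})$ is used) for the proof to stand.
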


\begin{proof}
Given any ARSP protocol~$\Pi$ for the given set of states~$Q$, we 
construct an LOCC protocol~$\Pi'$ for transmitting~$n$ bits:

\textbf{Protocol~$\Pi'$}
\begin{enumerate}
\item Alice, with input~$x \in S$, and Bob (with no input) simulate the
protocol~$\Pi$.

\item
Let~$\sigma_x$ be the output of~$\Pi$, obtained by Bob.
Bob measures~$\sigma_x$ according to the projective 
measurement~$(\density{y} : y \in S)$.
\end{enumerate}
The communication complexity of~$\Pi'$ equals that of~$\Pi$.

Suppose Alice is given a uniformly random input, and 
let~$X$ be the corresponding random variable.
Let~$Y$ be the random variable corresponding to Bob's output in~$\Pi'$.
Then, by the monotonicity of fidelity under quantum channels,
the success probability of~$\Pi'$ is
\begin{align*}
\Pr[Y = X]\quad \ge & \quad\frac{1}{2^{n}}\sum_{x}\rF(\sigma_{x},Q(x))^{2}\quad \geq \quad1-\epsilon^{2}\enspace.
\end{align*}
By Theorem \ref{thm-convey info by LOCC}, 
the communication cost of~$\Pi'$, and therefore of~$\Pi$,
is at least~$n+\log(1-\epsilon^{2})$.
Since~$\epsilon\in[0,\tfrac{1}{\sqrt{2}})$, we have~$\log(1-\epsilon^{2})>-1$.
So~$\sQ^{*}(\RSP(S,Q),\epsilon) \geq n $.
\end{proof}

We show that the complexity of the task drops drastically, if
average-case error is considered.

\begin{lemma}
\label{lem-avg-case}
For every~$\epsilon\in [0,1/\sqrt{2}\,)$,
There is a probability distribution~$p_\epsilon$ over the set~$S$ such
that $\sQ_{p_\epsilon}^{*}(\RSP(S,Q),\epsilon) = 0$.
\end{lemma}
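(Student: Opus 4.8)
The plan is to exhibit an explicit zero-communication protocol together with a suitably skewed distribution~$p_\epsilon$. The key observation is that when no bits are exchanged, Bob's output register~$B$ cannot depend on Alice's input~$x$: any local operation he performs (possibly on his half of a shared entangled state) yields a state that is statistically independent of~$x$. Hence it suffices to have Bob output a single fixed state~$\sigma\in\sD(\cH)$, the same for every~$x$, and then argue that for an appropriate choice of~$p_\epsilon$ the average-case fidelity meets the required threshold.

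Concretely, I would take the trivial protocol in which Bob ignores everything and always outputs the fixed pure state~$\sigma_x \coloneqq \density{1}$. Since~$\set{\ket{x} : x\in S}$ is orthonormal, the fidelity of this output with the target~$Q(x)=\density{x}$ is~$\rF(\density{1},\density{x}) = \abs{\braket{1}{x}}$, which equals~$1$ when~$x=1$ and~$0$ otherwise. Consequently the average-case fidelity of the protocol with respect to \emph{any} distribution~$p$ is exactly~$\sum_{x\in S} p_x\,\rF(\density{1},\density{x}) = p_1$. This reduces the entire question to a single inequality on the probability placed on the point~$x=1$.

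The remaining step is to choose~$p_\epsilon$ so that~$p_1\geq\sqrt{1-\epsilon^2}$. Since~$\epsilon\in[0,1/\sqrt{2}\,)$ we have~$\sqrt{1-\epsilon^2}\in(1/\sqrt{2},1]$, so placing mass~$p_1=\sqrt{1-\epsilon^2}$ on~$x=1$ and distributing the remaining mass~$1-\sqrt{1-\epsilon^2}$ arbitrarily over~$S\setminus\set{1}$ gives a valid probability distribution. With this choice the average fidelity equals~$p_1=\sqrt{1-\epsilon^2}$, so the protocol has average-case error at most~$\epsilon$ while exchanging no bits, whence~$\sQ^*_{p_\epsilon}(\RSP(S,Q),\epsilon)=0$.

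I do not anticipate a genuine technical obstacle here; the content of the statement lies entirely in its contrast with the worst-case lower bound of Lemma~\ref{lem-worst-case}, which together yield the~$\log n$-scale separation. The only point deserving explicit care is the justification that a zero-communication protocol forces an input-independent output, but this is immediate from the definition of LOCC protocols in Section~\ref{sec-LOCC protocols}: with no message from Alice, Bob's final register is obtained by applying a fixed channel to a state independent of~$x$.
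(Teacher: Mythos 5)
Your proposal is correct and is essentially the paper's own proof: both place mass~$\sqrt{1-\epsilon^2}$ on a single fixed input and have Bob unconditionally output the corresponding state, giving average fidelity exactly~$\sqrt{1-\epsilon^2}$ with zero communication. The only (immaterial) differences are that the paper spreads the residual mass uniformly rather than arbitrarily, and it does not need your observation that zero-communication outputs are input-independent, since exhibiting one valid protocol suffices for the upper bound.
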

\begin{proof}
Fix some~$x_0 \in S$.
Let~$p_\epsilon$ be the probability distribution defined by
\begin{align*}
p_{\epsilon,x} \quad=\quad
\begin{cases}
\sqrt{1-\epsilon^{2}} & x=x_{0}\\
\frac{1-\sqrt{1-\epsilon^{2}}}{2^{n}-1} & x\neq x_{0} \enspace.
\end{cases}
\end{align*}
Consider the protocol~$\Pi$ in which Alice does not send any message
to Bob, and Bob always prepares the state~$Q(x_{0})=\density{x_{0}}$.
The final joint state of the input-output registers in the 
protocol~$\Pi$ is
$$\rho'_{AB}\quad =\quad
    \sum_{x\in S}p_{\epsilon,x}\density{x}\otimes Q(x_{0})$$
and the communication cost is zero. Denoting by~$\rho_{AB}$ the ideal
input-output state, we have
\begin{align*}
\rF(\rho_{AB},\rho'_{AB})\quad\geq\quad\sqrt{1-\epsilon^{2}}\enspace.
\end{align*}
So~$\sQ_{p}^{*}(\RSP(S,Q),\epsilon)=0$.
\end{proof}

Thus we conclude Proposition~\ref{thm-worst-vs-avg}.
In fact we can construct an ensemble independent of~$\epsilon$, 
which exhibits a similar disparity between worst and average-case ARSP. 
\begin{proposition}
\label{thm-worst-vs-avg2}
There is a probability distribution~$p$ over~$S$ such that
for every~$\epsilon\in [0,1/\sqrt{2}\,)$, we have
\begin{align*}
\sQ^*_{p}(\RSP(S,Q),\epsilon) \quad \le \quad 
    \log \Big( \min \Big\{ 2^n,
    \log_2 \frac{2}{\epsilon^2} \Big\} \Big) + 2\enspace.
\end{align*}
\end{proposition}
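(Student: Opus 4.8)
The plan is to fix a single geometrically decaying distribution on $S$ and, for each error level $\epsilon$, to run a simple truncation protocol in which Alice reveals her input exactly when it is among the few heaviest elements and sends one ``default'' symbol otherwise. Concretely, I would take
\[
p_x \quad = \quad
\begin{cases}
2^{-x} & 1 \le x < 2^{n},\\
2^{-(2^{n}-1)} & x = 2^{n},
\end{cases}
\]
where the last atom absorbs the geometric tail so that $p$ is normalized. Writing~$X$ for Alice's input, this gives $\Pr[X > k] = 1 - \sum_{x=1}^{k} 2^{-x} = 2^{-k}$ for every $1 \le k \le 2^{n}-1$. Since $Q(x) = \density{x}$ for orthogonal basis states, $p$ is already non-increasing in~$x$, so the heaviest~$k$ elements are exactly $\set{1,\dots,k}$.

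For a fixed $\epsilon \in (0, 1/\sqrt{2}\,)$ I would set $L \coloneqq \log_2 \tfrac{2}{\epsilon^{2}}$, noting $L > 2$ because $\epsilon^{2} < 1/2$. If $\ceil{L} \ge 2^{n}$ (very small~$\epsilon$), Alice sends her input directly at cost $\ceil{\log(2^{n}+1)} = n+1$ with zero error. Otherwise I would put $k \coloneqq \ceil{L}$ and $T \coloneqq \set{1,\dots,k}$: Alice sends the index~$x$ when $x \in T$ and a default symbol otherwise, using $\ceil{\log(k+1)}$ bits, and Bob outputs $\density{x}$ on receiving an index~$x$ and (say) $\density{1}$ on the default symbol. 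Because $\rF(\density{x},\density{x}) = 1$ and fidelities are nonnegative, the average fidelity is at least $\sum_{x \in T} p_x = 1 - \Pr[X > k] = 1 - 2^{-k}$. Since $k = \ceil{L} \ge L$ forces $2^{-k} \le 2^{-L} = \epsilon^{2}/2 \le 1 - \sqrt{1-\epsilon^{2}}$, the average fidelity is at least $\sqrt{1-\epsilon^{2}}$, so the average-case error is at most~$\epsilon$.

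It then remains to bound the communication and reconcile the two regimes with the claimed expression. In the truncation regime $\ceil{L} < 2^{n}$ we have $\min\set{2^{n}, L} = L$, and since $L > 2$ the choice $k = \ceil{L}$ gives $k+1 \le L + 2 \le 2L$, whence
\[
\ceil{\log(k+1)} \quad \le \quad \ceil{\,1 + \log_2 L\,} \quad \le \quad \log\big(\min\set{2^{n}, L}\big) + 2 \enspace,
\]
the desired bound. In the remaining regime $\ceil{L} \ge 2^{n}$ one checks that $\log(\min\set{2^{n}, L}) + 2 \ge n+1$ (using $\log(2^{n}-1) \ge n-1$ for $n \ge 1$), so the direct protocol already meets the bound. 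The only delicate point is the design step itself: a \emph{single} distribution must work for all~$\epsilon$ simultaneously, and this pins down the geometric rate $\Pr[X > k] = 2^{-k}$ — fast enough that the mass outside the top $\approx \log_2(2/\epsilon^{2})$ elements is at most $1 - \sqrt{1-\epsilon^{2}}$, yet slow enough that only $\Theta(\log_2(2/\epsilon^{2}))$ heavy elements need be named so that the cost is $\Theta(\log\log\frac{1}{\epsilon})$. Choosing this rate and tracking the ceilings carefully to land exactly at the additive constant~$2$ is the main obstacle; everything else is routine.
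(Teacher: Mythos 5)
Your proposal is correct and follows essentially the same route as the paper: the identical geometric distribution with the tail mass absorbed into the last atom, a truncation protocol that names the top $\Theta(\log\frac{1}{\epsilon^2})$ inputs, the bound $\tfrac{\epsilon^2}{2}\le 1-\sqrt{1-\epsilon^2}$ for the average fidelity, and the same ceiling bookkeeping for the communication cost. The only (immaterial) difference is that the paper has Alice send a uniformly random index from the truncated set instead of a separate default symbol, which saves the extra message letter but changes nothing in the analysis.
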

\begin{proof}
Let~$m \coloneqq 2^n$.
Define~$p$ as the geometrically decreasing probability distribution
\[
p_{x}\quad=\quad\begin{cases}
\frac{1}{2^{x}} & x\in\{1,\ldots,m-1\}\\
\frac{1}{2^{m-1}} & x=m
\end{cases}\enspace .
\]
Now consider the following protocol~$\Pi$ for ARSP.
If Alice's input~$x$ belongs to the set~$\{1,\ldots,t\}$
with~$t=\min\{\lceil \log \frac{2}{\epsilon^2}\rceil,m\}$, then she
sends~$x$ to Bob. Otherwise, she sends a random number chosen from the
set~$\{1,\ldots,t\}$ to Bob. After receiving Alice's message~$y$, Bob
outputs the state~$Q(y)$.

In protocol~$\Pi$, the final state of Alice and Bob is of the form
$$\rho'_{AB} \quad \coloneqq \quad \sum_{x=1}^m p_x \density{x}\otimes \sigma_x \enspace,$$
where~$\sigma_x = Q(x)$ for~$x \le t$.
Consequently
\begin{align*}
\rF(\rho_{AB},\rho'_{AB})\quad
= &\quad \sum_{x=1}^m p_x \; \rF(Q(x),\sigma_x) \quad \geq \quad
\sum_{x=1}^t p_x \quad \geq \quad \sqrt{1-\epsilon^2}\enspace.
\end{align*}
Therefore, the average-case error is at most~$\epsilon$, and the communication is~$\lceil \log t\rceil$. This implies that
$$\sQ^*_{p}(\RSP(S,Q),\epsilon)\quad \leq \quad \log \Big( \min\Big\{ 2^n,
\log \frac{2}{\epsilon^2} \Big\}\Big)+2\enspace,$$
as claimed.
\end{proof}

This example illustrates that the more sharply skewed the probability
distribution over~$Q$, the bigger the gap between the worst-case and the
average-case is. The example in Lemma~\ref{lem-avg-case} is a limiting 
case of such a distribution.

\subsection{Connection to the asymptotic case}

It is worth mentioning that our bounds for the
average-case communication complexity of ARSP in the one-shot scenario
also gives the optimal bounds in the asymptotic scenario established
earlier by Berry
and Sanders~\cite{berry_optimal_2003}. This can be derived using the
\emph{Quantum Asymptotic Equipartition Property} of max-information,
i.e., Theorem~\ref{thm-QAEP-Imax}. In the asymptotic scenario, Alice is
given~$n$ independent and identically distributed inputs. Using the notation from Section~\ref{sec:Average-case-error-communication}, the
target joint state of Alice's input and Bob's output is~$\rho(p)^{\tensor n}$, and the goal is to prepare it approximately on Bob's side with average error~$\epsilon$.

Let~$\sq^*_{p}(\RSP(S,Q),\epsilon)$ denote the asymptotic rate of
communication complexity of ARSP with average error~$\epsilon$. This is
the limit of the communication complexity of preparing~$\rho(p)^{\tensor n}$ with
average-case error~$\epsilon$, divided by~$n$, as~$n\rightarrow \infty$.  By Theorems~\ref{thm-Ub} and~\ref{thm-lb}, we have
$$\lim_{n\rightarrow\infty} \ \frac{1}{n} \;
\Imax^{\epsilon}(A:B)_{\rho(p)^{\tensor n}} \quad \leq \quad
\sq^*_p(\RSP(S,Q),\epsilon) \quad \leq \quad \lim_{n\rightarrow\infty} \ \frac{1}{n} \left( \Imax^{\delta}(A:B)_{\rho(p)^{\tensor n}}+\log_{2} \ln\frac{8}{\epsilon^{2}}+2\right)\enspace,$$
where~$\delta=\frac{\epsilon}{2\sqrt{2}}$. So by
inequalities~\eqref{eq-QAEP-lb} and~\eqref{eq-QAEP-ub} in
Theorem~\ref{thm-QAEP-Imax}, we get the following bounds:
$$\rI(A:B)_{\rho(p)}-2\epsilon \log(\abs{A}\abs{B}) \quad \leq \quad
\sq^*_p(\RSP(S,Q),\epsilon) \quad \leq \quad \rI(A:B)_{\rho(p)}\enspace.$$

\section{On LOCC protocols for transmitting bits}
\label{sec-locc-bits}

In this section, we digress from the main theme of this article; we
characterize the communication required to convey classical bits through
LOCC protocols as in Theorem~\ref{thm-convey info by LOCC}.
We have used this in Section~\ref{sec:Some-Observations} 
to highlight a key difference between
worst-case and average-case protocols for remote state preparation.

Consider the following communication task~$\cT$:
\begin{quote}
Two physically separated parties, Alice and Bob, have unlimited
computational power and can communicate with each other.
Alice is given a uniformly random~$n$-bit string~$X$ unknown
to Bob, that is independent of their initial state.
Alice and Bob communicate with each other so that Bob learns~$X$ with
probability at least~$p \in (0,1]$.
\end{quote}

Consider a classical communication protocol in which Alice sends
exactly~$\lceil n-\log \frac{1}{p}\rceil$ bits of~$X$,
and Bob chooses uniformly random bits as his guess for the remaining 
bits. Then the probability that Bob correctly decodes Alice's message is
at least~$p$. In this section, we show that even if we allow Alice and
Bob to use LOCC protocols, the classical communication complexity of the
task~$\cT$ does not decrease. In other words, in any (potentially two-way)
LOCC protocol for this task, Alice sends at least~$n+\log p$ bits in 
order to achieve success probability at least~$p$
(Theorem~\ref{thm-convey info by LOCC}).
Nayak and Salzman~\cite{nayak_limits_2006} showed that in any two-way
\emph{quantum\/} communication protocol with shared entanglement for the
task~$\cT$, Alice sends at least~$\frac{1}{2}(n+\log p)$ qubits to Bob. 
We obtain Theorem~\ref{thm-convey info by LOCC} by strengthening their proof.

\subsection{Preparation}
\label{Sec-Prep}

In LOCC protocols we assume
that Alice and Bob each have access to an arbitrarily large but finite
supply of qubits in some fixed basis state, say~$\ket{\bar{0}}$. Without
loss of generality, we further assume that during a protocol, each party
performs some unitary operation followed by the
measurement of a subset of qubits in the standard basis. 
Note that any measurement can be implemented in this
manner~\cite[Sec 2.2.8]{nielsen_quantum_2000}. Further, if the subset of
qubits measured is of size~$k$, we may assume that it consists of the
leftmost~$k$ qubits.

We state some properties of protocols and states from
Ref.~\cite{nayak_limits_2006} which are used later in this section.
For completeness we include their proofs here.

\begin{proposition}[\cite{nayak_limits_2006}]
\label{thm-arbitrary-entanglement}
In any communication protocol with prior entanglement and local quantum channels, we may assume that the initial shared quantum state is of the form
\[
(\id_{A}\otimes\Lambda)\sum_{r\in\{0,1\}^{e}}\ket{r}_A\ket{r}_B\enspace,
\]
for some~$\Lambda \coloneqq \sum_{r\in\{0,1\}^{e}}\sqrt{\lambda_{r}}\density{r}$
with~$\lambda_r\geq 0$, $\sum_{r\in\{0,1\}^{e}}\lambda_{r}=1$, and for
some integer~$e\geq 1$.
\end{proposition}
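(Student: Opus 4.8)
The plan is to obtain the canonical form by two standard ``without loss of generality'' reductions, carried out in the model of Section~\ref{Sec-Prep} in which local unitaries, appending fresh ancilla in the state~$\ket{\bar 0}$, and discarding registers are all free, so that only the transmitted classical bits are counted.

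First I would reduce to a \emph{pure} shared state. Given any protocol whose parties begin from a (possibly mixed) shared state~$\rho_{AB}$, I would fix a purification~$\ket{\psi}\in\cH_A\tensor\cH_R\tensor\cH_B$ of~$\rho_{AB}$ and adjoin the reference register~$R$ to Alice's side. Across the cut~$(AR):B$ the state~$\ket{\psi}$ is pure, it is a legitimate shared entangled resource (the parties are free to choose their prior entanglement), and Alice recovers the original protocol simply by never acting on~$R$. Thus every protocol is reproduced by one with a pure initial shared state, using the same number of transmitted bits and achieving the same success probability, so it suffices to treat pure initial states.

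The heart of the argument is the second reduction. For a pure shared state~$\ket{\phi}_{AB}$ I would invoke the Schmidt decomposition, writing~$\ket{\phi}=\sum_r\sqrt{\lambda_r}\,\ket{a_r}_A\ket{b_r}_B$ with orthonormal Schmidt bases~$\{\ket{a_r}\}$,~$\{\ket{b_r}\}$ and~$\lambda_r\geq0$,~$\sum_r\lambda_r=1$. Padding with vanishing coefficients, I would index by~$\{0,1\}^e$ for the least~$e$ with~$2^e$ at least the Schmidt rank. Choosing local unitaries~$U_A,U_B$ with~$U_A\ket{a_r}=\ket{r}$ and~$U_B\ket{b_r}=\ket{r}$ (extended arbitrarily), one gets~$(U_A\tensor U_B)\ket{\phi}=\sum_{r\in\{0,1\}^e}\sqrt{\lambda_r}\,\ket{r}_A\ket{r}_B=(\id_A\tensor\Lambda)\sum_{r\in\{0,1\}^e}\ket{r}_A\ket{r}_B$ with~$\Lambda=\sum_r\sqrt{\lambda_r}\density{r}$, which is exactly the asserted form. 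I would then absorb this change of basis into the protocol: a protocol beginning from~$\ket{\phi}$ is equivalent to one beginning from~$(U_A\tensor U_B)\ket{\phi}$ in which Alice and Bob prepend~$U_A^\adjoint$ and~$U_B^\adjoint$ to their first local operations, which is free and alters neither the communication nor the output distribution.

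I do not expect a serious obstacle here, since the proposition is a structural ``we may assume'' reduction rather than a quantitative claim. The only point needing genuine care is the bookkeeping in the first step: one must verify that handing the purifying register to a single party leaves both the communication cost and the success probability exactly intact (it can only add to that party's local power), so that any bound established for the canonical form — in particular the lower bound of Theorem~\ref{thm-convey info by LOCC} — transfers back to arbitrary protocols. With free local ancilla and free local unitaries this verification is immediate, completing the reduction.
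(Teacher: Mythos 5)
Your argument is correct and essentially matches the paper's proof: both rest on the Schmidt decomposition of the pure shared state, padding to a common number of qubits, and local unitaries relating the Schmidt bases to the computational basis (the paper transforms the canonical state into the original one before running the protocol, whereas you prepend $U_A^\adjoint,U_B^\adjoint$ to the first local operations -- the same idea in the opposite direction). Your explicit preliminary purification step is a harmless addition that the paper leaves implicit by assuming the prior entanglement is pure.
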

\begin{proof}
Without loss of generality, assume that Alice and Bob hold~$e_A$
and~$e_B$ qubits of the initial state, respectively, where~$e_B \geq
e_A$.
Let~$\ket{\phi}=\sum_{i\in\{0,1\}^{e_A}}\sqrt{\gamma_i}\ket{a_i}_A\ket{b_i}_B$
be a Schmidt decomposition of the initial shared state.

We define a new protocol in which
Alice and Bob start with the shared state~$\ket{\psi} \coloneqq
\sum_{r\in\{0,1\}^{e_B}}\sqrt{\lambda_r}\ket{r}_A\ket{r}_B$,
where~$\lambda_{\bar{0}s}=\gamma_s$ for~$s\in\{0,1\}^{e_A}$ and is zero
otherwise. The state simplifies to
$$\sum_{i\in\{0,1\}^{e_A}}\sqrt{\gamma_i}\ket{\bar{0},i}_A \ket{\bar{0},i}_B\enspace.$$
Using appropriate local unitary operators, Alice and Bob produce
the state~$\ket{\phi}$ (tensored with some fixed pure state), and then
run the original protocol.
\end{proof}

\begin{proposition}[\cite{nayak_limits_2006}]
\label{thm:exchange unitary performation party}
For any linear transformation~$T$ on~$e$ qubits and any orthonormal set
$\{\ket{\phi_{a}}:a\in\{0,1\}^{e}\}$ over $e'\geq e$ qubits,
\begin{equation*}\label{eq:exchange unitary performation}
\sum_{a\in\{0,1\}^e}T\ket{a}\otimes\ket{\phi_a}\quad=\quad\sum_{a\in\{0,1\}^{e}}\ket{a}\otimes\tilde{T}\ket{\phi_{a}}\enspace,
\end{equation*}
where~$\tilde{T}$ is any transformation on~$e'$ qubits such that
for all $a'\in\{0,1\}^{e}$,
$\tilde{T}\ket{\phi_{a'}}= \sum_{a \in
\{0,1\}^{e}} \bra{a'}T\ket{a} \ket{\phi_a}$. If~$T$ is a unitary operation, then we may take~$\tilde{T}$ to be a
unitary operation on~$e'$ qubits.
\end{proposition}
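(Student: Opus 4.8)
The plan is to prove the identity by a direct computation in the computational basis, and then treat the unitary claim separately by verifying that $\tilde{T}$ preserves inner products.

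First I would expand $T\ket{a}$ in the computational basis as $T\ket{a}=\sum_{b}\bra{b}T\ket{a}\,\ket{b}$, substitute this into the left-hand side, and interchange the order of the two finite summations. This rewrites the left side as $\sum_{b}\ket{b}\otimes\big(\sum_{a}\bra{b}T\ket{a}\,\ket{\phi_{a}}\big)$. Comparing with the right-hand side $\sum_{b}\ket{b}\otimes\tilde{T}\ket{\phi_{b}}$ (after renaming the summation index), the two sides agree precisely when $\tilde{T}\ket{\phi_{b}}=\sum_{a}\bra{b}T\ket{a}\,\ket{\phi_{a}}$ for every $b$, which is exactly the defining property of $\tilde{T}$ stated in the proposition. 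Since $\{\ket{\phi_{a}}\}$ is orthonormal, hence linearly independent, prescribing the images $\tilde{T}\ket{\phi_{a}}$ determines a well-defined linear map on $\Span\{\ket{\phi_{a}}\}$, which I extend arbitrarily (say by $0$) to all $e'$ qubits. This establishes the identity.

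For the unitary case, I would set $\ket{\psi_{a'}}\coloneqq\tilde{T}\ket{\phi_{a'}}=\sum_{a}\bra{a'}T\ket{a}\,\ket{\phi_{a}}$ and compute $\braket{\psi_{a'}}{\psi_{b'}}$. Using orthonormality $\braket{\phi_{a}}{\phi_{c}}=\delta_{ac}$, the double sum collapses to $\sum_{a}\overline{\bra{a'}T\ket{a}}\,\bra{b'}T\ket{a}=\sum_{a}\bra{b'}T\ket{a}\bra{a}T^{\adjoint}\ket{a'}=\bra{b'}TT^{\adjoint}\ket{a'}$, where I have used $\sum_{a}\ket{a}\bra{a}=\id$ on the $e$-qubit space. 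Since $T$ is unitary, $TT^{\adjoint}=\id$, so this equals $\braket{b'}{a'}=\delta_{a'b'}$. Thus $\tilde{T}$ carries the orthonormal family $\{\ket{\phi_{a'}}\}$ to the orthonormal family $\{\ket{\psi_{a'}}\}$, i.e.\ it is an isometry on $\Span\{\ket{\phi_{a}}\}$.

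Finally, because $e'\geq e$, both the domain $\Span\{\ket{\phi_{a}}\}$ and its image lie inside the $2^{e'}$-dimensional space and have the same dimension, so the isometry extends to a unitary operation on all $e'$ qubits by mapping an orthonormal basis of the orthogonal complement of the domain to one of the orthogonal complement of the image. The step requiring the most care is this last extension: one must note that an inner-product-preserving map defined only on a subspace genuinely extends to a global unitary, which relies on the image having the same dimension as the domain (guaranteed by the orthonormality just verified) and on there being enough ambient dimension.
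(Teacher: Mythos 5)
Your argument for the displayed identity is exactly the paper's: expand $T\ket{a}$ in the computational basis, interchange the two finite sums, and recognize the defining property of $\tilde{T}$. The paper dismisses the unitary case as ``straightforward''; your explicit verification that $\tilde{T}$ sends the orthonormal family $\{\ket{\phi_{a}}\}$ to an orthonormal family (using $TT^{\adjoint}=\id$) and then extends to a unitary on all $e'$ qubits is correct and fills in that omitted detail.
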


\begin{proof} 
Since the set~$\{\ket{a}:a\in\{0,1\}^e\}$ is an orthonormal basis for the Hilbert space of~$e$ qubits, we have
\begin{align*}
\sum_{a\in\{0,1\}^e}T\ket{a}\ket{\phi_a}\quad = &\quad \sum_a\sum_{a'}\bra{a'}T\ket{a}\ket{a'}\ket{\phi_a}\\
= & \quad \sum_{a'}\ket{a'}\sum_a \bra{a'} T \ket{a}\ket{\phi_a}   \\
= &\quad \sum_{a'} \ket{a'}\tilde{T}\ket{\phi_{a'}}\enspace,
\end{align*}
as claimed.
The second part of the proposition is straightforward.
\end{proof}

We also use this property in the following form in our analysis. The proof is
straightforward, and is omitted.
  
\begin{corollary}
\label{cor-EPR-locc}
For any controlled unitary operation~$T \coloneqq \sum_{z \in
\set{0,1}^m} \density{z} \tensor T_z$ on a classical-quantum register
with~$m$ bits and~$e$ qubits, and collections of orthonormal sets
$\{\ket{\psi_{za}}:a\in\{0,1\}^{e}\}$ over $e'$ qubits with~$e' \ge e$ and~$z \in \set{0,1}^m$,
\begin{equation*}
\label{eqn-EPR-locc}
\sum_{z \in \set{0,1}^m} \sum_{a\in\{0,1\}^e}
T\ket{za} \otimes \ket{z}\ket{\psi_{za}}
    \quad = \quad \sum_{z \in \set{0,1}^m} \sum_{a\in\{0,1\}^{e}}
        \ket{za} \otimes \tilde{T} (\ket{z} \ket{\psi_{za}}) \enspace,
\end{equation*}
where~$\tilde{T} \coloneqq \sum_{z \in \set{0,1}^m} \density{z} 
\tensor \tilde{T}_z$, and~$(\tilde{T}_z)$ is a sequence of unitary 
transformations on~$e'$ qubits such that for all~$z \in \set{0,1}^m$
and~$a'\in\{0,1\}^{e}$, $\tilde{T}_z \ket{\psi_{za'}} = \sum_{a \in \set{0,1}^e} \bra{a'} T_z
\ket{a} \ket{\psi_{za}}$.

\end{corollary}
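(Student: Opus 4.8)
The plan is to reduce the statement to Proposition~\ref{thm:exchange unitary performation party} by handling each value of the control register~$z$ separately. Since~$T$ is controlled, $T\ket{za} = \ket{z}\tensor T_z\ket{a}$, so the left-hand side decomposes into a sum over~$z$ in which~$\ket{z}$ appears unchanged both in the control slot of the target register and in the control slot of the ancilla register, and the only~$a$-dependence remaining is carried by~$T_z\ket{a}$ on the~$e$ target qubits and by~$\ket{\psi_{za}}$ on the~$e'$ ancilla qubits.

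First I would fix~$z \in \set{0,1}^m$ and apply Proposition~\ref{thm:exchange unitary performation party} with the unitary~$T_z$ in place of~$T$ and the orthonormal set~$\set{\ket{\psi_{za}} : a \in \set{0,1}^e}$ (orthonormal by hypothesis) over~$e'\ge e$ qubits. This produces
\[
\sum_{a\in\set{0,1}^e} T_z\ket{a} \tensor \ket{\psi_{za}}
    \quad = \quad \sum_{a\in\set{0,1}^e} \ket{a} \tensor \tilde{T}_z\ket{\psi_{za}},
\]
where~$\tilde{T}_z$ may be taken to be a unitary on~$e'$ qubits (available because~$T_z$ is unitary) satisfying~$\tilde{T}_z\ket{\psi_{za'}} = \sum_a \bra{a'}T_z\ket{a}\ket{\psi_{za}}$, precisely the relation demanded in the statement.

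Next I would reinstate the two control registers. As the factors~$\ket{z}$ in the control slots of the target and ancilla registers are independent of~$a$, they pull out of the~$a$-sum by multilinearity, lifting the per-$z$ identity to one in which each summand reads~$\ket{z}\tensor\ket{a}\tensor\ket{z}\tensor\tilde{T}_z\ket{\psi_{za}}$. Summing over~$z$ and recombining via~$\ket{z}\tensor T_z\ket{a} = T\ket{za}$, $\ket{z}\tensor\ket{a} = \ket{za}$, and~$\ket{z}\tensor\tilde{T}_z\ket{\psi_{za}} = \tilde{T}(\ket{z}\ket{\psi_{za}})$ with~$\tilde{T} \coloneqq \sum_z \density{z}\tensor\tilde{T}_z$ would then yield the claimed equality. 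The operator~$\tilde{T}$ is unitary since it is a controlled unitary built from the unitaries~$\tilde{T}_z$.

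I do not expect a genuine obstacle here; the only step requiring care is the bookkeeping of tensor-factor ordering --- in particular, confirming that the ancilla control register~$\ket{z}$, which sits between the target qubits and the ancilla qubits touched by the exchange identity, stays inert across the~$a$-sum and can be carried along unchanged. Once the register layout is pinned down, the corollary follows as an immediate blockwise application of the preceding proposition.
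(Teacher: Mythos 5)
Your argument is correct and is precisely the argument the paper has in mind: the paper omits the proof as ``straightforward,'' intending exactly this blockwise reduction to Proposition~\ref{thm:exchange unitary performation party} for each fixed value of the control~$z$, with the spectator~$\ket{z}$ factors pulled through the~$a$-sum by linearity. Your attention to the tensor-factor bookkeeping and to the per-$z$ orthonormality hypothesis covers the only points that need checking.
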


\subsection{One-way LOCC protocols}
\label{sec:One-way-communication-LOCC}

As a warm-up, we prove the analogue of Theorem~\ref{thm-convey info by LOCC}
for one-way LOCC protocols.

\begin{theorem}
\label{thm:convey info by 1-way LOCC}
Let~$Y$ be Bob's output in any one-way LOCC protocol for task~$\cT$ when
Alice receives uniformly distributed~$n$-bit input~$X$. Let~$p \coloneqq \Pr[Y=X]$ be the probability that Bob gets the output~$X$. Then
\[
m \quad\geq\quad n-\log \frac{1}{p}\enspace,
\]
where $m$ is the number of classical bits Alice sends to Bob in the
protocol.
\end{theorem}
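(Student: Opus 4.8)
The plan is to prove the lower bound $m \geq n - \log\tfrac{1}{p}$ for one-way LOCC protocols by reducing the analysis to a clean linear-algebraic statement about the transcript. The key observation is that in a one-way protocol, Alice's single message to Bob carries a classical label, and Bob's success probability is controlled by how well his conditional states (given each message value) can be distinguished. I would track the global pure state of the protocol and use the structural normalizations from Section~\ref{Sec-Prep} (Proposition~\ref{thm-arbitrary-entanglement} and Corollary~\ref{cor-EPR-locc}) to put everything in a convenient form.

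**First I would** set up the state. By Proposition~\ref{thm-arbitrary-entanglement}, assume Alice and Bob share $(\id_A \otimes \Lambda)\sum_r \ket{r}_A\ket{r}_B$, and that each party acts by a unitary followed by a standard-basis measurement. On input $x \in \{0,1\}^n$, Alice applies a controlled unitary and measures to produce a message $M$ of some bit-length; let the message alphabet have size at most $2^m$. Using Corollary~\ref{cor-EPR-locc}, I can push Alice's controlled unitary across the shared entanglement onto Bob's side, so that conditioned on input $x$ and message value $\mu$, Bob holds a (subnormalized) pure state $\ket{\beta_{x,\mu}}$, and the probability of message $\mu$ is $\|\ket{\beta_{x,\mu}}\|^2$. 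Bob then measures to output his guess $Y$. The quantity $\Pr[Y=X]$, averaged over the uniform $X$, becomes
\[
p \quad=\quad \frac{1}{2^n}\sum_{x}\sum_{\mu} \Pr[\text{message}=\mu \mid x]\;\Pr[Y=x \mid x,\mu]\enspace.
\]

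**The heart of the argument** is a counting/dimension bound. For each fixed message value $\mu$, conditioned on receiving $\mu$, Bob performs a single measurement (a POVM) and must correctly identify $x$ among the inputs that could have produced $\mu$. The crucial point is that $X$ is independent of the initial shared state, so the collection of Bob's conditional states $\{\ket{\beta_{x,\mu}}\}_x$ (for a fixed $\mu$) lives in Bob's Hilbert space, but more importantly, the \emph{information about $x$} reaching Bob is entirely channeled through which message $\mu$ arrived. I would make this precise by bounding, for each $\mu$, the expected contribution $\sum_x \Pr[\mu\mid x]\Pr[Y=x\mid x,\mu]$. Since the entanglement is independent of $X$ and Alice's only $x$-dependent action is the choice of message, for any fixed message value the ensemble of Bob's states carries no residual which-input information beyond $\mu$ itself; summing the success contributions over the at most $2^m$ message values and dividing by $2^n$ yields $p \le 2^m/2^n$, which rearranges to $m \ge n - \log\tfrac{1}{p}$. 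Concretely, I expect the clean inequality to take the form $\sum_x \Pr[Y=x] \le 2^m$ (an ``at most $2^m$ distinguishable outcomes'' bound), where the left side equals $2^n p$.

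**The main obstacle** I anticipate is rigorously justifying that the independence of $X$ from the shared entanglement forces the $x$-dependence of Bob's states to factor through the message alphabet in the way claimed — i.e.\ establishing the bound $2^n p = \sum_x \Pr[Y=x] \le 2^m$ without hand-waving. Nayak and Salzman's original technique handles the two-way quantum case via a trace-norm/information-theoretic argument, and here the subtlety is that Bob's conditional states for different inputs need not be orthogonal, so a naive ``distinguishability'' count must be replaced by a careful averaging argument. The resolution is that, because Alice sends only a classical label, the global input-to-output channel factors as: uniform $X \to$ message $M$ (at most $2^m$ values) $\to$ guess $Y$, and a purely classical Fano-type or counting bound then gives $\Pr[Y=X] \le 2^m/2^n$. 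This one-way case is genuinely simpler than Theorem~\ref{thm-convey info by LOCC} precisely because the entire $X$-dependence is funneled through one classical message, so no delicate round-by-round entanglement bookkeeping is needed.
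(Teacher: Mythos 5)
Your setup is right (pushing Alice's controlled unitary across the entanglement, obtaining subnormalized conditional states $\ket{\beta_{x,\mu}}$ on Bob's side), but the step you yourself flag as the ``main obstacle'' is resolved incorrectly. You claim that because Alice sends only a classical label, the protocol factors as a Markov chain $X \to M \to Y$, so that ``for any fixed message value the ensemble of Bob's states carries no residual which-input information beyond $\mu$ itself,'' and a purely classical Fano-type counting bound finishes the job. This contradicts your own notation: the conditional states $\ket{\beta_{x,\mu}}$ genuinely depend on $x$ and not only on $\mu$, because Alice's measurement steers Bob's half of the entangled state. If the Markov property $X \to M \to Y$ actually held, remote state preparation itself --- the subject of this paper --- would be impossible with fewer classical bits than are needed to describe $x$; the entire point of RSP is that Bob's post-message state carries which-input information beyond the transcript. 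So the inequality $\sum_x \Pr[Y=x \mid X=x] \le 2^m$ cannot be obtained by a classical counting argument over the message alphabet.

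The paper closes this gap with an operator inequality rather than a classical one. Writing the initial state as $(\id\tensor\Lambda)\sum_r\ket{r}\ket{r}$ with $\trace(\Lambda\Lambda^*)=1$ and replacing Alice's unitary $U_x$ by $U_x^{\transpose}$ acting on Bob's side, Bob's subnormalized post-message state is
\[
\xi_x \;=\; \sum_{z\in\{0,1\}^m}\density{z}\tensor \Lambda U_x^{\transpose}(\density{z}\tensor\id)\overline{U}_x\Lambda^* \;\le\; \id\tensor\Lambda\Lambda^* \enspace,
\]
where the identity acts on the $2^m$-dimensional message space; the right-hand side has trace $2^m$ and is \emph{independent of $x$}. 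Then $\sum_x\trace(P_x\xi_x)\le \trace(\id\tensor\Lambda\Lambda^*)=2^m$ because the projectors $P_x$ sum to the identity. Your target inequality $2^n p = \sum_x \Pr[Y=x\mid X=x]\le 2^m$ is exactly what this yields, but the justification must go through this uniform operator upper bound on the family $\{\xi_x\}_x$, not through a conditional-independence claim that fails in the presence of shared entanglement.
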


\begin{proof} Using Proposition~\ref{thm-arbitrary-entanglement}, we
assume that the initial shared entangled state is~$\sum_{r\in\{0,1\}^{e}}\ket{r}\Lambda\ket{r}$
for some~$\Lambda \coloneqq \sum_{r\in\{0,1\}^{e}}\sqrt{\lambda_{r}}\density{r}$
with~$\lambda_r\geq 0$ and~$\sum_{r\in\{0,1\}^{e}}\lambda_{r}=1$, and
some~$e\geq 1$. As explained in Section~\ref{Sec-Prep}, first Alice performs a
unitary transformation on her part of the initial state depending on her
input~$X$ and measures the left-most~$m$ qubits in the standard basis.
Let~$U_{x}$ be the unitary operation Alice uses when she is given~$x$ as
input. After the unitary operation~$U_{x}$ is performed, the joint state is
\begin{align*}
(U_{x}\otimes\id)(\id\otimes\Lambda)\sum_{r\in\{0,1\}^{e}}\ket{r}\otimes\ket{r}\quad= & \quad(\id\otimes\Lambda)(U_{x}\otimes\id)\sum_{r\in\{0,1\}^{e}}\ket{r}\otimes\ket{r}\\
= & \quad(\id\otimes\Lambda)(\id\otimes
U_{x}^{\transpose})\sum_{r\in\{0,1\}^{e}}\ket{r}\otimes\ket{r} &
\textrm{(By Proposition~\ref{thm:exchange unitary performation party})}\\
= & \quad\sum_{r\in\{0,1\}^{e}}\ket{r}\Lambda U_{x}^{\transpose}\ket{r}\enspace.
\end{align*}
Then Alice measures the state as described above and sends Bob the outcome of her measurement. Bob's state after this step is
\[
\xi_{x} \quad = \quad \sum_{z\in\{0,1\}^{m}}\density{z}\otimes\Lambda
U_{x}^{\transpose}(\density{z}\tensor\id)\overline{U}_{x}\Lambda^{*}\enspace.
\]
Note that
\begin{equation}\label{eq:Note}
\xi_{x} \quad = \quad(\id\tensor\Lambda)\left(
\sum_{z\in\{0,1\}^{m}}\density{z}\otimes
U_{x}^{\transpose}(\density{z}\tensor\id)\overline{U}_{x}\right)
(\id\tensor\Lambda^{*})\quad \leq \quad (\id \tensor \Lambda
\Lambda^{*})\enspace,
\end{equation}
where the identity operator acts on a~$2^m$ dimensional space.
Finally, Bob performs a projective measurement~$\{P_{y}\}_{y\in\{0,1\}^{n}}$ on
his qubits, and gets as outcome the random variable~$Y$. The success probability~$p$ of the protocol is
\begin{align*}
\Pr[X=Y]\quad= & \quad\sum_{x\in\{0,1\}^{n}}\Pr[X=x]\Pr[Y=x|X=x]\\
= & \quad\sum_{x\in\{0,1\}^{n}}\frac{1}{2^{n}} \; \trace\left(P_{x} \xi_{x}\right)\\
\leq & \quad \frac{1}{2^{n}} \sum_x \trace \big(P_x(\id \otimes
\Lambda\Lambda^*)\big) & \textrm{(By equation~\eqref{eq:Note}})\\
= & \quad \frac{1}{2^{n}} \; \trace (\id \otimes \Lambda\Lambda^*)\\
= & \quad \frac{2^m}{2^n}\enspace.
\end{align*}
We conclude that~$m\geq n+\log p$.
\end{proof}

\subsection{The extension to two-way LOCC protocols}
\label{sec:The-extension-generalLOCC}

We now extend the above result to any two-way LOCC protocol. In
particular we prove Theorem~\ref{thm-convey info by LOCC}, which we restate
here for convenience. 

\begin{theorem} \label{thm-convey info by LOCC-revision} Let~$Y$ be
Bob's output in any two-way LOCC protocol for task~$\cT$ when Alice
receives uniformly distributed~$n$-bit input~$X$. Let~$m_A$ be the total
number of bits Alice sends to Bob, and~$p \coloneqq \Pr[Y=X]$ be the 
probability that Bob produces output~$X$. Then
\[
m_{A}\quad\geq\quad n-\log \frac{1}{p}\enspace.
\]
\end{theorem}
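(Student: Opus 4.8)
The plan is to carry a single $x$-independent operator that dominates Bob's reduced state and whose trace grows by a factor $2^k$ each time Alice sends $k$ bits, while staying put when Bob sends bits or acts locally. First I would invoke Proposition~\ref{thm-arbitrary-entanglement} to put the initial shared state in the canonical form $(\id\otimes\Lambda)\sum_{r\in\{0,1\}^e}\ket r\ket r$ with $\trace(\Lambda\Lambda^{\dagger})=1$, and adopt the Section~\ref{Sec-Prep} convention that each party's move is a transcript-controlled unitary followed by a standard-basis measurement of its leftmost qubits. I would then maintain the global pure state in the form $\sum_t\ket t_{\mathrm{tr}}\otimes(\id_{R_A}\otimes G_x^t)\sum_r\ket r_{R_A}\ket r_{R_B}$, where Alice holds only the reference register $R_A$ and her copy of the classical transcript, and all operations are absorbed into an operator $G_x^t$ on Bob's side; tracing out $R_A$ gives Bob's reduced state $\xi_x=\sum_t\density t\otimes G_x^t(G_x^t)^{\dagger}$.

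The invariant to prove by induction on rounds is: there is an operator $\sigma_a\ge 0$, independent of $x$, with $\trace(\sigma_a)\le 2^a$ and $\xi_x\le\sigma_a$ for every input $x$, where $a$ is the number of bits Alice has sent so far. The base case is $\sigma_0=\Lambda\Lambda^{\dagger}$. For a round in which Bob sends bits, his move is a fixed instrument $\mathcal E$ (a transcript-controlled unitary followed by a pinching that records the outcome) that is independent of $x$ because Bob never sees $x$; since $\mathcal E$ is completely positive and trace preserving, $\xi_x\mapsto\mathcal E(\xi_x)\le\mathcal E(\sigma_a)$, and $\mathcal E(\sigma_a)$ is again $x$-independent with trace at most $2^a$, so the invariant holds with $a$ unchanged.

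The crux is a round in which Alice sends $k$ bits. Using Corollary~\ref{cor-EPR-locc} I would transfer Alice's transcript-and-input-controlled unitary $U_x^t$ across the EPR pairs, replacing it by $(U_x^t)^{\transpose}$ acting on $R_B$, so that $G_x^t\mapsto G_x^t(U_x^t)^{\transpose}$; Alice's measurement of $k$ qubits then reads out $z\in\{0,1\}^k$ on $R_B$ and appends it to Bob's transcript. A short computation mirroring the one-way argument of Theorem~\ref{thm:convey info by 1-way LOCC} gives $\xi_x'=\sum_{t,z}\density{t,z}\otimes G_x^t(U_x^t)^{\transpose}(\density z\otimes\id)(U_x^t)^{*}(G_x^t)^{\dagger}$. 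Writing $W\coloneqq(U_x^t)^{\transpose}$, the key operator inequality is $W(\density z\otimes\id)W^{\dagger}\le\id$, which is preserved under conjugation by $G_x^t$; summing over $z$ in each transcript block yields $\xi_x'\le\id_{2^k}\otimes\xi_x$. Combined with the inductive hypothesis $\xi_x\le\sigma_a$, this gives $\xi_x'\le\id_{2^k}\otimes\sigma_a=:\sigma_{a+k}$, still $x$-independent and with trace $2^k\trace(\sigma_a)\le 2^{a+k}$. The main obstacle is getting this step right: one must check that the canonical form survives a general two-way interaction (Bob's transcript-controlled moves, and replenishing EPR pairs as Alice's measured qubits disentangle, which is harmless given a large enough initial supply), and that the $x$-dependence genuinely cancels, which works only because the dominating operator entering the round is already $x$-independent and one merely tensors on a fresh $\id_{2^k}$.

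Finally, at termination $a=m_A$, so $\xi_x\le\sigma_{m_A}$ with $\trace(\sigma_{m_A})\le 2^{m_A}$. Bob's final projective measurement $\{P_y\}_{y\in\{0,1\}^n}$ satisfies $\sum_y P_y\le\id$, so
\[
p=\Pr[Y=X]=\frac{1}{2^n}\sum_x\trace(P_x\xi_x)\le\frac{1}{2^n}\trace\Big(\big(\textstyle\sum_x P_x\big)\sigma_{m_A}\Big)\le\frac{\trace(\sigma_{m_A})}{2^n}\le 2^{m_A-n},
\]
which rearranges to $m_A\ge n+\log p=n-\log\frac1p$, as claimed. Since Bob's rounds never increase $a$, a protocol in which Bob sends the first message is handled automatically, with no separate reduction needed.
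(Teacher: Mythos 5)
Your proposal is correct and follows essentially the same route as the paper: the canonical initial form from Proposition~\ref{thm-arbitrary-entanglement}, the transpose trick of Corollary~\ref{cor-EPR-locc} to push Alice's transcript-controlled unitaries onto Bob's side, and a final trace-counting argument against an $x$-independent dominating operator of trace at most $2^{m_A}$. The only difference is presentational: you carry the round-by-round bookkeeping as an operator inequality $\xi_x \leq \sigma_a$ with $\trace(\sigma_a) \leq 2^a$, whereas the paper's Lemma~\ref{thm-joint state in 2way} derives the exact form of the final joint state and reads off the same bound $\xi_x \leq \Lambda\Lambda^{*}$ with $\trace(\Lambda\Lambda^{*}) = 2^{m_A}$.
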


To prove the theorem, we characterise the joint state of Alice and Bob at the end of a bounded round LOCC protocol.

\begin{lemma}
\label{thm-joint state in 2way}
Let~$\Pi$ be a bounded round LOCC protocol. Let~$e$ be the initial number
of qubits with each of Alice and Bob,
$q$ be the total number of bits sent by Alice to Bob,
$q'$ be the total number of bits sent by Bob to Alice,
and~$m$ be the total number of bits exchanged in~$\Pi$ (so~$m = q+q'$).
Then Alice and Bob's joint state at the end of the protocol
(before the measurement for producing the output) can be written as
\[
\sum_{z\in\{0,1\}^{m}}\ \sum_{r,s\in\{0,1\}^{e-q}}
\ket{z,r}\!\bra{z,s}_A\tensor 
\Lambda \ket{\phi_{z,r}}\!\bra{\phi_{z,s}}_B \Lambda^{*} \enspace,
\]
where
\begin{enumerate}
\item $A$ and~$B$ are classical-quantum registers with $m$-bit classical
parts that contain the transcript of the protocol; register~$A$ is with
Alice, and~$B$ with Bob,

\item $\Lambda$ is a linear transformation that maps classical-quantum states with~$m$ bits and~$e$ qubits
to classical-quantum states of the same form, depends only on the initial joint state 
and the unitary transformations applied by Bob, and
satisfies~$\trace(\Lambda \Lambda^{*})=2^{q}$;
and

\item $\{\ket{\phi_{z,r}}\}$ is an orthonormal set of classical-quantum 
states of the form~$\ket{\phi_{z,r}} \coloneqq \ket{z}\ket{\psi_{z,r}}$ 
over~$m$-bits and~$e$ qubits, and depends only on the
initial joint state and the unitary transformations applied by Alice.
\end{enumerate}
\end{lemma}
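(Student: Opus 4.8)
The plan is to proceed by induction on the number of rounds, maintaining the stated decomposition as the inductive invariant while working in a \emph{coherent} representation of the protocol: rather than performing each measurement destructively, I defer it by copying the measured qubits into a shared classical \emph{transcript} register held by both parties, so that conditioned on the transcript~$z$ the global state stays pure. This is exactly the structure of the claimed density operator, which is diagonal in~$z$ and, for each fixed~$z$, factors as a single (sub-normalized) pure state $\sum_{r}\ket{z,r}_A\tensor\Lambda\ket{\phi_{z,r}}_B$. For the base case (no communication, $m=q=q'=0$) I invoke Proposition~\ref{thm-arbitrary-entanglement} to assume the initial shared state is $(\id_A\tensor\Lambda_0)\sum_{r\in\set{0,1}^e}\ket{r}_A\ket{r}_B$; taking $\Lambda=\Lambda_0$ and $\ket{\phi_r}=\ket{r}$ gives the required form with $\trace(\Lambda\Lambda^*)=\sum_r\lambda_r=1=2^0$.

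For the inductive step I treat an Alice round and a Bob round separately. In an Alice round she applies a unitary controlled by her input and the transcript, i.e.\ an operator $T=\sum_z\density{z}\tensor T_z$ acting on her private qubits. The key move is to push this operation off Alice's register and onto Bob's~$\phi$ states using Corollary~\ref{cor-EPR-locc} (the EPR ``transpose'' trick already used in the one-way case, Theorem~\ref{thm:convey info by 1-way LOCC}): since the current $\set{\ket{\phi_{z,r}}=\ket{z}\ket{\psi_{z,r}}}$ are orthonormal, $\sum_{z,r}T\ket{z,r}_A\tensor\ket{\phi_{z,r}}_B=\sum_{z,r}\ket{z,r}_A\tensor\tilde{T}\ket{\phi_{z,r}}_B$, and $\tilde{T}$ is again a transcript-controlled unitary, so $\ket{\phi'_{z,r}}\coloneqq\tilde{T}\ket{\phi_{z,r}}$ is still orthonormal and absorbs Alice's operation while leaving~$\Lambda$ untouched. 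Alice then measures~$k$ of her leftmost private qubits and sends the outcome; in the coherent picture this splits the private index, appends the measured $k$ bits to the transcript on both sides, shrinks her private register from $e-q$ to $e-q-k$ qubits, and enlarges the register on which~$\Lambda$ acts by an $\id_{2^k}$ block on the new transcript bits. This is precisely the step that scales $\trace(\Lambda\Lambda^*)$ by~$2^k$, keeping the invariant $\trace(\Lambda\Lambda^*)=2^q$ under the update $q\mapsto q+k$, while the new transcript ket is folded into $\ket{\phi''_{z,r}}=\ket{z}\ket{\psi'_{z,r}}$ (with the new bits appended to~$z$), preserving orthonormality for each fixed transcript.

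A Bob round is handled symmetrically but with the roles of $\Lambda$ and the $\phi$ family reversed: his transcript-controlled unitary is absorbed directly into~$\Lambda$ (multiplying it by a unitary, hence preserving $\trace(\Lambda\Lambda^*)$ and leaving the $\phi$ family unchanged, consistent with ``$\Lambda$ depends only on Bob's operations''), and his measure-and-send step extends the shared transcript by the $k'$ bits he transmits. Because Bob's transmitted bits are generated coherently from his own state rather than from fresh maximally mixed qubits, this step introduces no identity factor and so leaves both $q$ and $\trace(\Lambda\Lambda^*)$ unchanged, increasing only $q'$ and $m$. Iterating over the bounded number of rounds and reading off the final values of $m=q+q'$, $q$, and $q'$ yields the claimed form.

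The main obstacle I expect is the careful bookkeeping of the measure-and-send steps in the coherent representation --- in particular, justifying the asymmetry that each bit Alice sends multiplies $\trace(\Lambda\Lambda^*)$ by exactly~$2$ whereas Bob's unitaries and transmissions preserve it, and checking that the ``classical on the transcript'' (diagonal in~$z$) structure and the per-transcript orthonormality of $\set{\ket{\phi_{z,r}}}$ survive every projection. Everything else is the routine propagation of the three invariants through one round via Corollary~\ref{cor-EPR-locc}.
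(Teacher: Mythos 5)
Your proposal is correct and follows essentially the same route as the paper: induction on the number of rounds with the base case given by Proposition~\ref{thm-arbitrary-entanglement}, Alice's transcript-controlled unitaries transferred onto the~$\set{\ket{\phi_{z,r}}}$ family via Corollary~\ref{cor-EPR-locc}, Bob's unitaries and projections absorbed into~$\Lambda$, and the same asymmetric bookkeeping in which each bit Alice sends enlarges~$\Lambda$ by an identity block (scaling~$\trace(\Lambda\Lambda^*)$ by~$2$) while Bob's rounds preserve it. The ``coherent transcript'' framing is just a restatement of the block-diagonal-in-$z$ structure the paper's induction already maintains, so there is no substantive difference.
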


\begin{proof} Suppose that~$\Pi$ is a~$t$-round LOCC protocol.
Let~$\rho_{i}$ be the joint state of Alice and Bob after~$i$-th round,
and~$m_{i}$ be the total number of bits exchanged by Alice and Bob in
the first~$i$ rounds, of which $q_i$ bits are sent by Alice,
for~$1\leq i \leq t$.  Let~$\rho_0$ be their initial state.

We prove the lemma by induction on~$t$.

\textbf{Base Case: }
Suppose that~$\Pi$ is a zero communication LOCC protocol, i.e.,~$t=0$.
By Proposition~\ref{thm-arbitrary-entanglement}, we have
\[
\rho_{0}\quad = \quad \sum_{r,s\in\{0,1\}^{e}}\ketbra{r}{s}\tensor\Lambda\ketbra{r}{s}\Lambda^{*}\enspace,
\]
where~$\Lambda=\sum_{r\in\{0,1\}^{e}}\sqrt{\lambda_{r}}\density{r}$
for some~$\lambda_r\geq 0$ and~$\sum_{r}\lambda_{r}=1$.
Since~$\trace(\Lambda\Lambda^{*})=1$, the state~$\rho_{0}$ satisfies 
the claimed properties.

\textbf{Induction Hypothesis:} Suppose the lemma holds for any~$l$-round LOCC
protocol, for some~$l \ge 0$.

\textbf{Inductive Step: } Suppose that~$\Pi$ is an~($l+1$)-round
protocol. By the induction hypothesis, after the first~$l$ rounds 
of communication we have
\[\rho_{l} \quad = \quad 
    \sum_{z\in\{0,1\}^{m_{l}}} \: \sum_{r,s\in\{0,1\}^{e-q_{l}}}
        \ket{z,r}\!\bra{z,s}\tensor 
        \Lambda_{l}\ket{\phi_{z,r}}\!\bra{\phi_{z,s}}\Lambda_{l}^{*}\enspace,
\]
where $\Lambda_{l}$ and $\ket{\phi_{z,r}}$ satisfy
the properties stated in the lemma. In particular,
suppose~$\ket{\phi_{z,r}} \coloneqq \ket{z}\ket{\psi_{z,r}}$ for each~$z,r$.
We show that at the end of the protocol~$\rho_{l+1}$ is in the required
form as well. Consider the~($l+1$)-th round of~$\Pi$.

\textbf{Case (1): }
Suppose that the communication in the last round is from Alice to Bob.
Alice applies a unitary transformation~$U \coloneqq \sum_z \density{z}
\tensor U_z$, which acts on the quantum
part of her register, controlled by the classical part of her register.
She then measures the~$k$ leftmost qubits in the standard basis, appends
the outcome to the message transcript in her classical register, and 
sends the outcome~$a$ of her measurement 
to Bob. The joint state after applying~$U$ is
\begin{align*}
\MoveEqLeft{
(U\otimes\id)(\id\otimes\Lambda_{l})\left[
\sum_{\substack{r,s\in\{0,1\}^{e-q_{l}} \\
z\in\{0,1\}^{m_{l}}}}\ket{z,r}\!\bra{z,s} 
\tensor \ket{\phi_{z,r}}\!\bra{\phi_{z,s}}\right]
(\id\otimes\Lambda_{l}^{*})(U^{*}\otimes\id) } \\
 & = \quad (\id\otimes\Lambda_{l})
     \left[ \sum_{r,s,z}U\ket{z,r}\!\bra{z,s}U^{*} 
         \tensor \ket{\phi_{z,r}}\!\bra{\phi_{z,s}}
     \right] (\id\otimes\Lambda_{l}^{*}) \\
 & = \quad (\id\otimes\Lambda_{l})
     \left[ \sum_{r,s,z}\ket{z,r}\!\bra{z,s} 
         \tensor \tilde{U} \ket{\phi_{z,r}}\!\bra{\phi_{z,s}}
         \tilde{U}^{*}
     \right] 
     (\id\otimes\Lambda_{l}^{*}) \enspace, 
\end{align*}
where~$\tilde{U} := \sum_z \density{z} \tensor \tilde{U}_z$ is the 
unitary operation given by Corollary~\ref{cor-EPR-locc}. 
After Alice performs her measurement and sends the measurement 
outcome~$a$ to Bob, say he stores the message in register~$M$.
Denote by~$\id_M \tensor \Lambda_{l} \tilde{U}$ the
operator~$\Lambda_{l} \tilde{U}$ on the registers originally
with Bob, extended to include the register~$M$. (The order of the
operators in tensor product does not represent the order of the
registers.) The joint state then may be expressed as below.

\begin{align*}
\rho_{l+1}\quad =\quad \sum_{\substack{r'\!,s'\in\{0,1\}^{e-(q_{l}+k)}
\\ a \in\{0,1\}^{k} \\  z\in\{0,1\}^{m_{l}}}} \ket{za,r'}\!\bra{za,s'}
\otimes (\id_M \tensor \Lambda_{l} \tilde{U}) (\density{z} \tensor
\density{a}_M \tensor \ket{\psi_{z, ar'}}\!\bra{\psi_{z, as'}})
(\id_M \tensor \tilde{U}^{*} \Lambda_{l}^{*})
\enspace,
\end{align*}

where~$\Lambda_{l} \tilde{U}$ acts on the classical-quantum register
with Bob before the message was sent.
We define~$\Lambda_{l+1} \coloneqq \id_M \tensor \Lambda_{l}$,
and~$\ket{\phi_{z',r'}} \coloneqq \ket{za} \tensor \tilde{U}_z
\ket{\psi_{z, ar'}}$, where~$z' \coloneqq za$.
Noting that~$m_{l+1}=m_{l}+k$ and $q_{l+1}=q_{l}+k$, 
we have
\begin{align*}
\rho_{l+1} \quad = \quad
    \sum_{\substack{r',s'\in\{0,1\}^{e-q_{l+1}}\\
        z'\in\{0,1\}^{m_{l+1}}}}
        \ket{z',r'}\!\bra{z',s'}
        \tensor \Lambda_{l+1} \ket{\phi_{z',r'}}\!\bra{\phi_{z',s'}}
            \Lambda_{l+1}^{*}\enspace.
\end{align*}
Further note that~$\trace(\Lambda_{l+1}\Lambda_{l+1}^{*}) = 2^{q_{l+1}}$
and~$\{\ket{\phi_{z',r'}}\}$ is an orthonormal set of the claimed form.

\textbf{Case (2):}
Suppose that the communication in the last round is from Bob to Alice.
Bob applies a unitary transformation~$V \coloneqq \sum_z \density{z}
\tensor V_z$ to the quantum part of his
register, controlled by the classical part of his register.
Then he measures the~$k$ leftmost qubits (say in sub-register~$L$) 
in the standard basis, and appends the outcome~$b$ to the message
transcript, in classical register~$M$.
Finally, he sends the outcome~$b$ of the measurement to Alice.
Denote by~$\id_M \tensor (\bra{b}_L \tensor \id) V \Lambda_{l}$, the
extension of the operator~$(\bra{b}_L \tensor \id) V \Lambda_{l}$ to
include the register~$M$. (Here, the order of the operators in tensor
product does not represent the order of the registers on which they
act. The same applies to the operator~$\Lambda_{l+1}$ defined below.)
The joint state then is as follows.
\begin{align*}
\rho_{l+1} \quad = \quad
    \sum_{\substack{r,s\in\{0,1\}^{e-q_{l}}\\
              b\in\{0,1\}^{k}\\
              z\in\{0,1\}^{m_{l}}}
    } \ket{zb,r}\!\bra{zb,s} \tensor
    (\id_M \tensor (\bra{b}_L \tensor \id) V \Lambda_{l})
    (\density{zb} \tensor \ket{\psi_{z,r}}\!\bra{\psi_{z,s}}) 
    (\id_M \tensor \Lambda_{l}^{*} V^{*} (\ket{b}_L \otimes \id))
    \enspace.
\end{align*}
Note that~$q_{l+1}=q_{l}$, and~$m_{l+1}=m_{l}+k$.
Define~$\Lambda_{l+1} \coloneqq \sum_b \density{b}_M \tensor
(\bra{b}_L \otimes \id ) V \Lambda_{l}$
and~$\ket{\phi_{z',r'}} = \ket{zb} \tensor \ket{\psi_{z,br'}}$,
where~$z' \coloneqq zb$.
It is straightforward to verify that~$\trace{(\Lambda_{l+1}
\Lambda_{l+1}^{*})} = 2^{q_{l+1}}$, the set~$\{\ket{\phi_{z',r'}}\}$ is
of the claimed form, and 
\begin{align*}
\rho_{l+1} \quad = \quad
    \sum_{\substack{r',s'\in\{0,1\}^{e-q_{l+1}}\\z'\in\{0,1\}^{m_{l+1}}}}
        \ket{z',r'}\!\bra{z', s'}
        \tensor \Lambda_{l+1} \ket{\phi_{z',r'}}\!\bra{\phi_{z',s'}}
        \Lambda_{l+1}^{*}\enspace.
\end{align*}
This completes the proof.
\end{proof}

We are ready to prove
Theorem~\ref{thm-convey info by LOCC}, restated in this section as
Theorem~\ref{thm-convey info by LOCC-revision}.

\begin{proofof}{Theorem~\ref{thm-convey info by LOCC-revision}} 
By Lemma~\ref{thm-joint state in 2way}, at the end of any two-way
LOCC protocol, when Alice has input~$x \in \set{0,1}^n$,
Bob's state before performing his final measurement to get~$Y$ is
\[
\xi_x\quad =\quad \sum_{\substack{r\in\{0,1\}^{e-m_{\!A}}\\
z\in\{0,1\}^{m}}} \Lambda\ket{\phi_{z,r}(x)}\!\bra{\phi_{z,r}(x)}\Lambda^{*}\enspace,
\]
for some linear transformation~$\Lambda$
with~$\trace(\Lambda\Lambda^{*})=2^{m_{A}}$ and
 orthonormal set~$\{\ket{\phi_{z,r}(x)}\}_{z,r}$.
The transformation~$\Lambda$ only depends on Bob's unitary
operations and the initial state, and is therefore independent of 
Alice's input~$x$.
Note that
 \begin{equation}
\label{eq:1}
 \xi_x\quad \leq \quad \Lambda \Lambda^{*}\enspace.
 \end{equation}
After Bob performs his final projective
measurement~$\{P_{y}\}_{y\in\{0,1\}^n}$ and gets the output~$Y$, the
probability of correctly recovering an input~$X$ chosen uniformly at
random is
\begin{align*}
p\quad \coloneqq \quad \Pr[Y=X]\quad
= & \quad \frac{1}{2^n}\sum_{x\in\{0,1\}^{n}} \trace (P_x \xi_x)\\
\leq & \quad \frac{1}{2^n}\sum_x \trace (P_x \Lambda \Lambda^*)& \mathrm{(Equation\ \eqref{eq:1})}\\
= & \quad  \frac{1}{2^n} \; \trace (\Lambda \Lambda^*)\quad
= \quad \frac{2^{m_A}}{2^n}\enspace.
\end{align*}
 Therefore, we have~$m_{A}\geq n-\log\frac{1}{p}$, as required.
\end{proofof}

\section{Conclusion\label{part:Conclusions-and-Outlook}}

In this article, we studied the communication complexity of remote state
preparation in the one-shot scenario. Our main results can be summarized as follows:
\begin{itemize}
\item The communication complexity of remote state preparation with
bounded average-case error~$\epsilon$ can be characterized tightly in terms of the smooth max-information Bob's output has about Alice's input.
\item The communication complexity of remote state preparation with
bounded worst-case error~$\epsilon$ can be characterized in terms of a
similar natural expression involving smooth max-relative entropy.
\end{itemize}
The bounds we derive for the worst-case communication complexity are
provably tighter than earlier ones. We also show out how protocols 
that guarantee low worst-case error necessarily use more communication
than those that require low error on average.
In the process, we strengthen a lower bound on LOCC protocols for
transmitting classical bits.
 
In this work, we focused on the remote preparation of a possibly mixed 
quantum state. However, often the quantum state to be remotely prepared  is
entangled with other systems (``the environment''). We can consider the
problem of preparing an approximation of the quantum state such that its
entanglement with other systems does not change significantly. This
problem has been studied in asymptotic
scenario~\cite{bennett_remote_2005,berry_optimal_2003}.
Berta~\cite{berta_single-shot_2008} implicitly studied this problem in
the one-shot scenario by considering the \textit{quantum state merging\/} problem, and showed that the minimal entanglement cost needed for this problem is equal to minus the~$\epsilon$-smooth conditional min-entropy of Alice's register conditioned on the environment, while classical communication is allowed for free. Note that the entanglement cost is defined as the difference between the number of bits of pure entanglement at the beginning and at the end of the process.
It would be interesting to characterize the minimum classical
communication of such ``faithful'' ARSP in terms of non-asymptotic
information theoretic quantities. 

\bibliographystyle{plain}
\bibliography{Bibliography}

\begin{thebibliography}{10}

\bibitem{Aaronson05}
Scott Aaronson.
\newblock Guest column: {NP}-complete problems and physical reality.
\newblock {\em SIGACT News}, 36(1):30--52, March 2005.

\bibitem{bab_hadiashar_communication_2014}
Shima Bab~Hadiashar.
\newblock {\em Communication {Complexity} of {Remote} {State} {Preparation}}.
\newblock Master's {Thesis}, University of Waterloo, Waterloo, Canada, May
  2014.

\bibitem{bennett_teleporting_1993}
Charles~H. Bennett, Gilles Brassard, Claude Cr{\'e}peau, Richard Jozsa, Asher
  Peres, and William~K. Wootters.
\newblock Teleporting an unknown quantum state via dual classical and
  {Einstein-Podolsky-Rosen} channels.
\newblock {\em Physical Review Letters}, 70(13):1895--1899, March 1993.

\bibitem{bennett_remote_2001}
Charles~H. Bennett, David~P. {DiVincenzo}, Peter~W. Shor, John~A. Smolin,
  Barbara~M. Terhal, and William~K. Wootters.
\newblock Remote state preparation.
\newblock {\em Physical Review Letters}, 87(7):077902, July 2001.

\bibitem{bennett1996Mixed}
Charles~H. Bennett, David~P. DiVincenzo, John~A. Smolin, and William~K.
  Wootters.
\newblock Mixed-state entanglement and quantum error correction.
\newblock {\em Physical Review A}, 54:3824--3851, Nov 1996.

\bibitem{bennett_remote_2005}
Charles~H. Bennett, Patrick Hayden, Debbie~W. Leung, Peter~W. Shor, and Andreas
  Winter.
\newblock Remote preparation of quantum states.
\newblock {\em {IEEE} Transactions on Information Theory}, 51(1):56--74,
  January 2005.

\bibitem{berry_optimal_2003}
Dominic~W. Berry and Barry~C. Sanders.
\newblock Optimal remote state preparation.
\newblock {\em Physical Review Letters}, 90:057901, February 2003.

\bibitem{berta_single-shot_2008}
Mario Berta.
\newblock {\em Single-Shot Quantum State Merging}.
\newblock Diploma thesis, {ETH}, Zurich, February 2008.

\bibitem{berta_quantum_2011}
Mario Berta, Matthias Christandl, and Renato Renner.
\newblock The {Quantum Reverse Shannon} theorem based on one-shot information
  theory.
\newblock {\em Communications in Mathematical Physics}, 306(3):579--615,
  September 2011.

\bibitem{bjelakovic2003quantum}
Igor Bjelakovic and Rainer Siegmund-Schultze.
\newblock Quantum {Stein's} lemma revisited, inequalities for quantum
  entropies, and a concavity theorem of {Lieb}.
\newblock Technical Report arXiv:quant-ph/0307170v2, arXiv.org, 2003.

\bibitem{BSS89}
Lenore Blum, Mike Shub, and Steve Smale.
\newblock On a theory of computation and complexity over the real numbers:
  {NP}-completeness, recursive functions and universal machines.
\newblock {\em Bulletin of the American Mathematical Society}, 21:1--46, 1989.

\bibitem{BD11}
Fernando G. S.~L. Brand{\~a}o and Nilanjana Datta.
\newblock One-shot rates for entanglement manipulation under non-entangling
  maps.
\newblock {\em IEEE Transactions on Information Theory}, 57(3):1754--1760,
  March 2011.

\bibitem{BD10}
Francesco Buscemi and Nilanjana Datta.
\newblock The quantum capacity of channels with arbitrarily correlated noise.
\newblock {\em IEEE Transactions on Information Theory}, 56(3):1447--1460,
  March 2010.

\bibitem{chitambar_everything_2012}
Eric Chitambar, Debbie Leung, Laura Mancinska, Maris Ozols, and Andreas Winter.
\newblock Everything {You} {Always} {Wanted} to {Know} {About} {LOCC} ({But}
  {Were} {Afraid} to {Ask}).
\newblock {\em Communications in Mathematical Physics}, 328(1):303--326, 2014.

\bibitem{ciganovic_smooth_2013}
Nikola Ciganovic, Normand~J. Beaudry, and Renato Renner.
\newblock Smooth max-information as one-shot generalization for mutual
  information.
\newblock {\em Information Theory, IEEE Transactions on}, 60(3):1573--1581,
  March 2014.

\bibitem{datta_min-_2009}
Nilanjana Datta.
\newblock Min- and max-relative entropies and a new entanglement monotone.
\newblock {\em {IEEE} Transactions on Information Theory}, 55(6):2816--2826,
  June 2009.

\bibitem{devetak_low-entanglement_2001}
Igor Devetak and Toby Berger.
\newblock Low-entanglement remote state preparation.
\newblock {\em Physical Review Letters}, 87(19), October 2001.

\bibitem{DW17}
Dawei Ding and Mark~M. Wilde.
\newblock Strong converse exponents for the feedback-assisted classical
  capacity of entanglement-breaking channels.
\newblock Technical Report arXiv:1506.02228v4 [quant-ph], arXiv.org, 2017.

\bibitem{dupuis_generalized_2013}
Fr{\'e}d{\'e}ric Dupuis, Lea Kraemer, Philippe Faist, Joseph~M. Renes, and
  Renato Renner.
\newblock Generalized entropies.
\newblock In {\em Proceedings of the {XVIIth} International Congress on
  Mathematical Physics, Aalborg, Denmark, 2012}, pages 134--153, August 2012.

\bibitem{Gilchrist2005Distance}
Alexei Gilchrist, Nathan~K. Langford, and Michael~A. Nielsen.
\newblock Distance measures to compare real and ideal quantum processes.
\newblock {\em Physical Review A}, 71:062310, Jun 2005.

\bibitem{hayashi_remote_2003}
Akihisa Hayashi, Takeji Hashimoto, and Minoru Horibe.
\newblock Remote state preparation without oblivious conditions.
\newblock {\em Physical Review A}, 67:052302, May 2003.

\bibitem{Hayashi17}
Masahito Hayashi.
\newblock Role of hypothesis testing in quantum information.
\newblock Technical Report arXiv:1709.07701 [quant-ph], arXiv.org, 2017.

\bibitem{hayashi2002general}
Masato Hayashi and Hiroshi Nagaoka.
\newblock A general formula for the classical capacity of a general quantum
  channel.
\newblock In {\em Proceedings of the IEEE International Symposium on
  Information Theory}. IEEE, 2002.

\bibitem{hiai1991}
Fumio Hiai and D{\'e}nes Petz.
\newblock The proper formula for relative entropy and its asymptotics in
  quantum probability.
\newblock {\em Communications in Mathematical Physics}, 143(1):99--114, 1991.

\bibitem{jain_communication_2006}
Rahul Jain.
\newblock Communication complexity of remote state preparation with
  entanglement.
\newblock {\em Quantum Information \& Computation}, 6(4):461--464, July 2006.

\bibitem{jain_short_2012}
Rahul Jain and Ashwin Nayak.
\newblock Short proofs of the quantum {Substate} theorem.
\newblock {\em {IEEE} Transactions on Information Theory}, 58(6):3664--3669,
  June 2012.

\bibitem{jain_separation_2010}
Rahul Jain, Ashwin Nayak, and Yi~Su.
\newblock A separation between divergence and {Holevo} information for
  ensembles.
\newblock {\em Mathematical Structures in Computer Science}, 20(5):977--993,
  2010.

\bibitem{jain_privacy_2002}
Rahul Jain, Jaikumar Radhakrishnan, and Pranab Sen.
\newblock Privacy and interaction in quantum communication complexity and a
  theorem about the relative entropy of quantum states.
\newblock In {\em Proceedings of the 43rd Annual {IEEE} Symposium on
  Foundations of Computer Science}, pages 429--438, 2002.

\bibitem{jain_prior_2005}
Rahul Jain, Jaikumar Radhakrishnan, and Pranab Sen.
\newblock Prior entanglement, message compression and privacy in quantum
  communication.
\newblock In {\em Proceedings of the Twentieth Annual {IEEE} Conference on
  Computational Complexity, 2005}, pages 285--296, June 2005.

\bibitem{jain_property_2009}
Rahul Jain, Jaikumar Radhakrishnan, and Pranab Sen.
\newblock A property of quantum relative entropy with an application to privacy
  in quantum communication.
\newblock {\em Journal of the {ACM}}, 56(6):33:1--33:32, September 2009.

\bibitem{leditzky2016strong}
Felix Leditzky, Mark~M. Wilde, and Nilanjana Datta.
\newblock Strong converse theorems using {R{\'e}nyi} entropies.
\newblock {\em Journal of Mathematical Physics}, 57(8):082202, 2016.

\bibitem{leung_oblivious_2003}
Debbie~W. Leung and Peter~W. Shor.
\newblock Oblivious remote state preparation.
\newblock {\em Physical Review Letters}, 90(12):127905, March 2003.

\bibitem{lo_classical-communication_2000}
Hoi-Kwong Lo.
\newblock Classical-communication cost in distributed quantum-information
  processing: A generalization of quantum-communication complexity.
\newblock {\em Physical Review A}, 62(1):012313, June 2000.

\bibitem{matthews_finite_2012}
William Matthews and Stephanie Wehner.
\newblock Finite blocklength converse bounds for quantum channels.
\newblock {\em {IEEE} Transactions on Information Theory}, 60(11):7317--7329,
  2014.

\bibitem{nayak_limits_2006}
Ashwin Nayak and Julia Salzman.
\newblock Limits on the ability of quantum states to convey classical messages.
\newblock {\em Journal of the {ACM}}, 53(1):184 -- 206, January 2006.

\bibitem{nayak_bit-commitment_2003}
Ashwin Nayak and Peter Shor.
\newblock Bit-commitment-based quantum coin flipping.
\newblock {\em Physical Review A}, 67:012304, January 2003.

\bibitem{nielsen_quantum_2000}
Michael~A. Nielsen and Isaac~L. Chuang.
\newblock {\em Quantum Computation and Quantum Information}.
\newblock Cambridge university press, 2010.

\bibitem{ogawa2000strong}
Tomohiro Ogawa and Hiroshi Nagaoka.
\newblock Strong converse and {Stein's} lemma in quantum hypothesis testing.
\newblock {\em IEEE Transactions on Information Theory}, 46(7):2428--2433,
  2000.

\bibitem{osborne_course_1994}
Martin~J. Osborne and Ariel Rubinstein.
\newblock {\em A Course in Game Theory}.
\newblock {MIT} Press, 1994.

\bibitem{rastegin2002relative}
Alexey~E. Rastegin.
\newblock Relative error of state-dependent cloning.
\newblock {\em Physical Review A}, 66(4):042304, 2002.

\bibitem{rastegin2003lower}
Alexey~E. Rastegin.
\newblock A lower bound on the relative error of mixed-state cloning and
  related operations.
\newblock {\em Journal of Optics B: Quantum and Semiclassical Optics},
  5(6):S647, 2003.

\bibitem{rastegin2006sine}
Alexey~E. Rastegin.
\newblock Sine distance for quantum states.
\newblock Technical Report arxiv:quant-ph/0602112, arXiv.org, 2006.

\bibitem{renner_security_2005}
Renato Renner.
\newblock {\em Security of Quantum Key Distribution}.
\newblock {PHD} thesis, {ETH}, Zurich, December 2005.

\bibitem{renner_smooth_2004}
Renato Renner and Stefan Wolf.
\newblock Smooth {R{\'e}nyi} entropy and applications.
\newblock In {\em {IEEE} International Symposium on Information Theory}, page
  233, 2004.

\bibitem{tomamichel_duality_2010}
Marco Tomamichel, Roger Colbeck, and Renato Renner.
\newblock Duality between smooth min- and max-entropies.
\newblock {\em {IEEE} Transactions on Information Theory}, 56(9):4674--4681,
  September 2010.

\bibitem{tomamichel_hierarchy_2013}
Marco Tomamichel and Masahito Hayashi.
\newblock A hierarchy of information quantities for finite block length
  analysis of quantum tasks.
\newblock {\em {IEEE} Transactions on Information Theory}, 59(11):7693--7710,
  November 2013.

\bibitem{Touchette15}
Dave Touchette.
\newblock Quantum information complexity.
\newblock In {\em Proceedings of the Forty-Seventh Annual ACM on Symposium on
  Theory of Computing}, pages 317--326. ACM, 2015.

\bibitem{wang_one-shot_2012}
Ligong Wang and Renato Renner.
\newblock One-shot classical-quantum capacity and hypothesis testing.
\newblock {\em Physical Review Letters}, 108:200501, May 2012.

\bibitem{watrous-2015}
John Watrous.
\newblock {\em Theory of Quantum Information}.
\newblock 2016.
\newblock Book draft, September 2016. Available at
  \texttt{https://cs.uwaterloo.ca/\~{}watrous/TQI/}~.

\bibitem{WWY14}
Mark~M. Wilde, Andreas Winter, and Dong Yang.
\newblock Strong converse for the classical capacity of entanglement-breaking
  and hadamard channels via a sandwiched {R{\'e}enyi} relative entropy.
\newblock {\em Communications in Mathematical Physics}, 331(2):593--622,
  October 2014.

\bibitem{yao_quantum_1993}
Andrew Chi-Chih Yao.
\newblock Quantum circuit complexity.
\newblock In {\em Proceedings of the 34th Annual Symposium on Foundations of
  Computer Science, 1993}, pages 352--361, November 1993.

\end{thebibliography}

\appendix 

\section{Some properties of entropic quantities}
\label{App-proof of properties}

In this section, we present the proofs of some properties of
information-theoretic quantities stated in Section~\ref{sec:Asymptotic
Information Theory}.
For convenience, we restate the properties here.

\begin{proposition}[Proposition~\ref{thm-classical-smi}]
Let~$\rho_{AB}\in\sD(\cH'\tensor\cH)$
be a bipartite quantum state that is classical on~$A$. For any~$\epsilon\ge0$, there exists~$\rho'_{AB}\in\sB^{\epsilon}(\rho_{AB})\cap\sD(\cH'\tensor\cH)$
classical on~$A$ such that 
\[
\Imax^{\epsilon}(A:B)_{\rho}\quad=\quad\Imax(A:B)_{\rho'}\enspace.
\]
\end{proposition}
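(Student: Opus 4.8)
The plan is to start from an optimal smoothing state for $\Imax^{\epsilon}(A:B)_{\rho}$ and massage it into one that is simultaneously classical on~$A$ and normalized, without changing its max-information and without leaving the ball~$\sB^{\epsilon}(\rho)$. First I would fix a minimizer~$\tilde{\rho}\in\sB^{\epsilon}(\rho)$ attaining~$\Imax^{\epsilon}(A:B)_{\rho}$; such a state exists because~$\sB^{\epsilon}(\rho)$ is compact (we work in finite dimensions) and~$\tilde{\rho}\mapsto\Imax(A:B)_{\tilde{\rho}}$ is lower semicontinuous. The two facts I will lean on are monotonicity of max-information and of fidelity under quantum channels (Proposition~\ref{thm-monotonicity-mi}, together with the stated monotonicity of~$\rF$), and the scale-invariance of~$\Imax(A:B)$ under multiplying the state by a positive constant. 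The latter is immediate from the scaling behaviour of~$\Dmax$: since~$\Dmax(c\omega\|c\tau)=\Dmax(\omega\|\tau)$ and~$(c\omega)_{A}\tensor\sigma_{B}=c(\omega_{A}\tensor\sigma_{B})$, the defining inequality~$\omega_{AB}\leq2^{\lambda}\omega_{A}\tensor\sigma_{B}$ is unchanged under rescaling, so the minimum over~$\sigma_{B}$ is as well.

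The first step is to enforce classicality on~$A$. Let~$\Phi:\sL(\cH')\rightarrow\sL(\cH')$ be the completely dephasing channel~$\Phi(X)=\sum_{i}\bra{e_{i}}X\ket{e_{i}}\density{e_{i}}$ in the standard basis of~$\cH'$, and set~$\rho''\coloneqq(\Phi\tensor\id)(\tilde{\rho})$, which is classical on~$A$ by construction. Because~$\rho$ is already classical on~$A$, we have~$(\Phi\tensor\id)(\rho)=\rho$, so monotonicity of fidelity under channels gives~$\rP(\rho,\rho'')\leq\rP(\rho,\tilde{\rho})\leq\epsilon$; as~$\Phi\tensor\id$ is completely positive and trace-preserving,~$\rho''$ is sub-normal, hence~$\rho''\in\sB^{\epsilon}(\rho)$. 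Monotonicity of max-information (Proposition~\ref{thm-monotonicity-mi}) then yields~$\Imax(A:B)_{\rho''}\leq\Imax(A:B)_{\tilde{\rho}}=\Imax^{\epsilon}(A:B)_{\rho}$, while membership in the ball forces the reverse inequality; hence~$\Imax(A:B)_{\rho''}=\Imax^{\epsilon}(A:B)_{\rho}$.

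The second step, which I expect to be the only genuinely delicate point, is to replace~$\rho''$ by its normalization while remaining inside the ball. For~$\epsilon<1$ we have~$\rF(\rho,\rho'')>0$, so~$t\coloneqq\trace(\rho'')\in(0,1]$, and I set~$\rho'\coloneqq\rho''/t\in\sD(\cH'\tensor\cH)$, still classical on~$A$; scale-invariance of~$\Imax$ then immediately gives~$\Imax(A:B)_{\rho'}=\Imax(A:B)_{\rho''}=\Imax^{\epsilon}(A:B)_{\rho}$. The point needing care is that normalization might a priori push the state away from~$\rho$. Here I would exploit that~$\rho$ is normalized, so the correction term in the extended fidelity vanishes and~$\rF(\rho,\rho'')=\trace\sqrt{\sqrt{\rho}\,\rho''\sqrt{\rho}}$; homogeneity of the operator square root then yields~$\rF(\rho,\rho')=t^{-1/2}\rF(\rho,\rho'')\geq\rF(\rho,\rho'')$ since~$t\leq1$, whence~$\rP(\rho,\rho')\leq\rP(\rho,\rho'')\leq\epsilon$ and~$\rho'\in\sB^{\epsilon}(\rho)\cap\sD(\cH'\tensor\cH)$ is the desired state. (The case~$\epsilon\geq1$ is trivial, taking~$\rho'=\rho$.) The crux of the argument is thus the observation that scaling a sub-normal state up toward a normalized reference can only increase the fidelity, which is precisely what makes the normalization step compatible with the smoothing ball.
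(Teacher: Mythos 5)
Your argument is correct and is essentially the proof in the paper: both take an optimal smoothing state $\tilde{\rho}$, pass to its normalization (using that scaling a sub-normal state up towards a normalized reference cannot decrease fidelity, so the state stays in $\sB^{\epsilon}(\rho)$, while the defining operator inequality for $\Imax$ is scale-invariant), and dephase on $A$, concluding via monotonicity of fidelity and of max-information under channels together with optimality of $\tilde{\rho}$. The only differences are cosmetic --- you dephase before normalizing whereas the paper normalizes first (the two steps commute since the dephasing channel is trace-preserving), and you make explicit the $t^{-1/2}$ scaling of fidelity that the paper leaves implicit in the claim $\rP(\omega,\rho)\le\rP(\tilde{\rho},\rho)$.
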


\begin{proof} Let~$\lambda=\Imax^{\epsilon}(A:B)_{\rho}$, and ~$\tilde{\rho}_{AB}\in\sB^{\epsilon}(\rho_{AB})$
and~$\sigma_{B}\in\sD(\cH)$ be two quantum states for which 
\[
\tilde{\rho}_{AB}\quad\le\quad2^{\lambda} \; \tilde{\rho}_{A}\tensor\sigma_{B}\enspace.
\]
Without loss of generality, we assume that~$\tilde{\rho}_{AB}$ has trace
equal to one, i.e.,~$\tilde{\rho}_{AB} \in
\sB^{\epsilon}(\rho_{AB})\cap\sD(\cH'\otimes\cH)$. If not, we consider
the state~$\omega_{AB} \coloneqq  \tfrac{\tilde{\rho}_{AB}}{\trace
(\tilde{\rho}_{AB})}$ instead of~$\tilde{\rho}_{AB}$. Since~$\rho$
has trace~$1$, $\rP(\omega, \rho) \le \rP(\tilde{\rho}, \rho)$. Further,
$\omega_{AB} \leq 2^\lambda \, \omega_A \tensor \sigma_B$.

Let~$\Phi_{A}:\rL(\cH)\rightarrow\rL(\cH)$ be a quantum-to-classical channel such that:
$$\Phi_{A}(X)=\sum_{i}\bra{e_{i}}X\ket{e_{i}}\density{e_{i}}$$
for all~$X\in\rL(\cH)$, where~$\{\ket{e_{i}}\}$ is the standard
basis for~$\rL(\cH)$. Let~$\rho'_{AB}=(\Phi_{A}\tensor\id_{B})(\tilde{\rho}_{AB})$.
By the definition of~$\rho'_{AB}$ and the monotonicity of purified
distance~$\rho'_{AB}\in\sB^{\epsilon}(\rho_{AB})\cap \sD(\cH'\otimes\cH)$.

By optimality of~$\tilde{\rho}_{AB}$, we have
\[
\Imax^{\epsilon}(A:B)_{\rho}\quad=\quad\Imax(A:B)_{\tilde{\rho}}\quad\leq\quad \Imax(A:B)_{\rho'} \enspace,
\]
and by Proposition~$\ref{thm-monotonicity-mi}$, monotonicity of smooth
max-information, we have
\[
\Imax(A:B)_{\rho'}\quad\leq\quad\Imax(A:B)_{\tilde{\rho}}\enspace.
\]
Therefore, we conclude that
\[
\Imax^{\epsilon}(A:B)_{\rho}\quad=\quad\Imax(A:B)_{\rho'}\enspace,
\]
where~$\rho'_{AB}\in\sB^{\epsilon}(\rho_{AB})\cap \sD(\cH'\otimes\cH)$ and is classical on~$A$.
\end{proof}

\begin{proposition} [Proposition~\ref{thm-covexity of Be in p}]
Let~$\rho_{AB}(p)\in\sD(\cH_{A}\otimes\cH_{B})$ be a state classical
on~$A$ such that the distribution on~$A$ is given by the probability
vector $p$. Let~$\rho_{A}(p)=\trace_{B}(\rho_{AB}(p))$,
and~$\sigma\in\sD(\cH_{B})$ be a quantum state on Hilbert
space $\cH_{B}$. Then the function $\upbeta^{\epsilon}(\rho_{AB}(p)
\,\|\, \rho_{A}(p)\otimes\sigma)$
is convex with respect to~$p$.
\end{proposition}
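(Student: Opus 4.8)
The plan is to exploit the classical-quantum structure on $A$ to reduce the operator optimization defining $\upbeta^{\epsilon}$ to a family of independent optimizations indexed by~$x$, and then to build an explicit feasible point for a convex combination of distributions.

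First I would write~$\rho_{AB}(p)=\sum_{x}p_{x}\density{x}_{A}\otimes\rho_{x}$, where the conditional states~$\rho_{x}\in\sD(\cH_{B})$ do not depend on~$p$, and note that the reference state~$\rho_{A}(p)\otimes\sigma=\sum_{x}p_{x}\density{x}_{A}\otimes\sigma$ is also block-diagonal with respect to the basis~$\{\ket{x}\}$ of~$\cH_{A}$. Since both arguments of~$\upbeta^{\epsilon}$ are fixed by the dephasing (pinching) channel~$\mathcal{P}(\cdot)=\sum_{x}(\density{x}_{A}\otimes\id)(\cdot)(\density{x}_{A}\otimes\id)$, replacing any feasible~$Q$ by~$\mathcal{P}(Q)$ leaves both~$\langle Q,\rho_{AB}(p)\rangle$ and~$\langle Q,\rho_{A}(p)\otimes\sigma\rangle$ unchanged (as~$\mathcal{P}$ is self-adjoint and fixes these two states) while preserving~$0\leq Q\leq\id$. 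Hence the infimum is attained by some~$Q=\sum_{x}\density{x}_{A}\otimes Q_{x}$ with~$0\leq Q_{x}\leq\id$, and
\[
\upbeta^{\epsilon}(\rho_{AB}(p)\|\rho_{A}(p)\otimes\sigma)\quad=\quad\inf\Big\{\textstyle\sum_{x}p_{x}\langle Q_{x},\sigma\rangle \; : \; 0\leq Q_{x}\leq\id,\ \textstyle\sum_{x}p_{x}\langle Q_{x},\rho_{x}\rangle\geq1-\epsilon\Big\}\enspace.
\]

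Next, to prove convexity I would fix distributions~$p^{(0)},p^{(1)}$ and~$\theta\in[0,1]$, set~$p:=\theta p^{(0)}+(1-\theta)p^{(1)}$, and take families~$(Q_{x}^{(0)})$ and~$(Q_{x}^{(1)})$ achieving the infima above for~$p^{(0)}$ and~$p^{(1)}$ (the infimum is attained, as recorded after the definition of~$\upbeta^{\epsilon}$). The key step is the correct reweighting: for each~$x$ with~$p_{x}>0$ define
\[
Q_{x}\quad:=\quad\frac{\theta p_{x}^{(0)}}{p_{x}}\,Q_{x}^{(0)}+\frac{(1-\theta)p_{x}^{(1)}}{p_{x}}\,Q_{x}^{(1)}\enspace,
\]
and~$Q_{x}:=0$ otherwise. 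Because the two coefficients are nonnegative and sum to~$1$, each~$Q_{x}$ is a convex combination of operators in~$[0,\id]$, so~$0\leq Q_{x}\leq\id$; moreover the identity~$p_{x}Q_{x}=\theta p_{x}^{(0)}Q_{x}^{(0)}+(1-\theta)p_{x}^{(1)}Q_{x}^{(1)}$ holds for every~$x$ (including~$p_{x}=0$).

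Finally I would verify feasibility and compute the objective using this identity. It gives~$\sum_{x}p_{x}\langle Q_{x},\rho_{x}\rangle=\theta\sum_{x}p_{x}^{(0)}\langle Q_{x}^{(0)},\rho_{x}\rangle+(1-\theta)\sum_{x}p_{x}^{(1)}\langle Q_{x}^{(1)},\rho_{x}\rangle\geq\theta(1-\epsilon)+(1-\theta)(1-\epsilon)=1-\epsilon$, so~$(Q_{x})$ is feasible for~$p$; and the same identity gives~$\sum_{x}p_{x}\langle Q_{x},\sigma\rangle=\theta\,\upbeta^{\epsilon}(\rho_{AB}(p^{(0)})\|\rho_{A}(p^{(0)})\otimes\sigma)+(1-\theta)\,\upbeta^{\epsilon}(\rho_{AB}(p^{(1)})\|\rho_{A}(p^{(1)})\otimes\sigma)$. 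Since the left-hand side is at least the infimum for~$p$, convexity follows. The main obstacle is getting the reweighting right: an unweighted convex combination of the~$Q_{x}$'s would fail, and one must check that dividing by~$p_{x}$ interacts correctly with both the feasibility constraint and the objective, together with the minor bookkeeping at~$p_{x}=0$.
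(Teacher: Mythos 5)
Your proof is correct, but it takes a genuinely different route from the paper's. The paper introduces a classical flag register~$X$ via an explicit quantum channel~$\Phi$ that splits~$\rho_{AB}(q)$ into~$\lambda\,\density{0}\tensor\rho_{AB}(p_{0})+(1-\lambda)\,\density{1}\tensor\rho_{AB}(p_{1})$, uses the data processing inequality (Proposition~\ref{thm-DPI-beta}) in both directions to show that attaching the flag leaves~$\upbeta^{\epsilon}$ unchanged, and then exhibits the single block test~$Q=\sum_{x\in\{0,1\}}\density{x}\tensor Q_{x}$ built from the two optimal tests as a feasible point for the flagged state. You instead work directly on the~$A$ register: the pinching argument correctly reduces the optimization to families~$(Q_{x})$ of per-symbol tests (both states are invariant under the dephasing, which is self-adjoint and unital, so the restriction is lossless), and the reweighting~$Q_{x}=\frac{\theta p^{(0)}_{x}}{p_{x}}Q^{(0)}_{x}+\frac{(1-\theta)p^{(1)}_{x}}{p_{x}}Q^{(1)}_{x}$ is exactly the right feasible point; your handling of the~$p_{x}=0$ case and of feasibility is sound. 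What each approach buys: the paper's flag construction is shorter given that DPI for~$\upbeta^{\epsilon}$ is already available, and it never needs to know the structure of the optimal test; your argument is more self-contained and elementary, and as a by-product yields the reformulation of~$\upbeta^{\epsilon}$ for classical-quantum states as~$\inf\{\sum_{x}p_{x}\langle Q_{x},\sigma\rangle:\ 0\leq Q_{x}\leq\id,\ \sum_{x}p_{x}\langle Q_{x},\rho_{x}\rangle\geq1-\epsilon\}$, which makes the dependence on~$p$ completely explicit and could be useful elsewhere.
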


\begin{proof}
Let~$p_{0}$ and~$p_{1}$ be two
arbitrary probability distributions on the standard basis of~$\cH_A$.
For~$\lambda\in[0,1]$, let~$q=\lambda p_{0}+(1-\lambda)p_{1}$.
We show that 
\begin{align*}
\upbeta^{\epsilon}(\rho_{AB}(q) \,\|\, \rho_{A}(q)\otimes\sigma)
    \quad\leq\quad \lambda \; \upbeta^{\epsilon}(\rho_{AB}(p_{0})
        \,\|\, \rho_{A}(p_{0}) \otimes\sigma) + (1-\lambda) \;
    \upbeta^{\epsilon}(\rho_{AB}(p_{1}) \,\|\,
        \rho_{A}(p_{1})\otimes\sigma) \enspace,
\end{align*}
which proves the claim.

Let $\Phi:\sL(\cH_{A})\rightarrow\sL(\complex^{2}\otimes\cH_{A})$
be the quantum channel with Kraus operators $A_{a,x}=\sqrt{\alpha_{x}^{a}}\ket{x}\otimes\density{a}$
for all $a$ and $x\in\{0,1\}$, where $\alpha_{0}^{a}\coloneqq\lambda\frac{p_{0}(a)}{q(a)}$
and $\alpha_{1}^{a}=(1-\lambda)\frac{p_{1}(a)}{q(a)}$. Then we have 
\[
\rho_{X\! AB}(q) \quad = \quad (\Phi\otimes\id_{B})(\rho_{AB}(q))
    \quad = \quad \lambda \, \density{0} \tensor \rho_{AB}(p_0)
        + (1 - \lambda) \, \density{1} \tensor \rho_{AB}(p_1) \enspace.
\]
Since~$\rho_{X\! AB}(q)$ is an extension of~$\rho_{AB}(q)$, 
using Proposition~\ref{thm-DPI-beta} twice, we get
\begin{align}
\label{eqn-b2}
\upbeta^{\epsilon}(\rho_{AB}(q) \,\|\, \rho_{A}(q)\otimes\sigma)
    \quad=\quad \upbeta^{\epsilon}(\rho_{X\! AB}(q) \,\|\, \rho_{X\! A}(q)
        \otimes \sigma) \enspace.
\end{align}

For each $x\in\{0,1\}$, let $Q_{x}$ be the measurement operator 
that achieves~$\upbeta^{\epsilon}(\rho_{AB}(p_{x})
\,\|\, \rho_{A}(p_{x})\otimes\sigma)$. 
Consider the measurement operator $Q \coloneqq 
\sum_{x \in \set{0,1}} \density{x} \otimes Q_{x}$. This satisfies
\begin{align*}
\innerproduct{Q}{\rho_{X\! AB}(q)} 
    \quad = & \quad \lambda \, \innerproduct{Q_0}{\rho_{AB}(p_{0}) }
        + (1-\lambda) \, \innerproduct{Q_1}{\rho_{AB}(p_{1})} 
    \quad \geq \quad 1-\epsilon \enspace,
\end{align*}
by definition of~$Q_{0}, Q_{1}$. By Eq.~(\ref{eqn-b2}) and the
definition of~$\upbeta^\epsilon$, we get
\begin{align*}
\upbeta^{\epsilon}(\rho_{AB}(q) \,\|\, \rho_{A}(q)\otimes\sigma)
    \quad = & \quad \upbeta^{\epsilon}(\rho_{X\! AB}(q) \,\|\,
        \rho_{X\! A}(q)\otimes\sigma) \\
    \quad \leq & \quad \innerproduct{Q}{\rho_{X\! A}(q) \tensor \sigma}
        \\
    = & \quad \lambda \, \innerproduct{Q_0}{\rho_{A}(p_{0}) \tensor
        \sigma} + (1-\lambda) \, \innerproduct{Q_1}{\rho_{A}(p_{1})
        \tensor \sigma} \\
    = & \quad \lambda \, \upbeta^{\epsilon}(\rho_{AB}(p_{0}) \,\|\,
        \rho_{A}(p_{0})\otimes\sigma) + (1-\lambda) \,
        \upbeta^{\epsilon}(\rho_{AB}(p_{1}) \,\|\,
        \rho_{A}(p_{1})\otimes\sigma) \enspace,
\end{align*}
as we set out to prove.
\end{proof}

\begin{proposition}[Proposition~\ref{thm-concavity of Be in sigma}]
For any fixed quantum state~$\rho\in\sD(\cH)$, the function
$\upbeta^{\epsilon}(\rho\|\sigma)$ is a concave function with 
respect to $\sigma$.
\end{proposition}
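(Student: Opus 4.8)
The plan is to recognize $\upbeta^{\epsilon}(\rho\|\sigma)$ as a pointwise infimum of functions that are \emph{linear} in~$\sigma$, and then invoke the elementary fact that an infimum of concave functions is concave.

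First I would isolate the feasible set in the defining optimization. Writing
\[
\mathcal{F} \quad \coloneqq \quad \{Q \in \sL(\cH) : 0 \le Q \le \id, \; \langle Q,\rho\rangle \ge 1-\epsilon\}\enspace,
\]
the key observation is that $\mathcal{F}$ depends only on the fixed state~$\rho$ and on~$\epsilon$; the second argument~$\sigma$ does not enter any of the constraints. Hence, by the definition in Eq.~\eqref{eq:def-betaEpsilon},
\[
\upbeta^{\epsilon}(\rho\|\sigma) \quad = \quad \inf_{Q \in \mathcal{F}} \; \langle Q,\sigma\rangle\enspace,
\]
where the index set~$\mathcal{F}$ is the \emph{same} for every~$\sigma$.

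Next I would note that for each fixed~$Q \in \mathcal{F}$ the map $\sigma \mapsto \langle Q,\sigma\rangle$ is linear, and in particular concave. The function $\upbeta^{\epsilon}(\rho\|\cdot)$ is therefore a pointwise infimum over~$Q \in \mathcal{F}$ of a family of concave functions of~$\sigma$. To finish, I would verify the standard fact that such an infimum is concave: given quantum states~$\sigma_{0},\sigma_{1} \in \sD(\cH)$ and~$t \in [0,1]$, set~$\sigma \coloneqq t\sigma_{0} + (1-t)\sigma_{1}$. For every~$Q \in \mathcal{F}$, linearity gives
\[
\langle Q,\sigma\rangle \quad = \quad t \, \langle Q,\sigma_{0}\rangle + (1-t)\,\langle Q,\sigma_{1}\rangle \quad \ge \quad t \, \upbeta^{\epsilon}(\rho\|\sigma_{0}) + (1-t)\,\upbeta^{\epsilon}(\rho\|\sigma_{1})\enspace,
\]
and taking the infimum over~$Q \in \mathcal{F}$ on the left-hand side preserves this lower bound, yielding $\upbeta^{\epsilon}(\rho\|\sigma) \ge t\,\upbeta^{\epsilon}(\rho\|\sigma_{0}) + (1-t)\,\upbeta^{\epsilon}(\rho\|\sigma_{1})$, which is exactly concavity.

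There is essentially no obstacle here: the entire content is that the optimization domain is independent of~$\sigma$ while the objective is affine in~$\sigma$. The only point requiring a word of care is that~$\sigma_{0},\sigma_{1}$ and their convex combination are genuine density operators, so that all three quantities are defined over the intended domain~$\sD(\cH)$; since~$\sD(\cH)$ is convex, $\sigma$ lies in it as well, and the argument goes through verbatim.
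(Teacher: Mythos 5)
Your proof is correct and follows essentially the same route as the paper: both arguments rest on the observation that the feasible set of measurement operators depends only on~$\rho$ and~$\epsilon$, while the objective~$\langle Q,\sigma\rangle$ is linear in~$\sigma$, so any~$Q$ feasible for the convex combination simultaneously witnesses upper bounds via~$\upbeta^{\epsilon}(\rho\|\sigma_0)$ and~$\upbeta^{\epsilon}(\rho\|\sigma_1)$. The only cosmetic difference is that the paper works with the optimal~$Q$ attaining the infimum, whereas you phrase it as a pointwise infimum of concave functions, which sidesteps even the (true, but unneeded) fact that the infimum is achieved.
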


\begin{proof}
For any choice of~$\sigma_0,\sigma_1 \in \sD(\cH)$
and~$\lambda\in[0,1]$, let~$Q$ be the measurement operator that 
achieves hypothesis testing error
$\upbeta^\epsilon(\rho \,\|\, \lambda\sigma_0+(1-\lambda)\sigma_1)$. Then
\begin{align*}
\upbeta^\epsilon(\rho \,\|\, \lambda\sigma_0+(1-\lambda)\sigma_1)\quad =
&\quad \langle Q, ~ \lambda\sigma_0+(1-\lambda)\sigma_1 \rangle\\
= &\quad \lambda \, \langle Q,\sigma_0 \rangle + (1-\lambda)\,  \langle
Q,\sigma_1 \rangle\\
\geq &\quad \lambda \, \upbeta^\epsilon(\rho\|\sigma_0)+(1-\lambda) \,
\upbeta^\epsilon(\rho\|\sigma_1)\enspace,
\end{align*}
since~$\langle Q,\rho \rangle \geq 1-\epsilon$.
\end{proof}

\section{Preparing states from an infinite set}
\label{sec-RSP-infinite-set}

In this section, we discuss remote state preparation of states
drawn from an infinite set. This scenario has been studied 
by Lo~\cite{lo_classical-communication_2000} and in later works on the
topic.

In remote state preparation, Alice's input is supposed to provide a complete 
description of
the state to be prepared at Bob's end. In any physically realistic model
of computation, the description necessarily has finite bit-length
(see, e.g., Ref.~\cite{Aaronson05}). For instance, if a~$d$-dimensional 
quantum state is described by specifying~$\Theta(d^2)$ complex entries in the
corresponding~$d \times d$ matrix, the complex numbers would have to be
specified with finite precision. This implies that the input set~$S$
(following the notation in Section~\ref{part:Approximate-remote-state})
is necessarily countable.  This point has not been addressed in previous
works.

To meaningfully consider the preparation states drawn from an
uncountable set, we may instead consider \emph{approximations\/} 
drawn from a suitable countable set. For example, instead of the
set~$\sD(\cH)$ of all quantum states over a~$d$-dimensional space~$\cH$,
we may instead study the countably dense set of states whose
matrix representations only have complex entries with rational real and
imaginary parts. Such states have unique finite-length representations.
(Similar approximation is also implicit in the case of RSP of a finite
set of states, 
when the corresponding matrices involve irrational numbers.)

Another approach, perhaps only of theoretical interest, would be to
allow the local operations in an LOCC protocol to be defined on a 
suitable generalization of the Real RAM model due to Blum, Shub, and 
Smale~\cite{BSS89}. We do not attempt to define such a model of
computation here. For our purposes, it would suffice to assume a model which enables the implementation of quantum operations such as
unitary operations controlled by the registers holding real numbers in
finite time.

We assume that we take one of the abovementioned approaches in the
analysis in this section. The underlying idea, that of approximating
states from an infinite set with those from a \emph{net\/}, 
probably applies in other reasonable approaches as well. 

As before, we restrict ourselves to states over a finite dimensional 
Hilbert space~$\cH$.

\begin{definition}
Let~$\nu \in (0,1]$ and~$D \subseteq \sD(\cH)$ be any set of quantum
states. A~\emph{$\nu$-net\/}~$N$ in~$D$ is a subset of~$D$ such that for
any state~$\rho \in D$, there is a state~$\sigma \in N$ such
that~$\rP(\rho,\sigma) < \nu$.
\end{definition}

We argue that every subset of finite-dimensional states admits a finite
net.

\begin{proposition}
For every~$\nu \in (0,1]$, and every set~$D \subseteq \sD(\cH)$ of
quantum states, there is a finite~$\nu$-net in~$D$.
\end{proposition}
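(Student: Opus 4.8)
The plan is to deduce the existence of a finite net from the compactness of~$\sD(\cH)$ in the purified-distance metric, together with the elementary fact that total boundedness passes to subsets. First I would record that~$\rP$ is a genuine metric on~$\sD(\cH)$ (in particular it satisfies the triangle inequality, as already assumed in Section~\ref{sec-preliminaries}), and that the topology it induces coincides with the usual one on~$\sD(\cH)$ regarded as a closed, bounded---hence compact---subset of the finite-dimensional real vector space of Hermitian operators on~$\cH$. This is because the fidelity~$\rF$ is continuous in the trace-norm topology and~$\rP(\rho,\sigma)=\sqrt{1-\rF(\rho,\sigma)^{2}}$, so that convergence in purified distance is equivalent to convergence in trace norm. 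Consequently~$\sD(\cH)$ is compact, and therefore totally bounded, in the metric~$\rP$ as well.

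The only genuine subtlety is that the definition of a~$\nu$-net requires the net to be a subset of~$D$ itself, not merely of the ambient set~$\sD(\cH)$. I would handle this with a halving argument. Since~$\sD(\cH)$ is totally bounded, for the value~$\nu/2$ there are finitely many states~$\tau_{1},\dots,\tau_{k}\in\sD(\cH)$ whose~$(\nu/2)$-balls~$\set{\xi:\rP(\xi,\tau_{i})<\nu/2}$ cover~$\sD(\cH)$, and in particular cover~$D$. For each index~$i$ whose ball meets~$D$, I fix one representative~$\sigma_{i}\in D$ with~$\rP(\sigma_{i},\tau_{i})<\nu/2$, and let~$N$ be the (finite) collection of these~$\sigma_{i}$. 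Given any~$\rho\in D$, it lies in some ball centred at~$\tau_{i}$, which therefore meets~$D$, so~$\sigma_{i}$ is defined; the triangle inequality then yields~$\rP(\rho,\sigma_{i})\leq\rP(\rho,\tau_{i})+\rP(\tau_{i},\sigma_{i})<\nu/2+\nu/2=\nu$. Hence~$N\subseteq D$ is a finite~$\nu$-net in~$D$, as required.

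No step here involves a hard computation. The only point requiring care, and thus the main obstacle, is the first one: justifying that purified distance induces the same compact topology as the ambient Euclidean/trace-norm structure, so that compactness and total boundedness may legitimately be invoked for the metric~$\rP$ rather than for trace distance. Once that topological equivalence is in hand, the rest is the routine halving argument, whose sole purpose is to guarantee that the net can be chosen from within~$D$ instead of from the surrounding set~$\sD(\cH)$.
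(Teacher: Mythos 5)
Your proposal is correct and follows essentially the same route as the paper: cover the compact set~$\sD(\cH)$ by finitely many balls of radius~$\nu/2$, pick one representative of~$D$ from each ball that meets~$D$, and conclude by the triangle inequality. The extra care you take in justifying that the purified-distance topology agrees with the trace-norm topology is a reasonable addition, but it does not change the argument, which the paper carries out in the same way.
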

\begin{proof}
Since~$\sD(\cH)$ is compact, it has a finite cover~$(B_i)$ consisting of
open balls of radius~$\nu/2$. This is also 
a cover for any subset~$D$ of quantum states. Let~$N$ be a subset of~$D$
constructed by taking one point from~$B_i \intersect D$, whenever this
intersection is non-empty. We claim that this is a finite~$\nu$-net
in~$D$. 

Consider a state~$\rho \in D$. Since~$(B_i)$ is a cover for the set of
all quantum states, $\rho \in B_j$ for some~$j$. By construction, there
is a state~$\sigma \in N$ from~$B_j \intersect D$. Since~$\rho,\sigma$
both belong to the same ball~$B_j$ of radius~$\nu/2$, we 
have~$\rP(\rho,\sigma) < \nu$.  So~$N$ is a~$\nu$-net in~$D$.
\end{proof}

Suppose~$S$ is an infinite set, and~$Q:S\rightarrow \sD(\cH)$ is a 
one-to-one function mapping each element of~$S$ to a quantum state.
(We view an element~$x \in  S$ as a description, i.e., unique encoding,
of the quantum state~$Q(x)$.) Define~$R \coloneqq Q(S)$ as
the image of~$S$ under~$Q$; this is the set of quantum states under
consideration. We fix an approximation parameter~$\nu > 0$ of
our choice, and a finite~$\nu$-net~$N$ in~$R$, and let~$T
\coloneqq Q^{-1}(N)$ be the set of inputs corresponding to~$N$.
We bound the communication
required for remote state preparation of states from~$R$ with that for states
from~$N$. We may then appeal to Theorem~\ref{thm-main theorem} to infer
bounds on $\RSP(S,Q)$.

\paragraph{Worst-case error.}
We first consider the simpler case, that of worst-case error~$\epsilon > 0$.
Any protocol for $\RSP(S,Q)$ with worst-case error~$\epsilon$ is also a 
protocol for $\RSP(T,Q)$ as~$T$ is a subset of~$S$. So we have
\[
\sQ^*(\RSP(T,Q),\epsilon) \quad \leq \quad \sQ^*(\RSP(S,Q),\epsilon)
\enspace.
\]

Now suppose~$\Pi$ is a protocol for~$\RSP(T,Q)$ with communication 
cost~$c$ and worst case error~$\epsilon$. We design a protocol~$\Pi'$
for~$\RSP(S,Q)$ as follows. Given an~$x\in S$, Alice chooses~$y\in T$ such 
that~$\rP(Q(x),Q(y))\leq \nu$, and prepares an approximation of~$Q(y)$
on Bob's side using protocol~$\Pi$. Suppose Bob's output is~$\sigma_y$.
Then
\[
\rP(Q(x),\sigma_y) \quad \leq \quad \rP(Q(x),Q(y))+\rP(Q(y),\sigma_y)
\quad \leq \quad \nu + \epsilon \enspace.
\]
So~$\Pi'$ is a protocol for $\RSP(S,Q)$ with communication cost~$c$, 
and worst case error~$\epsilon+\nu$.
Therefore, 
\[
\sQ^*(\RSP(S,Q), \epsilon + \nu) \quad \leq \quad \sQ^*(\RSP(T,Q),\epsilon)
\enspace.
\]
Putting the two together, for~$\nu, \epsilon$ such
that~$0 < \nu < \epsilon$, we get
\[
\sQ^*(\RSP(T,Q),\epsilon) 
    \quad \leq \quad \sQ^*(\RSP(S,Q), \epsilon) 
    \quad \leq \quad \sQ^*(\RSP(T,Q),\epsilon - \nu) \enspace.
\]

\paragraph{Average-case error.}
Next we consider approximate RSP with average error at most~$\epsilon
\in (0,1]$ with respect to a probability measure~$\mu$ on the set of
states~$R$. For simplicity, we only consider the case when
the open sets in~$R$ generated by the metric~$\rP$ are measurable.
Since~$Q$ is injective, we may equivalently consider~$\mu$ as a 
probability measure on~$S$.

Let~$(\rho_i)$ be an enumeration of the states in~$N$,
and~$(B_i)$ be open balls of radius~$\nu$ centred at~$\rho_i$ with
respect to the metric~$\rP$. Since~$N$ is a~$\nu$-net in~$R$, we have~$R 
\subseteq \union_i B_i$.
Define the function~$f : R \rightarrow N$ as~$f(\sigma) 
\coloneqq \rho_i$ for all states~$\sigma \in (B_i
\intersect R) \setminus (\union_{j < i} B_j)$. The function~$f$
maps each quantum state~$\rho \in R$ to a quantum state in~$N$ such
that~$\rP(\rho,f(\rho)) < \nu$. Moreover, it is measurable.

The function~$f$ induces a probability distribution~$p$ on~$N$
in the natural way:
\[
p_{\rho_i} \quad \coloneqq \quad \mu(f^{-1}(\rho_i))
\]
for~$\rho_i \in N$. We may view the distribution~$p$ as being over the
corresponding set~$T$ of inputs: for~$y \in T$ such that~$Q(y) = \rho_i$, we
define~$p_y \coloneqq p_{\rho_i}$. 

We relate protocols for $\RSP(S,Q)$ with average error~$\epsilon$ 
with respect to~$\mu$ to protocols for~$\RSP(T,Q)$ with average error
``close'' to~$\epsilon$ with respect to~$p$.

\begin{lemma}
	Suppose~$\Pi$ is a protocol for~$\RSP(S,Q)$ with communication 
cost~$c$ and average error~$\epsilon$ with respect to~$\mu$. Then there is
a protocol~$\Pi'$ for~$\RSP(T,Q)$ with communication cost~$c$ and average 
error at most~$\nu+\epsilon$ with respect to~$p$.
\end{lemma}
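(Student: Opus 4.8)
The plan is to turn the given protocol~$\Pi$ for~$\RSP(S,Q)$ into a protocol~$\Pi'$ for~$\RSP(T,Q)$ by \emph{local preprocessing on Alice's side}, so that no extra communication is incurred, and then to control the resulting average error by a coarse-graining argument together with the net property. Concretely, let~$g \coloneqq Q^{-1} \compose f \compose Q : S \to T$ be the map sending each input~$x$ to the input encoding the net point~$f(Q(x))$. Since~$f$ is measurable and~$N$ is finite, the fibres~$\{g^{-1}(y) : y \in T\}$ form a finite measurable partition of~$S$, and by construction~$\mu(g^{-1}(y)) = p_y$. Define~$\Pi'$ as follows: on input~$y \in T$ with~$p_y > 0$, Alice privately samples~$x$ from~$\mu$ conditioned on~$g^{-1}(y)$ and then runs~$\Pi$ on~$x$; for~$y$ with~$p_y = 0$ she outputs an arbitrary fixed state. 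As this only adds local randomness and a local relabelling, the communication cost of~$\Pi'$ equals that of~$\Pi$, namely~$c$.

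Next I would pass to the joint input-output states. Writing~$\sigma_x$ for the output density operator of~$\Pi$ on input~$x$, let~$\omega_\Pi \coloneqq \int_S \density{x}_A \tensor \sigma_x \, d\mu(x)$ and~$\zeta_\Pi \coloneqq \int_S \density{x}_A \tensor Q(x)\, d\mu(x)$ be the actual and ideal states of~$\Pi$; the hypothesis that~$\Pi$ has average error~$\epsilon$ reads~$\rP(\zeta_\Pi,\omega_\Pi)\le\epsilon$. The output of~$\Pi'$ on input~$y$ is the mixture~$\tau_y \coloneqq \tfrac{1}{p_y}\int_{g^{-1}(y)} \sigma_x\, d\mu(x)$, so the actual state of~$\Pi'$ is~$\omega' = \sum_{y\in T} p_y\, \density{y}_A \tensor \tau_y$. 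The key observation is that~$\omega'$ is obtained from~$\omega_\Pi$ by the classical coarsening channel~$\mathcal{C}$ that relabels~$x \mapsto g(x)$ on register~$A$; applying the same channel to~$\zeta_\Pi$ gives~$\bar\zeta \coloneqq \sum_{y} p_y\, \density{y}_A \tensor \bar Q_y$ with~$\bar Q_y \coloneqq \tfrac{1}{p_y}\int_{g^{-1}(y)} Q(x)\, d\mu(x)$. By monotonicity of fidelity under quantum channels, $\rF(\bar\zeta,\omega') \ge \rF(\zeta_\Pi,\omega_\Pi) \ge \sqrt{1-\epsilon^2}$, i.e.\ $\rP(\bar\zeta,\omega')\le\epsilon$.

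It then remains to compare~$\bar\zeta$ with the true ideal state~$\zeta \coloneqq \sum_{y\in T} p_y\, \density{y}_A \tensor Q(y)$ of~$\RSP(T,Q)$. Since both are classical on~$A$ with the same distribution~$p$, their fidelity factorises as~$\rF(\zeta,\bar\zeta) = \sum_y p_y\, \rF(Q(y),\bar Q_y)$. For each~$y$, joint concavity of fidelity gives~$\rF(Q(y),\bar Q_y) \ge \tfrac{1}{p_y}\int_{g^{-1}(y)} \rF(Q(y),Q(x))\, d\mu(x)$, and every~$x\in g^{-1}(y)$ satisfies~$\rP(Q(x),Q(y)) < \nu$, hence~$\rF(Q(x),Q(y)) > \sqrt{1-\nu^2}$; averaging yields~$\rF(\zeta,\bar\zeta) \ge \sqrt{1-\nu^2}$, that is,~$\rP(\zeta,\bar\zeta)\le\nu$. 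Finally, the triangle inequality for the purified distance gives~$\rP(\zeta,\omega') \le \rP(\zeta,\bar\zeta) + \rP(\bar\zeta,\omega') \le \nu + \epsilon$, so~$\Pi'$ has average error at most~$\nu+\epsilon$ with respect to~$p$, as claimed.

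I expect the main difficulty to be bookkeeping and measure-theoretic rather than conceptual: one must check that the conditional measures~$\mu(\,\cdot \mid g^{-1}(y))$ are well defined and that~$\omega_\Pi$,~$\zeta_\Pi$, whose~$A$-register is indexed by the infinite set~$S$, are legitimate states to which monotonicity of fidelity applies. The finiteness of the net~$N$, and hence of~$T$ and of the partition~$\{g^{-1}(y)\}$, is exactly what keeps all of this routine; the only genuinely quantitative inputs are monotonicity and joint concavity of fidelity and the triangle inequality for~$\rP$.
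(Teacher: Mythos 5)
Your proof is correct, and your construction of~$\Pi'$ is exactly the paper's: Alice, on input~$y$, samples~$x$ from~$\mu$ conditioned on the fibre~$g^{-1}(y)=S_y$ and runs~$\Pi$, so the communication cost is unchanged. Where you diverge is in the error analysis. The paper applies the triangle inequality for the purified distance \emph{pointwise}, $\rP(Q(y),\sigma_x)\le \rP(Q(y),Q(x))+\rP(Q(x),\sigma_x)$, and then integrates, bounding the \emph{expected} purified distance~$\sum_y p_y \int_{S_y}\rP(Q(y),\sigma_x)\,\rd\mu_y(x)$ by~$\nu+\epsilon$. You instead work with the joint input--output states: monotonicity of fidelity under the coarse-graining channel on register~$A$ gives~$\rP(\bar\zeta,\omega')\le\epsilon$, concavity of fidelity plus the net property gives~$\rP(\zeta,\bar\zeta)\le\nu$, and the triangle inequality is applied once at the level of joint states. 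Your route buys something real: since~$t\mapsto\sqrt{1-t^2}$ is concave, the expected per-input purified distance is \emph{dominated by} the purified distance of the joint states, so the paper's bound on the former does not by itself yield the bound on the latter --- and it is the latter (equivalently~$\sum_y p_y\,\rF(Q(y),\tau_y)\ge\sqrt{1-(\nu+\epsilon)^2}$) that matches the paper's official definition of average-case error from Section~2. Your version proves the stronger, definitionally correct statement. The one technical point you correctly flag --- that~$\zeta_\Pi,\omega_\Pi$ live on a register indexed by the infinite set~$S$ --- is easily sidestepped: your channel-monotonicity step is equivalent to the scalar inequality~$\sum_y p_y\,\rF(\bar Q_y,\tau_y)\ge\int_S \rF(Q(x),\sigma_x)\,\rd\mu(x)$, which follows from joint concavity of fidelity applied blockwise on each fibre, so no infinite-dimensional classical register is actually needed.
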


\begin{proof}
For~$y \in T$,  define~$R_y \coloneqq f^{-1}(Q(y))$, the set of 
states in~$R$ that are mapped to~$Q(y) \in N$. Define~$S_y \coloneqq
Q^{-1}(R_y)$, the set of inputs corresponding to the states in~$R_y$.
Note that~$(R_y)$ is a partition of~$R$ and~$(S_y)$ of~$S$. Since~$f$ is
measurable, $R_y$ is a measurable set. When~$R_y$ has non-zero measure, 
we define a probability measure~$\mu_y$ on~$R_y$ as~$\mu_y(W) \coloneqq
\mu(W)/ \mu(R_y)$ for all measurable sets~$W \subseteq R_y$. We also
view~$\mu_y$ as a probability measure on~$S_y$.

We now construct the protocol~$\Pi'$ for~$\RSP(T,Q)$ as follows.
Given~$y \in T$, Alice selects an input~$x \in S_y$ randomly with 
respect to the probability measure~$\mu_y$ and runs the protocol~$\Pi$
on this input.

The communication in~$\Pi'$ is also~$c$. Suppose~$\sigma_x$ is the output 
of the protocol~$\Pi$ when the input is~$x$. Then the average error of 
the protocol~$\Pi'$ is
\begin{align*}
\sum_{y \in T} p_y \int_{x \in S_y} \rP(Q(y),\sigma_{x})\ \rd\mu_y(x) \quad 
    & = \quad \sum_{y \in T} ~ \int_{x \in S_y} 
        \rP(Q(y),\sigma_{x})\ \rd\mu(x) \\
    & \leq \quad \sum_{y \in T} ~  \int_{x \in S_y} 
        \rP(Q(y),Q(x)) \ \rd\mu(x) \\
    & \qquad \mbox{} + \sum_{y \in T} ~ \int_{x \in S_y} 
        \rP(Q(x),\sigma_{x})\ \rd\mu(x) \\
    & \leq \quad \nu+\epsilon \enspace,
\end{align*}
as claimed.
\end{proof}

Conversely, we can also derive a protocol for $\RSP(S,Q)$ from one
for~$\RSP(T,Q)$.

\begin{lemma}
	Suppose~$\Pi$ is a protocol for~$\RSP(T,Q)$ with 
communication cost~$c$ and average error~$\epsilon$ with respect to
the distribution~$p$. There exists a protocol~$\Pi'$ for~$\RSP(S,Q)$ with 
communication cost~$c$ and average error at most~$\epsilon
+ \nu$ with respect to~$\mu$.
\end{lemma}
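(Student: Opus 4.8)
The plan is to mirror the preceding lemma, but in the reverse direction: a protocol for the net index set~$T$ can be promoted to one for~$S$ simply by having Alice \emph{round} her input to the net before running~$\Pi$. Concretely, given~$\Pi$ for~$\RSP(T,Q)$, I would define~$\Pi'$ for~$\RSP(S,Q)$ as follows. On input~$x\in S$, Alice locally computes the unique~$y\in T$ with~$x\in S_y$ (equivalently, the~$y$ with~$Q(y)=f(Q(x))$), and then the two parties run~$\Pi$ on the input~$y$. Since the rounding map~$x\mapsto y$ is computed by Alice alone and needs no communication, the communication cost of~$\Pi'$ equals that of~$\Pi$, namely~$c$. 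Note that, in contrast to the forward direction, no randomized selection is required here, because~$f$ already assigns a single net point to each state.

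To bound the average error, let~$\sigma_y$ denote the output state that~$\Pi$ produces on input~$y\in T$ (its effective density operator, averaging over any internal randomness); then~$\Pi'$ outputs~$\sigma_{y(x)}$ on input~$x$, where~$y(x)$ is the rounded input. For each~$x\in S$, the triangle inequality for the purified distance yields
\[
\rP(Q(x),\sigma_{y(x)})\quad\leq\quad \rP(Q(x),Q(y(x)))+\rP(Q(y(x)),\sigma_{y(x)})\quad<\quad \nu+\rP(Q(y(x)),\sigma_{y(x)})\enspace,
\]
where the final step uses~$\rP(Q(x),Q(y(x)))=\rP(Q(x),f(Q(x)))<\nu$, which holds since~$N$ is a~$\nu$-net and~$f$ maps each state within purified distance~$\nu$. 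Integrating against~$\mu$ and regrouping over the partition~$(S_y)_{y\in T}$ of~$S$, the second term collapses: on each block~$S_y$ the output~$\sigma_{y(x)}=\sigma_y$ is constant and~$\mu(S_y)=p_y$ by the definition of the induced distribution~$p$, so
\[
\int_{S}\rP(Q(y(x)),\sigma_{y(x)})\,\rd\mu(x)\quad=\quad \sum_{y\in T}p_y\,\rP(Q(y),\sigma_y)\quad\leq\quad\epsilon\enspace,
\]
the average error of~$\Pi$ with respect to~$p$. This gives average error at most~$\nu+\epsilon$ for~$\Pi'$, as claimed.

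The construction is deliberately simpler than the forward direction, so there is no real obstacle; the only point requiring care is measurability. I would invoke the measurability of~$f$ established earlier, together with the injectivity of~$Q$, to conclude that~$x\mapsto y(x)$ is measurable, which legitimizes both the integral bound and the interchange of integration and summation in the regrouping step above. I would also note explicitly that treating the protocol's output as its effective density operator makes~$\sigma_y$ well defined even when~$\Pi$ is internally randomized.
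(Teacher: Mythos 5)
Your construction and analysis coincide with the paper's proof: Alice deterministically rounds~$x$ to~$y = Q^{-1}(f(Q(x)))$, runs~$\Pi$, and the error is bounded by the same triangle inequality for~$\rP$ followed by regrouping the integral over the partition~$(S_y)$ using~$\mu(S_y) = p_y$ and the constancy of the output on each block. The proposal is correct and takes essentially the same route, with your remarks on measurability and on treating the output as an effective density operator being harmless elaborations.
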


\begin{proof}
In the protocol~$\Pi'$, given input~$x\in S$, Alice runs the 
protocol~$\Pi$ on input~$y$ defined as~$y \coloneqq Q^{-1}(f(Q(x)))$.
This is the input corresponding to the state in the~$\nu$-net to
which~$f$ maps~$Q(x)$. The communication cost of~$\Pi'$ is also~$c$.

Suppose the output of~$\Pi'$ on input~$x$ is~$\sigma_x$.
Note that~$f$ maps all states~$Q(x)$ for~$x \in S_y$ to the same 
value~$Q(y)$, and therefore the outputs~$\sigma_x$ for all 
inputs~$x \in S_y$ are equal to~$\sigma_y$.

The average error of the protocol with respect to~$\mu$ is
\begin{align*}
\int_{x\in S} \rP(Q(x),\sigma_x )\ \rd\mu(x) \quad
    & \leq \quad \int_{x\in S} \rP(Q(x), f(Q(x)))\ \rd\mu(x)
        + \int_{x\in S} \rP(f(Q(x)),\sigma_{x})\ \rd\mu(x) \\
    & \leq \quad \nu + \sum_{y \in T} ~ \int_{x \in S_y} 
        \rP(f(Q(x)),\sigma_{x}) \ \rd\mu(x) \\
    & = \quad \nu + \sum_{y \in T} ~ \int_{x \in S_y} \rP(Q(y),\sigma_{y})
        \ \rd\mu(x) \\
    & = \quad \nu + \sum_{y \in T} p_y \, \rP(Q(y),\sigma_{y}) \\
    & \leq \quad \nu+\epsilon \enspace,
\end{align*}
where in the third step we have used the abovementioned
property that~$f$ is constant on~$S_y$.
\end{proof}

For~$\nu,\epsilon$ such that~$\nu<\epsilon$, the above two lemmata imply that
\[
\sQ^*_{p}(\RSP(T,Q),\epsilon+\nu) \quad \leq \quad \sQ^*_{\mu}(\RSP(S,Q),\epsilon)\quad \leq \quad \sQ^*_{p}(\RSP(T,Q),\epsilon-\nu) \enspace.
\]

\end{document}